\documentclass[english,american,final,table]{IEEEtran}
\usepackage[T1]{fontenc}
\usepackage[latin9]{inputenc}
\usepackage{xcolor}
\usepackage{pdfcolmk}
\usepackage{array}
\usepackage{float}
\usepackage{mathrsfs}
\usepackage{bm}
\usepackage{amsmath}
\usepackage{amsthm}
\usepackage{amssymb}
\usepackage{graphicx}
\PassOptionsToPackage{normalem}{ulem}
\usepackage{ulem}

\makeatletter

\providecommand{\tabularnewline}{\\}
\floatstyle{ruled}
\newfloat{algorithm}{tbp}{loa}
\providecommand{\algorithmname}{Algorithm}
\floatname{algorithm}{\protect\algorithmname}
\providecolor{lyxadded}{rgb}{0,0,1}
\providecolor{lyxdeleted}{rgb}{1,0,0}

\DeclareRobustCommand{\lyxdeleted}[3]{{\color{lyxdeleted}\lyxsout{#3}}}
\DeclareRobustCommand{\lyxsout}[1]{\ifx\\#1\else\sout{#1}\fi}

\theoremstyle{plain}
\newtheorem{prop}{\protect\propositionname}
\theoremstyle{plain}
\newtheorem{thm}{\protect\theoremname}
\theoremstyle{plain}
\newtheorem{lem}{\protect\lemmaname}

\PassOptionsToPackage{dvipsnames,table}{xcolor}

\usepackage{amsmath,amssymb}

\usepackage{algorithm}
\usepackage{algorithmic}

\usepackage[table]{xcolor}

\usepackage{graphicx,psfrag,cite,subfigure}
\usepackage{epstopdf}


\interdisplaylinepenalty=2500




\author{
Yuanshuai~Zheng and Junting~Chen

\IEEEauthorblockA{School of Science and Engineering,
Shenzhen Future Network of Intelligence Institute (FNii-Shenzhen), and\\
Guangdong Provincial Key Laboratory of Future Networks of Intelligence\\
The Chinese University of Hong Kong, Shenzhen, Guangdong 518172, P.R. China}

\thanks{The work was supported in part by the National Science Foundation of China 
under Grant No. 62171398, by the Basic Research Project No. HZQB-KCZYZ-2021067 of Hetao Shenzhen-HK S\&T Cooperation Zone, 
by the Shenzhen Science and Technology Program under Grant No. JCYJ20210324134612033 and No. KQTD20200909114730003, 
by Guangdong Research Projects No. 2019QN01X895, No. 2017ZT07X152, and No. 2019CX01X104, 
by the Shenzhen Outstanding Talents Training Fund 202002, 
by the Guangdong Provincial Key Laboratory of Future Networks of Intelligence (Grant No. 2022B1212010001), 
by the National Key R\&D Program of China with grant No. 2018YFB1800800, and 
by the Key Area R\&D Program of Guangdong Province with grant No. 2018B030338001.
}

}



\makeatletter
\makeatother

\usepackage[acronym]{glossaries}
\newcommand{\newac}{\newacronym}
\newcommand{\ac}{\gls}

\makeglossaries


\usepackage{enumitem}
\usepackage{cases}

\renewcommand{\lyxdeleted}[3]{{\color{lyxdeleted}{}}}

\newac{speb}{SPEB}{square position error bound}
\newac[plural=EFIMs,firstplural=Fisher information matrices (EFIMs)]{efim}{EFIM}{Fisher information matrix}
\newac{ne}{NE}{Nash equilibrium}
\newac{mse}{MSE}{mean squared error}
\newac{toa}{TOA}{time-of-arrival}
\newac{snr}{SNR}{signal-to-noise ratio}
\newac{lan}{LAN}{local area network}
\newac{psd}{PSD}{positive semidefinite}
\newac{pd}{PD}{positive definite}
\newac{wrt}{w.r.t.}{with respect to}
\newac{lhs}{L.H.S.}{left hand side}
\newac{wp1}{w.p.1}{with probability 1}
\newac{kkt}{KKT}{Karush-Kuhn-Tucker}
\newac{wlog}{w.l.o.g.}{without loss of generality}
\newac{mle}{MLE}{maximum likelihood estimation}
\newac{gps}{GPS}{global positioning system}
\newac{rssi}{RSSI}{received signal strength indicator}
\newac{mimo}{MIMO}{multiple-input multiple-output}
\newac{csi}{CSI}{channel state information}
\newac{fdd}{FDD}{frequency division duplexing}
\newac{ms}{MS}{mobile station}
\newac{bs}{BS}{base station}
\newac{d2d}{D2D}{device-to-device}
\newac{slnr}{SLNR}{signal-to-interference-leakage-and-noise-ratio}
\newac{ula}{ULA}{uniform linear antenna array}
\newac{pas}{PAS}{power angular spectrum}
\newac{mmse}{MMSE}{minimum mean square error}
\newac{zf}{ZF}{zero-forcing}
\newac{rzf}{RZF}{regularized zero-forcing}
\newac{as}{AS}{angular spread}
\newac{aod}{AOD}{angle of departure}
\newac{iid}{i.i.d.}{independent and identically distributed} 
\newac{sinr}{SINR}{signal-to-interference-and-noise ratio}
\newac{tdd}{TDD}{time-division duplex}
\newac{rvq}{RVQ}{random vector quantization}
\newac{rhs}{R.H.S.}{right hand side}
\newac{mrc}{MRC}{maximum ratio combining}
\newac{cdf}{CDF}{cumulative distribution function}
\newac{a.s.}{a.s.}{almost surely}
\newac{los}{LOS}{line-of-sight}
\newac{jsdm}{JSDM}{joint spatial division and multiplexing}
\newac{map}{MAP}{maximum a posteriori}
\newac{klt}{KLT}{Karhunen-Lo\`eve Transform}
\newac{lbe}{LBE}{link bargaining equilibrium}
\newac{se}{SE}{Stackelberg equilibrium}
\newac{uav}{UAV}{unmanned aerial vehicle}
\newac{nlos}{NLOS}{non-line-of-sight}
\newac{pdf}{PDF}{probability density function}
\newac{em}{EM}{expectation-maximization}
\newac{knn}{KNN}{$k$-nearest neighbor}
\newac{svd}{SVD}{singular value decomposition}
\newac{nmf}{NMF}{non-negative matrix factorization}
\newac{umf}{UMF}{unimodality-constrained matrix factorization}
\newac{rmse}{RMSE}{rooted mean squared error}
\newac{olos}{OLOS}{obstructed line-of-sight}
\newac{mmw}{mmW}{millimeter wave}
\newac{ber}{BER}{bit error rate}
\newac{rss}{RSS}{received signal strength}
\newac{lp}{LP}{linear program}
\newac{ufw}{U-FW}{unimodal Frank-Wolfe}
\newac{utf}{UTF}{unimodality-constrained tensor factorization}
\newac{fw}{FW}{Frank-Wolfe}
\newac{iot}{IoT}{Internet-of-Things}
\newac{mae}{MAE}{mean absolute error}
\newac{crb}{CRB}{Cram\'er-Rao bound}
\newac{aoa}{AoA}{angle of arrival}
\newac{wcl}{WCL}{weighted centroid localization}
\newac{thz}{THz}{Terahertz}
\newac{isac}{ISAC}{integrated sensing and communiation}
\newac{gnss}{GNSS}{global navigation satellite system}
\newac{3d}{3D}{three-dimensional}
\newac{2d}{2D}{two-dimensional}
\newac{pso}{PSO}{particle swarm optimization}
\newac{dp}{DP}{dynamic programming}
\newac{ao}{AO}{alternating optimization}
\newac{sca}{SCA}{successive convex approximation}

\makeatother

\usepackage{babel}
\addto\captionsamerican{\renewcommand{\lemmaname}{Lemma}}
\addto\captionsamerican{\renewcommand{\propositionname}{Proposition}}
\addto\captionsamerican{\renewcommand{\theoremname}{Theorem}}
\addto\captionsenglish{\renewcommand{\algorithmname}{Algorithm}}
\addto\captionsenglish{\renewcommand{\lemmaname}{Lemma}}
\addto\captionsenglish{\renewcommand{\propositionname}{Proposition}}
\addto\captionsenglish{\renewcommand{\theoremname}{Theorem}}
\providecommand{\lemmaname}{Lemma}
\providecommand{\propositionname}{Proposition}
\providecommand{\theoremname}{Theorem}

\begin{document}
\title{\selectlanguage{english}%
Active Search for Low-altitude UAV Sensing and Communication for Users
at Unknown Locations}

\maketitle
%
%



\ifdefined\SINGLECOLUMN
	\setkeys{Gin}{width=0.5\columnwidth}
	\newcommand{\figfontsize}{\footnotesize} 
	\newcommand{\labelfontsize}{0.7}
	\newcommand{\legendfontsize}{0.57}
	\newcommand{\ticklabelfontsize}{0.7}
	\newcommand{\numberfontsize}{0.45}
\else
	\setkeys{Gin}{width=1.0\columnwidth}
	\newcommand{\figfontsize}{\normalsize} 
	\newcommand{\labelfontsize}{0.7}
	\newcommand{\legendfontsize}{0.65}
	\newcommand{\ticklabelfontsize}{0.7}
	\newcommand{\numberfontsize}{0.5}
\fi
\newac{ode}{ODE}{ordinary differential equation}


\glsresetall
\begin{abstract}
This paper studies optimal \ac{uav} placement to ensure \ac{los}
communication and sensing for a cluster of ground users possibly in
deep shadow, while the \ac{uav} maintains backhaul connectivity with
a \ac{bs}. The key challenges include unknown user locations, uncertain
channel model parameters, and unavailable urban structure. Addressing
these challenges, this paper focuses on developing an efficient online
search strategy which jointly estimates channels, guides \ac{uav}
positioning, and optimizes resource allocation. Analytically exploiting
the geometric properties of the equipotential surface, this paper
develops an \ac{los} discovery trajectory on the equipotential surface
while the closed-form search directions are determined using perturbation
theory. Since the explicit expression of the equipotential surface
is not available, this paper proposes to locally construct a channel
model for each user in the \ac{los} regime utilizing polynomial regression
without depending on user locations or propagation distance. A class
of spiral trajectories to simultaneously construct the \ac{los} channels
and search on the equipotential surface is developed. An optimal radius
of the spiral and an optimal measurement pattern for channel gain
estimation are derived to minimize the mean squared error (MSE) of
the locally constructed channel. Numerical results on real 3D city
maps demonstrate that the proposed scheme achieves over $94\%$ of
the performance of a 3D exhaustive search scheme with just a $3$-kilometer
search.
\end{abstract}

\begin{IEEEkeywords}
UAV placement, unknown user locations, perturbation theory, trajectory
design, channel map construction
\end{IEEEkeywords}

\glsresetall

\section{Introduction}

\label{sec:Introduction}

In dense urban environments, short-wavelength signals like \ac{mmw},
\ac{thz}, and optical waves suffer from severe attenuation due to
signal blockage from buildings and other structures \cite{WanKonKonQiu:J18}.
In particular, sensing performance heavily relies on \ac{los} links
between targets and transceivers, which are usually disrupted in urban
environments \cite{CheLiuWanMas:J21,MenWuMaChe:J23}.

Deploying a low-altitude \ac{uav} is a promising solution for providing
\ac{los} links for users in deep shadow since the \ac{uav} operates
at a relatively high altitude with 3D mobility. Nevertheless, the
optimization of \ac{uav} deployment poses challenges due to the arbitrarily
complex terrain topology, which complicates modeling and predicting
\ac{los} conditions. Previous works bypassed this challenge by assuming
distance-dependent pure \ac{los} channel models \cite{LiuWanZhaChe:J19,NasTuaDuoPoo:J19},
or probabilistic channel models \cite{LiPanZhaHe:J20,YouZha:J20,MenGaoZhaYan:J22,XiaDonBaiWu:J20}.
These simplified channel models enable a coarse analysis but sacrifice
the performance by ignoring the actual \ac{los} condition. Some state-of-the-art
works formulated blockage-aware channel models \cite{YiZhuZhuXia:J22,KimSadLee:J23}
to assist \ac{uav} placement. For example, the authors in \cite{YiZhuZhuXia:J22}
approximated buildings as polyhedrons, and derived explicit expressions
to determine the \ac{los} status of a 3D position using an offline
city map. However, the performance of these methods highly relies
on the accuracy of the model of the environment.

In addition, most existing works \cite{LiuWanZhaChe:J19,NasTuaDuoPoo:J19,LiPanZhaHe:J20,YouZha:J20,MenGaoZhaYan:J22,XiaDonBaiWu:J20,YiZhuZhuXia:J22,KimSadLee:J23,ZheChe:C23}
assume perfect knowledge of user locations and channel models. Nevertheless,
in practical scenarios, accurate user locations and precise channel
models are typically unavailable due to the location privacy of users
and the uncertain \ac{nlos} conditions. Some recent works considered
first estimating user locations and channel parameters before optimizing
\ac{uav} service positions \cite{EsrGanGes:J19,KriHanCab:C19,EsrGanGes:J21}.
However, localization itself under possibly \ac{nlos} conditions
is already a challenging problem and the performance of \ac{uav}
placement highly depends on the localization accuracy of the users.
Assuming the availability of 3D maps, the work \cite{EsrGanGes:J21}
employed the fisher information metric to design an online trajectory
for joint path-loss parameter estimation and user localization, but
the global optimality and complexity remain unclear. With known user
locations and maps, the work \cite{EsrGanGes:J19} utilized dynamic
programming to optimize \ac{uav} trajectory for channel parameters
learning and employed map compression to estimate the \ac{los} probability
for \ac{uav} positions. However, the placement performance relies
on channel learning and \ac{los} probability estimation. Without
the prior knowledge of channels and user locations, a model-free approach
in \cite{KriHanCab:C19} relied on \ac{sinr} measurements and a 3D
map to account for blockage and scatters, and the authors applied
Q-learning to optimize \ac{uav} placement. However, the action space
is restricted to four orthogonal horizontal directions on a 2D plane,
and there is little theoretical guarantee on the performance.

This paper investigates the \ac{uav} trajectory design to provide
communication and sensing services for a cluster of ground users at
unknown locations, while the \ac{uav} maintains backhaul connectivity
with a remote \ac{bs}. The goal is to provide an optimization framework
for an efficient online search of the best \ac{uav} position for
a broad class of communication problems and sensing problems. The
challenges arise from the need to operate without precise information,
where the user locations and the channel models are not available
at the \ac{uav}, and the \ac{uav} has to make search decisions and
adjust the search trajectory on-the-fly based on the online measurements.

In this work, we develop three techniques to resolve the above challenges.
First, we propose to search on an {\em equipotential surface} using
trajectories developed from {\em perturbation theory}. The equipotential
surface is the region of UAV positions where the sensing and communication
performance for the users is balanced with capacity of the UAV-BS
backhaul link. The status whether the UAV is on or off the equipotential
surface can be quantified by measuring the \ac{snr}, without requiring
knowledge of user locations or channel parameters. While an explicit
expression of the equipotential surface is not available, we employ
perturbation theory to develop a search strategy so that the search
trajectory remains on the equipotential surface. Second, we propose
to locally construct a channel map for each user within the \ac{los}
regime using local polynomial regression. This approach allows for
a relatively simple construction of nonparametric channel models that
are capable of capturing actual signal attenuations due to blockage
from the possibly unknown propagation environment. Third, we develop
our search strategy on the equipotential surface exploiting two universal
properties: {\em upward invariance} and {\em colinear invariance}
of \ac{los} regions over almost all terrain structure. Prior studies
have shown that, in certain cases, this strategy can achieve an $\epsilon$-optimal
solution globally in 3D space with a trajectory length linear of the
search radius \cite{ZheChe:J23,ZheCheLuo:C23}. Yet, the prior work
\cite{ZheChe:J23,ZheCheLuo:C23} assumed knowledge of user locations
and channel model parameters.

The contributions of this paper are summarized as follows:
\begin{itemize}
\item We analytically show that the equipotential surface is a sphere for
a class of sensing and communication problems, where the sphere parameters
depend on the user distribution and the power budget.
\item We develop a class of spiral trajectories to simultaneously construct
the local LOS channels and search on the equipotential surface. An
optimal radius of the spiral and an optimal measurement pattern for
channel gain estimation are derived to minimize the \ac{mse} of the
locally constructed channel.
\item We demonstrate that the normalized channel gain construction error
is on the order of $10^{-2}$ without knowing the user location or
the propagation distance. With a $3$-kilometer search, the proposed
scheme can achieve over $94\%$ performance of that from a 3D exhaustive
search for a UAV-assisted multiuser sensing and communication problem
over a dense urban area.
\end{itemize}

The remaining part of the paper is organized as follows: Section II
introduces the system model and formulates a UAV-assisted sensing
and communication problem. Section III discusses the geometric properties
of the equipotential surface and local channel map construction. Section
IV outlines the LOS discovery trajectory and proposes a superposed
trajectory for optimal LOS position search. Section V presents numerical
results and comparisons, while Section VI concludes the paper.

\textit{Notation}: Vectors and matrices are denoted by bold $\mathbf{x}$
and bold capital $\mathbf{X}$, respectively. $\mathcal{M}_{m,n}$
denotes all $m$-by-$n$ matrices with $\mathcal{M}_{n}$ for square
matrices. Matrix entry, column vector, and row vector are represented
as $[\mathbf{X}]_{(i,j)}$, $[\mathbf{X}]_{(:,j)}$, and $[\mathbf{X}]_{(i,:)}$,
respectively. Matrix trace and diagonal are $\text{tr}\{\mathbf{X}\}$
and $\text{diag}\{\mathbf{X}\}$. Expectation and variance are $\mathbb{E}\{\cdot\}$
and $\mathbb{V}\{\cdot\}$. The gradient of $f(\mathbf{x})$ is $\ensuremath{\nabla f(\mathbf{x})}$.
The cross product is indicated by $\times$. The time derivative of
$x(t)$ is defined as $\dot{x}=\text{d}x(t)/\text{d}t$. $C$ represents
a constant. The inequality $\mathbf{p}\geq0$ indicates that all the
entries in $\mathbf{p}$ are no less than $0$.

\section{System Model}

\label{subsec:System-model}
In this section, we first establish an environment model with certain
\ac{los} properties. Then, a channel model for \ac{los} propagation
condition and \ac{nlos} propagation condition is defined. Finally,
a general \ac{los}-guaranteed \ac{uav}-assisted sensing and communication
problem is formulated with two specific problems followed.

\subsection{Environment Model}

\label{subsec:Environment-and-Propagation-model}

Consider that a \ac{uav} serves one \ac{bs} located at $\mathbf{u}_{0}$
and a cluster of users without knowing their locations as shown in
Fig.~\ref{fig:system-illustration}. We focus on the case where the
users are clustered in an unknown neighborhood in a dense urban area.
The sets of users are denoted as $\mathcal{K}\triangleq\{1,2,\dots,K\}$.

Let $\mathcal{X}=\{\mathbf{x}\in\mathbb{R}^{3}:x_{3}\geq H_{\text{min}}\}$
be the feasible region of \ac{uav} positions where $H_{\text{min}}$
is the minimum flight height. While signals of ground users are likely
blocked by buildings, we denote $\mathcal{D}_{k}\text{\ensuremath{\subseteq}}\mathcal{X}$
as the region of \ac{uav} positions such that there is an \ac{los}
link between the \ac{uav} at position $\mathbf{x}\in\mathcal{D}_{k}$
and user $k$ at an unknown position $\mathbf{u}_{k}$.

The \ac{los} region $\mathcal{D}_{k}$ can be {\em arbitrary} except
that $\mathcal{D}_{k}$ is assumed to have the following properties:
For any $\mathbf{x}\in\mathcal{D}_{k}$,
\begin{enumerate}
\item Upward invariant: any \ac{uav} position $\mathbf{x}'$ perpendicularly
above $\mathbf{x}$ also belongs to $\mathcal{D}_{k}$, \emph{i.e.},
$\mathbf{x}'\in\mathcal{D}_{k}$;
\item Colinear invariant: any \ac{uav} position $\mathbf{x}'$ that satisfies
$\mathbf{x}'-\mathbf{u}_{k}=\rho(\mathbf{x}-\mathbf{u}_{k})$ for
some $\rho>1$ also belongs to $\mathcal{D}_{k}$.
\end{enumerate}

Similarly, one can define $\mathcal{D}_{0}\subseteq\mathcal{X}$ as
the \ac{los} region of \ac{uav} positions to the \ac{bs}. To summarize,
the upward invariant and colinear invariant properties imply that
if there is an \ac{los} link between the \ac{uav} and a user, such
an \ac{los} condition will remain if the \ac{uav} increases its
altitude or moves away from the user without changing the elevation
and azimuth angles. The widely adopted probabilistic \ac{los} model
in the \ac{uav} literature \cite{CheMozSaaYin:J17,IreSebHal:J18,CheHua:J22,SinAgrSinBan:J22}
is a special case that satisfies these properties in a statistical
sense.

Define the {\em full-LOS} region $\tilde{\mathcal{D}}=\bigcap\mathcal{D}_{k}$
as the set of \ac{uav} positions where there are \ac{los} links
to the \ac{bs} and all the users. Since the full-LOS region is an
intersection of $\mathcal{D}_{k}$, the upward invariant property
automatically holds, {\em i.e.}, for any full-LOS position $\mathbf{x}\in\tilde{\mathcal{D}}$,
any position $\mathbf{x}'$ perpendicularly above $\mathbf{x}$ is
also a full-LOS position which satisfies $\mathbf{x}'\in\tilde{\mathcal{D}}$.
Note that the colinear invariant property does not hold for $\tilde{\mathcal{D}}$.

\subsection{Channel Model}

\label{subsec:Channel-Model}

Based on the LOS region $\mathcal{D}_{k}$, the channel gain from
the \ac{uav} at position $\mathbf{x}$ to node $k$ at the unknown
position $\mathbf{u}_{k}$ is modeled as
\begin{equation}
\bar{g}_{k}(\mathbf{x})=\begin{cases}
\begin{aligned} & g_{k}(\mathbf{x}) &  & \text{if }\mathbf{x}\in\mathcal{D}_{k}\quad\text{(1a)}\\
 & g_{k}(\mathbf{x})+\phi(\mathbf{x}) &  & \text{otherwise}\quad\,\text{(1b)}
\end{aligned}
\end{cases}\label{eq:propagation-model}
\end{equation}
where $g_{k}(\mathbf{x})$ is the deterministic channel gain under
\ac{los}, and $\phi(\mathbf{x})$ is a random variable to capture
the power penalty due to the shadowing in \ac{nlos} condition \cite{CheGes:J19}.

\setcounter{equation}{1}

The global model for the deterministic channel gain $g_{k}(\mathbf{x})$
is {\em unknown} to the system, except that $g_{k}(\mathbf{x})$
is assumed to be Lipschitz continuous satisfying,
\begin{align}
g_{k}(\mathbf{x}) & \leq g_{k}(\mathbf{x}_{0})+\nabla g_{k}(\mathbf{x}_{0})^{\text{T}}(\mathbf{x}-\mathbf{x}_{0})+\frac{L_{g}}{2}d^{2}(\mathbf{x},\mathbf{x}_{0})\label{eq:Lipschitz-g-upper}
\end{align}
and
\begin{align}
g_{k}(\mathbf{x}) & \geq g_{k}(\mathbf{x}_{0})+\nabla g_{k}(\mathbf{x}_{0})^{\text{T}}(\mathbf{x}-\mathbf{x}_{0})-\frac{L_{g}}{2}d^{2}(\mathbf{x},\mathbf{x}_{0})\label{eq:Lipschitz-g-lower}
\end{align}
for all $\mathbf{x}\in\mathcal{X}$ where $L_{g}$ is a finite constant,\footnote{The free-space propagation model, {\em i.e.}, $g_{k}(\mathbf{x})=b_{0}+10a_{0}\log_{10}(d(\mathbf{x},\mathbf{u}_{k}))$
where $b_{0}$ and $a_{0}$ are channel parameters, satisfies this
condition with $L_{g}=8.7\times\text{10}^{-4}$ under $a_{0}=-2$
and $H_{\text{min}}=100$ meters.} and $d(\mathbf{x},\mathbf{x}_{0})\triangleq\|\mathbf{x}-\mathbf{x}_{0}\|_{2}$.

While the global model $g_{k}({\bf x})$ or $\bar{g}_{k}({\bf x})$
is not available, the \ac{uav} may measure the channel gain $\bar{g}_{k}(\mathbf{x})$
when it explores location $\mathbf{x}$. The measurement model is
given by 
\begin{equation}
y=\bar{g}_{k}(\mathbf{x})+\xi\label{eq:measurement_with_noise}
\end{equation}
 where $\xi\sim\text{\ensuremath{\mathcal{N}}}(0,\sigma^{2})$ models
the small-scale fading and the uncertainty due to the antenna gain
which may not be omidirectional.

In addition, the complete geometry of $\mathcal{D}_{k}$ is also unknown,
except for a local area that has been explored by the \ac{uav} along
its trajectory up to time $t$. Specifically, denote $\mathbf{x}(t)$
as the \ac{uav} position at time $t$. We assume that the value of
\ac{los} indicator function $\mathbb{I}\{\mathbf{x}(t)\in\mathcal{D}_{k}\}$
can be perfectly determined based on the measurements $\bar{g}_{k}(\mathbf{x}(\tau))$
for $0\leq\tau\leq t$. For a practical implementation, $\mathbb{I}\{\mathbf{x}(t)\in\mathcal{D}_{k}\}$
can be computed by statistical learning and hypothesis testing \cite{CheGes:J19,ZheChe:C23}.
\begin{figure}
\begin{centering}
\includegraphics[width=0.95\columnwidth]{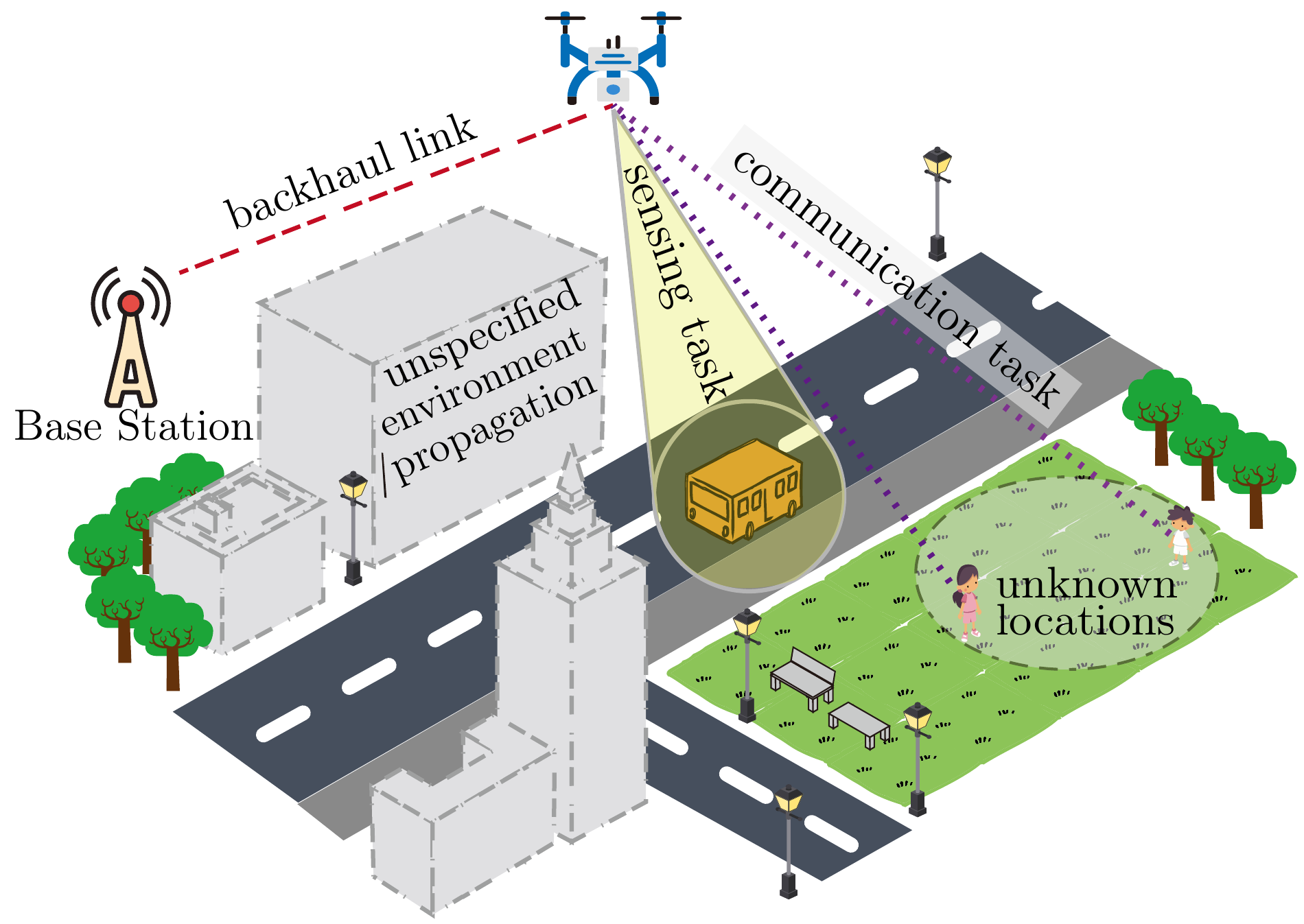}
\par\end{centering}
\caption{\label{fig:system-illustration} Illustration of a system where a
\ac{uav} provides sensing and/or communication services for a cluster
of sensing targets and/or communication users without knowing their
locations, channel models, and the city topology while establishing
an \ac{los} backhaul link to a \ac{bs}.}
\end{figure}

\subsection{UAV-assisted Sensing and Communication with Backhaul}

\label{subsec:A-UAV-assisted-sensing-and-communication-problem}

A common joint \ac{uav} position optimization and resource allocation
problem needs to balance the performance of serving the users and
the capacity of the backhaul link that the \ac{uav} connects to the
\ac{bs}. In addition, most sensing tasks require an \ac{los} condition.
Likewise, for a communication task, the \ac{los} condition can substantially
enhance the communication performance. These requirements lead to
a general max-min problem for the {\em\ac{los}-guaranteed} \ac{uav}-assisted
sensing and communication as follows.

Denote $\bm{g}_{\text{u}}(\mathbf{x})$ as a vector that collects
the \ac{los}  channel gains for all the users, {\em i.e.}, $\bm{g}_{\text{u}}(\mathbf{x})=[g_{1}(\mathbf{x}),\dots,g_{K}(\mathbf{x})]^{\text{T}}$.
Denote $\mathbf{p}$ as the corresponding resource allocation given
$\bm{g}_{\text{u}}(\mathbf{x})$. We have 
\begin{equation}
\begin{aligned}\mathscr{P}:\quad\mathop{\mbox{maximize}}\limits _{\mathbf{x},\mathbf{p}} & \quad\min\{f_{0}(g_{0}(\mathbf{x})),F_{\text{u}}(\bm{g}_{\text{u}}(\mathbf{x}),\mathbf{p})\}\\
\mathop{\mbox{subject to}} & \quad\mathbf{x}\in\text{\ensuremath{\tilde{\mathcal{D}}}},\\
 & \quad H_{n}(\bm{g}_{\text{u}}(\mathbf{x}),\mathbf{p})\leq0,n=1,2,\dots,N
\end{aligned}
\label{general_formula}
\end{equation}
where $f_{0}(g_{0}(\mathbf{x}))$ represents the objective of the
BS-UAV link under the \ac{los} condition, $F_{\text{u}}(\bm{g}_{\text{u}}(\mathbf{x}),\mathbf{p})$
represents the objective of the UAV-user links under \ac{los}, and
$H_{n}(\bm{g}_{\text{u}}(\mathbf{x}),\mathbf{p})\leq0$ for $n=1,2,\dots,N$
are the corresponding constraints for the resource allocation. Problem~$\mathscr{P}$
is non-convex due to the \ac{los} constraint on \ac{uav} positions
and the fact that the blockage can have an arbitrary shape.

Such a general formulation captures many typical applications for
communication and sensing, with two examples illustrated as follows.

\subsubsection{Balancing problem\label{subsec:Balancing-problem}}

Consider to deploy a \ac{uav} to offer sensing and/or communication
services for a cluster of sensing targets and/or communication users
while establishing a \ac{los} relay link with a remote \ac{bs}.
Specify $p_{k}$ as the power allocation from the \ac{uav} to user
$k$. Denote $f_{k}(g_{k}(\mathbf{x}),p_{k})$ as the objective function
for user $k$, and $N_{0}$ as the noise power of the propagation
channel. For sake of brevity, let $N_{0}=1$ in this paper. For a
sensing task involving estimation, $f_{k}(g_{k}(\mathbf{x}),p_{k})$
can be specified as weighted \ac{snr}, {\em i.e.}, $f_{k}(g_{k}(\mathbf{x}),p_{k})=\mu_{\text{s}}p_{k}g_{k}(\mathbf{x})$
where $\mu_{\text{s}}$ is a weight of the sensing task \cite{JohVenGroLop:J22,ZviYon:J10}.
For a communication task, $f_{k}(g_{k}(\mathbf{x}),p_{k})$ can be
specified as channel capacity, {\em i.e.}, $f_{k}(g_{k}(\mathbf{x}),p_{k})=\mu_{\text{c}}\log_{2}(1+p_{k}g_{k}(\mathbf{x}))$
where $\mu_{\text{c}}$ is a weight of the communication task. In
addition, the objective function of the BS-UAV link is given by the
capacity function, {\em i.e.}, $f_{0}(g_{0}(\mathbf{x}))=\log_{2}(1+P_{0}g_{0}(\mathbf{x}))$
where $P_{0}$ is the transmit power of the \ac{bs}. The balancing
problem aims at maximizing the worst link performance of the BS-UAV
link and the UAV-user links. Hence, the overall objective function
is specified as $\min\{f_{0}(g_{0}(\mathbf{x})),\min_{k\in\mathcal{K}}\{f_{k}(g_{k}(\mathbf{x}),p_{k})\}\}$.
The balancing problem jointly optimizes the \ac{uav} position $\mathbf{x}$
and the power allocation $\mathbf{p}$ as follows
\begin{equation}
\begin{aligned}\mathop{\mbox{maximize}}\limits _{\mathbf{x},\mathbf{p}\geq0} & \quad\min\{f_{0}(g_{0}(\mathbf{x})),\min_{k\in\mathcal{K}}\{f_{k}(g_{k}(\mathbf{x}),p_{k}\text{)}\}\}\\
\mathop{\mbox{subject to}} & \quad\mathbf{x}\in\text{\ensuremath{\tilde{\mathcal{D}}}},\\
 & \quad\text{\ensuremath{\sum_{k\in\mathcal{K}}}}p_{k}\leq P_{\text{T}}
\end{aligned}
\label{general_formula-balancing-problem}
\end{equation}
where $P_{\text{T}}$ is the total transmit power of the \ac{uav}.

\subsubsection{Sum-rate problem\label{subsec:Sum-rate-problem}}

Consider that a \ac{uav} relays signal from a \ac{bs} to $K$ ground
users with other assumptions the same as that in the balancing problem.
The sum-rate problem aims at maximizing the sum capacity of the relay
channels. Thus, the objective function is $\min\{f_{0}(g_{0}(\mathbf{x})),\sum_{k\in\mathcal{K}}f_{k}(g_{k}(\mathbf{x}),p_{k})\}$
where $f_{0}(g_{0}(\mathbf{x}))$ and $f_{k}(g_{k}(\mathbf{x}),p_{k})$
are defined in Section~\ref{subsec:Balancing-problem}, and the problem
is formulated as
\begin{equation}
\begin{aligned}\mathop{\mbox{maximize}}\limits _{\mathbf{x},\mathbf{p}\text{\ensuremath{\geq}}0} & \quad\min\{f_{0}(g_{0}(\mathbf{x})),{\textstyle \sum_{k\in\mathcal{K}}}f_{k}(g_{k}(\mathbf{x}),p_{k})\}\\
\mathop{\mbox{subject to}} & \quad\mathbf{x}\in\text{\ensuremath{\tilde{\mathcal{D}}}},\\
 & \quad\sum_{k\in\mathcal{K}}p_{k}\leq P_{\text{T}}.
\end{aligned}
\label{general_formula-sum-rate}
\end{equation}

While the example formulation~(\ref{general_formula-balancing-problem})
is non-convex in the power allocation variable $\mathbf{p}$ and the
problem~(\ref{general_formula-sum-rate}) is convex in $\mathbf{p}$,
both cases can be handled in a same way in our proposed algorithm
framework. In addition, as the channels and the LOS regions $\mathcal{D}_{k}$
are not available before exploring near location $\mathbf{x}$, the
\ac{uav} needs to design an online trajectory to explore the \ac{los}
opportunity, measure the channel quality, and optimize for the system
performance.

\subsection{Suboptimal Solution on the Equipotential Surface}

As the full \ac{los} region $\tilde{\mathcal{D}}$ can have an arbitrary
shape and is initially unknown before the exploration, finding the
globally optimal solution to $\mathscr{P}$ generally requires an
online exhaustive search in 3D, which is prohibitive due to the limited
flight time of \ac{uav}. Thus, we compromise for a suboptimal solution
on the \emph{equipotential surface}.

The equipotential surface $\mathcal{S}$ is defined as a region where
the objective of the BS-UAV link and the objective of the UAV-user
links under the optimized resource allocation are equal, assuming
all the links were in \ac{los}. Specifically, define $\mathbf{p}^{*}(\mathbf{x})$
as the optimal solution to $\mathscr{P}$ given a fixed location $\mathbf{x}$
by ignoring the \ac{los} constraint $\mathbf{x}\in\text{\ensuremath{\tilde{\mathcal{D}}}}$.
Denote $\bm{g}(\mathbf{x})=[g_{0}(\mathbf{x}),\bm{g}_{\text{u}}(\mathbf{x})^{\text{T}}]^{\text{T}}$
as a vector that collects the \ac{los} channel gains from the \ac{bs}
and the users. Defining a balance function 
\begin{equation}
F(\bm{g}(\mathbf{x}))\text{\ensuremath{\triangleq}}f_{0}(g_{0}(\mathbf{x}))-F_{\text{u}}(\bm{g}_{\text{u}}(\mathbf{x}),\mathbf{p}^{*}(\mathbf{x}))\label{eq:F-gx}
\end{equation}
the equipotential surface is defined as
\begin{equation}
\mathcal{S}=\left\{ \mathbf{x}\in\mathcal{X}:F(\bm{g}(\mathbf{x}))=0\right\} .\label{eq:equipotential_surface}
\end{equation}

Recent studies \cite{ZheCheLuo:C23,ZheChe:C23,ZheChe:J23} discover
that searching on the equipotential surface $\mathcal{S}$ has significant
promise in identifying the globally optimal \ac{uav} position in
3D space. It has been shown that for the case with a single user and
a \ac{bs}, a solution can be found with only an $O(\epsilon)$ performance
gap to the globally optimal solution by searching {\em only} on the
equipotential surface, for a search distance of $O(1/\epsilon)$,
where the equipotential surface degenerates to a middle perpendicular
plane under the condition that $P_{0}=P_{\text{T}}$ \cite{ZheChe:C23,ZheChe:J23}.
It was also numerically demonstrated in \cite{ZheCheLuo:C23}, where
the user locations are known, that searching on the equipotential
surface in a multi-user case for a sum-rate maximization objective
attains over $96\%$ of the performance of that from an exhaustive
search in the entire 3D space.

\subsection{Superposed Trajectory for Unknown User Locations}

Since the user locations $\mathbf{u}_{k}$ are unknown, the analytical
form of the channels $\bm{g}(\mathbf{x})$ is \emph{not} available,
and hence, \emph{no} analytical form of the equipotential surface
$\mathcal{S}$ is available to the system. As a result, while the
\ac{uav} aims at searching on $\mathcal{S}$, it also needs to simultaneously
estimate $\bm{g}(\mathbf{x})$ and construct $\mathcal{S}$. A classical
approach may first estimate a parametric form of $\bm{g}(\mathbf{x})$
and then construct a global analytical model for $\mathcal{S}$. However,
it is known that a joint user localization and propagation parameter
estimation for $\bm{g}(\mathbf{x})$ require measurements across a
large area, and such global construction is very challenging and inaccurate.

To tackle this challenge, we develop a superposed trajectory for the
\ac{uav} exploration as $\mathbf{x}(t)=\text{\textbf{x}}_{\text{s}}(t)+\text{\textbf{x}}_{\text{r}}(t)$
where $\text{\textbf{x}}_{\text{s}}(t)$ is the online search trajectory
to find a suboptimal solution to $\mathscr{P}$ on $\mathcal{S}$
based on the local information of $\bm{g}(\mathbf{x})$ in the neighborhood
along the search, and $\text{\textbf{x}}_{\text{r}}(t)$ provides
small deviation from $\text{\textbf{x}}_{\text{s}}(t)$ to simultaneously
collect measurements for the \emph{local} reconstruction of $\bm{g}(\mathbf{x})$
and $\mathcal{S}$.

\section{Trajectory Design for Tracking on the Equipotential Surface}

\label{sec:Trajectory-Design-for-Equipotential-Surface}

In this section, we first explore the geographic properties of $\mathcal{S}$,
and subsequently, demonstrate the feasibility of approximately constructing
the equipotential surface without knowing the channel models and the
user locations. Then, the local construction of the propagation model
is studied with theoretical results on the optimal measurement pattern.
A class of spiral trajectories for locally constructing the equipotential
surface is proposed at the end of this section.

\subsection{Property of the Equipotential Surface}

\label{subsec:Property-of-the-Equipotential-Surface}

The equipotential surface does not exist when $\mathcal{S}=\text{\ensuremath{\varnothing}}$.
This corresponds to a superior channel for the \ac{bs} or for the
users, resulting in a trivial solution where the \ac{uav} should
hover above either the \ac{bs} or above the cluster of the users.
We focus on the scenario where $\mathcal{S}$ does exist.

\subsubsection{Existence Condition}

\label{subsec:Existence-Condition}

The existence of the equipotential surface can be easily checked by
evaluating the objective values at two special locations $\mathbf{x}_{0}^{\text{m}}=\mathbf{u}_{0}+[0,0,H_{\text{min}}]^{\text{T}}$
and $\mathbf{x}_{\text{u}}^{\text{m}}=\sum_{k\in\mathcal{K}}\mathbf{u}_{k}/K+[0,0,H_{\text{min}}]^{\text{T}}$.
A general existence condition yields $F(\bm{g}(\mathbf{x}_{0}^{\text{m}}))F(\bm{g}(\mathbf{x}_{\text{\text{u}}}^{\text{m}}))\leq0$.
This is because both the channel $\bm{g}(\mathbf{x})$ and $F(\bm{g}(\mathbf{x}))$
are continuous, and hence, there exists a path from $\mathbf{x}_{\text{u}}^{\text{m}}$
to $\mathbf{x}_{0}^{\text{m}}$ that reaches $F(\bm{g}(\mathbf{x}))=0$.

For a specific problem, such as the balancing problem in (\ref{general_formula-balancing-problem}),
the existence condition depends on the problem parameters, such as
the power budget $P_{\text{T}}$.
\begin{prop}[Existence condition in a specified balancing problem]
\label{prop:existence-condition-special-balancing-problem}For the
balancing problem defined in (\ref{general_formula-balancing-problem})
with $f_{k}(g_{k}(\mathbf{x}),p_{k})=\log_{2}(1+p_{k}g_{k}(\mathbf{x}))$,
a sufficient condition to the existence of the equipotential surface
is given by
\begin{align}
 & \left(\frac{P_{0}}{P_{{\rm T}}}-\frac{1/g_{0}(\mathbf{x}_{0}^{\text{{\rm m}}})}{\sum_{k\in\mathcal{K}}(1/g_{k}(\mathbf{x}_{0}^{{\rm m}}))}\right)\left(\frac{P_{0}}{P_{{\rm T}}}-\frac{1/g_{0}({\rm \mathbf{x}_{\text{{\rm u}}}^{{\rm m}}})}{\sum_{k\in\mathcal{K}}(1/g_{k}({\rm \mathbf{x}_{{\rm u}}^{{\rm m}}}))}\right)\leq0.\label{eq:sufficient_condition_existence-balancing-problem}
\end{align}
\end{prop}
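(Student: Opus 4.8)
The plan is to reduce Proposition~\ref{prop:existence-condition-special-balancing-problem} to the general existence condition $F(\bm{g}(\mathbf{x}_{0}^{\text{m}}))\,F(\bm{g}(\mathbf{x}_{\text{u}}^{\text{m}}))\le 0$ stated above. Concretely, I would first evaluate the balance function $F(\bm{g}(\mathbf{x}))$ of \eqref{eq:F-gx} in closed form for the balancing problem \eqref{general_formula-balancing-problem}, and then show that the sign of $F(\bm{g}(\mathbf{x}))$ coincides with the sign of $P_{0}/P_{\text{T}}-\frac{1/g_{0}(\mathbf{x})}{\sum_{k}1/g_{k}(\mathbf{x})}$. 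Then the left-hand side of \eqref{eq:sufficient_condition_existence-balancing-problem} being nonpositive is precisely the general condition, and existence follows from continuity and the intermediate value theorem exactly as in the general case.

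For the first step, fix $\mathbf{x}$ and compute the per-location optimal power allocation $\mathbf{p}^{*}(\mathbf{x})$. In the balancing problem $F_{\text{u}}(\bm{g}_{\text{u}}(\mathbf{x}),\mathbf{p})=\min_{k\in\mathcal{K}}\log_{2}(1+p_{k}g_{k}(\mathbf{x}))$ is non-decreasing in each $p_{k}$, so the budget $\sum_{k}p_{k}\le P_{\text{T}}$ is active at an optimum; and since $t\mapsto\log_{2}(1+t)$ is strictly increasing, maximizing $\min_{k}\log_{2}(1+p_{k}g_{k}(\mathbf{x}))$ over $\{\mathbf{p}\ge 0:\sum_{k}p_{k}=P_{\text{T}}\}$ is equivalent to maximizing $\min_{k}p_{k}g_{k}(\mathbf{x})$. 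A standard bottleneck argument shows the maximizer equalizes the weighted powers, $p_{k}^{*}g_{k}(\mathbf{x})=c$ for all $k$, hence $p_{k}^{*}=c/g_{k}(\mathbf{x})$ and, from the budget, $c=P_{\text{T}}/\sum_{k}(1/g_{k}(\mathbf{x}))$. Substituting into \eqref{eq:F-gx} yields
\[
F(\bm{g}(\mathbf{x}))=\log_{2}\!\left(1+P_{0}g_{0}(\mathbf{x})\right)-\log_{2}\!\left(1+\frac{P_{\text{T}}}{\sum_{k}1/g_{k}(\mathbf{x})}\right).
\]

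For the sign, since $\log_{2}(1+\cdot)$ is strictly increasing, $F(\bm{g}(\mathbf{x}))$ has the same sign as $P_{0}g_{0}(\mathbf{x})-P_{\text{T}}/\sum_{k}(1/g_{k}(\mathbf{x}))$; multiplying by $\sum_{k}(1/g_{k}(\mathbf{x}))>0$ this is the sign of $P_{0}g_{0}(\mathbf{x})\sum_{k}(1/g_{k}(\mathbf{x}))-P_{\text{T}}$, which, divided by the positive quantity $P_{\text{T}}\,g_{0}(\mathbf{x})\sum_{k}(1/g_{k}(\mathbf{x}))$, is the sign of $P_{0}/P_{\text{T}}-\frac{1/g_{0}(\mathbf{x})}{\sum_{k}1/g_{k}(\mathbf{x})}$. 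Evaluating at $\mathbf{x}=\mathbf{x}_{0}^{\text{m}}$ and $\mathbf{x}=\mathbf{x}_{\text{u}}^{\text{m}}$ shows that the left-hand side of \eqref{eq:sufficient_condition_existence-balancing-problem} is $\le 0$ precisely when $F(\bm{g}(\mathbf{x}_{0}^{\text{m}}))F(\bm{g}(\mathbf{x}_{\text{u}}^{\text{m}}))\le 0$. Since $\bm{g}(\cdot)$, and hence $F(\bm{g}(\cdot))$, is continuous and $\mathcal{X}=\{\mathbf{x}:x_{3}\ge H_{\text{min}}\}$ is convex, the segment joining $\mathbf{x}_{\text{u}}^{\text{m}}$ and $\mathbf{x}_{0}^{\text{m}}$ lies in $\mathcal{X}$, and the intermediate value theorem produces a point on it at which $F(\bm{g}(\mathbf{x}))=0$, i.e., $\mathcal{S}\ne\varnothing$.

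The only genuinely nontrivial point, and thus the expected main obstacle, is the closed form of $\mathbf{p}^{*}(\mathbf{x})$: one must argue that the max--min allocation equalizing $p_{k}g_{k}(\mathbf{x})$ is optimal for the balancing problem~\eqref{general_formula-balancing-problem} with $\mathbf{x}$ held fixed, including the edge case in which $f_{0}(g_{0}(\mathbf{x}))$ is the binding term of the outer minimum. In that case the equalizing allocation is still \emph{an} optimal $\mathbf{p}$ (merely not unique) and returns the value of $F_{\text{u}}(\bm{g}_{\text{u}}(\mathbf{x}),\mathbf{p}^{*}(\mathbf{x}))$ that enters \eqref{eq:F-gx}; I would remove the ambiguity by reading $\mathbf{p}^{*}(\mathbf{x})$ in \eqref{eq:F-gx} as this max--min allocation, which makes $F$ well defined. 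Everything else reduces to monotonicity of $\log_{2}(1+\cdot)$, positivity of the channel gains at the two reference points, and the intermediate value theorem already used for the general existence condition.
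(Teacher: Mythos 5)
Your proposal is correct and follows essentially the same route as the paper's proof: derive the equal-SNR optimal allocation $p_k^*=P_{\mathrm{T}}/\bigl(g_k(\mathbf{x})\sum_{k'}1/g_{k'}(\mathbf{x})\bigr)$, observe that condition \eqref{eq:sufficient_condition_existence-balancing-problem} is equivalent to $F(\bm{g}(\mathbf{x}_0^{\mathrm{m}}))F(\bm{g}(\mathbf{x}_{\mathrm{u}}^{\mathrm{m}}))\le 0$, and invoke continuity with the intermediate value (Bolzano--Cauchy) theorem. Your additional care about the sign manipulation, the convexity of $\mathcal{X}$, and the non-uniqueness of $\mathbf{p}^*$ when $f_0$ binds only fills in details the paper leaves implicit.
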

\begin{proof}
See Appendix~\ref{sec:Proof-of-Proposition-existence-condition}.
\end{proof}
Proposition~\ref{prop:existence-condition-special-balancing-problem}
provides an explicit condition to the existence of the equipotential
surface in a typical balancing problem for maximizing the worst relay
channel capacity. In particular, when $\mathcal{K}=\{1\}$, $1/(g_{0}(\mathbf{x})\sum_{k\in\mathcal{K}}(1/g_{k}(\mathbf{x})))$
is simplified as $g_{1}(\mathbf{x})/g_{0}(\mathbf{x})$, and thus
the sufficient condition in (\ref{eq:sufficient_condition_existence-balancing-problem})
becomes $(P_{0}/P_{\text{T}}-g_{1}(\mathbf{x}_{0}^{\text{m}})/g_{0}(\mathbf{x}_{0}^{\text{m}}))(P_{0}/P_{\text{T}}-g_{1}(\mathbf{x}_{\text{u}}^{\text{m}})/g_{0}(\mathbf{x}_{\text{u}}^{\text{m}}))\leq0$.
Consequently, there exists a point $\mathbf{x}$ satisfying $P_{0}/P_{\text{T}}=g_{1}(\mathbf{x})/g_{0}(\mathbf{x})$
implying that a smaller channel gain necessitates a corresponding
increment in the allocated power budget.

\subsubsection{Geometric Shape}

\label{subsec:Geometric-Characterization}

While the geometric characteristics of the equipotential surface $\mathcal{S}$
are highly related to the specific applications, the exploitation
of geometric properties of $\mathcal{S}$ in a typical balancing problem
provides some insights for \ac{uav} trajectory design on $\mathcal{S}$.
\begin{prop}[A spherical equipotential surface]
\label{prop:shape_of_equipotential_surface}Suppose that the channel
gain satisfies $g_{k}(\mathbf{x})=b_{0}-10\log_{10}(d^{2}(\mathbf{x},\mathbf{u}_{k}))$.
For the balancing problem defined in (\ref{general_formula-balancing-problem})
with $f_{k}(g_{k}(\mathbf{x}),p_{k})=\log_{2}(1+p_{k}g_{k}(\mathbf{x}))$,
the equipotential surface is a sphere centered at $\mathbf{o}=(P_{0}\sum_{k\in\mathcal{K}}\mathbf{u}_{k}-P_{\text{T}}\mathbf{u}_{0})/(KP_{0}-P_{\text{T}})$
with radius $R$ satisfying 
\begin{align}
 & R^{2}=\frac{P_{\text{T}}\|\mathbf{u}_{0}\|_{2}^{2}-P_{0}\sum_{k\in\mathcal{K}}\|\mathbf{u}_{k}\|_{2}^{2}}{KP_{0}-P_{\text{T}}}+\|\mathbf{o}\|_{2}^{2}\label{eq:radius_2}
\end{align}
\end{prop}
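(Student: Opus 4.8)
The plan is to evaluate the balance function $F(\bm{g}(\mathbf{x}))$ in closed form for this particular instance, set $F(\bm{g}(\mathbf{x}))=0$, and show that the resulting level set is a quadric whose purely quadratic part is a nonzero scalar multiple of $\|\mathbf{x}\|_{2}^{2}$, hence a sphere.

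\textbf{Step 1 (inner power allocation and the balance equation).} For a fixed $\mathbf{x}$ with the \ac{los} constraint dropped, $\mathbf{p}^{*}(\mathbf{x})$ maximizes $\min_{k\in\mathcal{K}}\log_{2}(1+p_{k}g_{k}(\mathbf{x}))$ over $\mathbf{p}\geq0$ with $\sum_{k\in\mathcal{K}}p_{k}\leq P_{\text{T}}$. Since $t\mapsto\log_{2}(1+t)$ is strictly increasing and $g_{k}(\mathbf{x})>0$, this is equivalent to maximizing $\min_{k}p_{k}g_{k}(\mathbf{x})$ over the same feasible set, which by a standard max-min fairness argument is attained by the rate-equalizing allocation $p_{k}^{*}g_{k}(\mathbf{x})=\lambda$ for all $k$ with the budget tight, giving $\lambda=P_{\text{T}}/\sum_{k\in\mathcal{K}}(1/g_{k}(\mathbf{x}))$ and hence $F_{\text{u}}(\bm{g}_{\text{u}}(\mathbf{x}),\mathbf{p}^{*}(\mathbf{x}))=\log_{2}(1+P_{\text{T}}/\sum_{k\in\mathcal{K}}(1/g_{k}(\mathbf{x})))$. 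With $f_{0}(g_{0}(\mathbf{x}))=\log_{2}(1+P_{0}g_{0}(\mathbf{x}))$, the equation $F(\bm{g}(\mathbf{x}))=0$ then reduces, after cancelling $\log_{2}(\cdot)$ and the additive $1$, to
\[
P_{0}\,g_{0}(\mathbf{x})\sum_{k\in\mathcal{K}}\frac{1}{g_{k}(\mathbf{x})}=P_{\text{T}}.
\]
(One checks this matches the zero-crossing implicit in Proposition~\ref{prop:existence-condition-special-balancing-problem}.)

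\textbf{Step 2 (substitute the model and complete the square).} In the rate expressions the channel gain enters on the linear scale, i.e.\ $g_{k}(\mathbf{x})=\beta_{0}/d^{2}(\mathbf{x},\mathbf{u}_{k})$ with $\beta_{0}=10^{b_{0}/10}$ (equivalently the dB form $b_{0}-10\log_{10}d^{2}(\mathbf{x},\mathbf{u}_{k})$ of the statement), so $1/g_{k}(\mathbf{x})=\|\mathbf{x}-\mathbf{u}_{k}\|_{2}^{2}/\beta_{0}$. The factors $\beta_{0}$ cancel and the balance equation becomes the quadratic identity $P_{0}\sum_{k\in\mathcal{K}}\|\mathbf{x}-\mathbf{u}_{k}\|_{2}^{2}=P_{\text{T}}\|\mathbf{x}-\mathbf{u}_{0}\|_{2}^{2}$. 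Expanding both sides, the coefficient of $\|\mathbf{x}\|_{2}^{2}$ equals $KP_{0}-P_{\text{T}}$; assuming $KP_{0}\neq P_{\text{T}}$, dividing through by it and completing the square in $\mathbf{x}$ identifies the center $\mathbf{o}=(P_{0}\sum_{k\in\mathcal{K}}\mathbf{u}_{k}-P_{\text{T}}\mathbf{u}_{0})/(KP_{0}-P_{\text{T}})$ and, after collecting the constant terms, yields $\|\mathbf{x}-\mathbf{o}\|_{2}^{2}=R^{2}$ with $R^{2}$ exactly as in \eqref{eq:radius_2}. Intersecting with $\mathcal{X}=\{\mathbf{x}:x_{3}\geq H_{\text{min}}\}$ only truncates the sphere to a spherical cap without altering the defining equation, and nonemptiness ($R^{2}>0$) holds precisely in the regime guaranteed by Proposition~\ref{prop:existence-condition-special-balancing-problem}.

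\textbf{Main obstacle.} The only step that is not elementary algebra is Step 1 --- rigorously characterizing $\mathbf{p}^{*}(\mathbf{x})$ as the rate-equalizing, budget-tight allocation; once the balance condition is written in the form $P_{0}g_{0}(\mathbf{x})\sum_{k}1/g_{k}(\mathbf{x})=P_{\text{T}}$, the spherical structure is a routine expansion and completion of squares. A minor point worth stating explicitly is the dB-versus-linear convention for $g_{k}$, since the cancellation of $\beta_{0}$ in Step 2 relies on using the linear form; one should also note the degenerate sub-case $KP_{0}=P_{\text{T}}$, where the quadratic term vanishes and $\mathcal{S}$ degenerates to a plane rather than a sphere.
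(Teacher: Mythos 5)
Your proposal is correct and follows essentially the same route as the paper's proof: use the rate-equalizing, budget-tight power allocation to write $F_{\text{u}}=\log_{2}(1+P_{\text{T}}/\sum_{k}(1/g_{k}(\mathbf{x})))$, reduce $F(\bm{g}(\mathbf{x}))=0$ to $P_{\text{T}}d^{2}(\mathbf{x},\mathbf{u}_{0})=P_{0}\sum_{k}d^{2}(\mathbf{x},\mathbf{u}_{k})$, and complete the square to obtain the stated center and radius. Your explicit remarks on the dB-versus-linear convention and the degenerate case $KP_{0}=P_{\text{T}}$ are points the paper leaves implicit, but they do not change the argument.
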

\begin{proof}
See Appendix~\ref{sec:Proof-of-Proposition-sphere}.
\end{proof}
Knowing that the equipotential surface $\mathcal{S}$ is a sphere,
one may estimate the center and radius of the sphere, and hence, one
can easily design a trajectory exploring \ac{los} opportunity on
$\mathcal{S}$. In general, when $\mathcal{S}$ is not guaranteed
to be a sphere, it is also inspired from Proposition~\ref{prop:shape_of_equipotential_surface}
that $\mathcal{S}$ may be locally approximated by a patch from a
sphere, leading to some simplified design of local trajectories. In
particular, if $K=1$ and $P_{\text{T}}=P_{0}$, the equipotential
surface becomes a middle-perpendicular plane between the \ac{bs}
and the user satisfying $d(\mathbf{x},\mathbf{u}_{0})=d(\mathbf{x},\mathbf{u}_{1})$
for $\forall\mathbf{x}\in\mathcal{S}$ \cite{ZheChe:J23}.

\subsection{Trajectory towards the Equipotential Surface}

\label{subsec:A-Local-Approximation-of-the-Equipotential-Surface}

Recall that the channel gain $\bm{g}(\mathbf{x})$ is not available
before the UAV explores location $\mathbf{x}$, and moreover, the
analytical form of $F(\bm{g}(\mathbf{x}))$ in (\ref{eq:F-gx}) is
not available. Here, we develop an iterative exploration strategy
to move towards the equipotential surface $\mathcal{S}$ that satisfies
$F(\bm{g}(\mathbf{x}))=0$.

We adopt a gradient-type search. Whenever the \ac{uav} locates off
the equipotential surface $F(\bm{g}(\mathbf{x}))=\delta\neq0$ at
$\mathbf{x}=\mathbf{c}_{0}$, it moves to the direction that steepest
decreases (or increases) $F(\bm{g}(\mathbf{x}))$. A linear approximation
of $F(\bm{g}(\mathbf{x}))$ at $\mathbf{x}=\mathbf{c}_{0}$ yields
\begin{equation}
F(\bm{g}(\mathbf{x}))\approx F(\bm{g}(\mathbf{c}_{0}))+\nabla F(\bm{g}(\mathbf{c}_{0}))^{\text{T}}\mathbf{G}(\mathbf{c}_{0})(\mathbf{x}-\mathbf{c}_{0})\label{eq:linear-approximation-of-Fgx}
\end{equation}
where $\nabla F(\bm{g}(\mathbf{c}_{0}))^{\text{T}}\mathbf{G}(\mathbf{c}_{0})$
represents the gradient of $F(\bm{g}(\mathbf{x}))$ at $\mathbf{x}=\mathbf{c}_{0}$
and $\mathbf{G}(\mathbf{x})=[\nabla g_{0}(\mathbf{x}),\nabla g_{1}(\mathbf{x}),\dots,\nabla g_{K}(\mathbf{x})]^{\text{T}}$
is the matrix that collects the gradients of the channel gains $\bm{g}(\mathbf{x})$.

Setting the above linear approximation~(\ref{eq:linear-approximation-of-Fgx})
to $0$ and noticing that $F(\bm{g}(\mathbf{c}_{0}))=\delta$, a nearest
solution $\mathbf{x}$ from $\mathbf{c}_{0}$ can be found by
\begin{equation}
\begin{aligned}\mathop{\mbox{minimize}}\limits _{\mathbf{x}} & \quad\|\mathbf{x}-\mathbf{c}_{0}\|_{2}^{2}\\
\mathop{\mbox{subject to}} & \quad\nabla F(\bm{g}(\mathbf{c}_{0}))^{\text{T}}\mathbf{G}(\mathbf{c}_{0})(\mathbf{x}-\mathbf{c}_{0})=-F(\bm{g}(\mathbf{c}_{0}))
\end{aligned}
\label{eq:distance-minimization-problem}
\end{equation}
where the closed-form solution is obtained using the Lagrangian multiplier
method as $\hat{\mathbf{x}}=\mathbf{c}_{0}+\mathcal{V}(\mathbf{c}_{0})$,
where
\begin{equation}
\mathcal{V}(\mathbf{c}_{0})\triangleq-\frac{\mathbf{G}(\mathbf{c}_{0})^{\text{T}}\nabla F(\bm{g}(\mathbf{c}_{0}))F(\bm{g}(\mathbf{c}_{0}))}{\|\mathbf{G}(\mathbf{c}_{0})^{\text{T}}\nabla F(\bm{g}(\mathbf{c}_{0}))\|_{2}^{2}}.\label{eq:solution-to-local-approximation-equipotential-point}
\end{equation}
The solution $\hat{\mathbf{x}}$ provides an estimated location on
$\mathcal{S}$. Consequently, the \emph{one-step exploration range}
for the \ac{uav} to approximately reach $\mathcal{S}$ is given by
$r_{0}\triangleq\|\hat{\mathbf{x}}-\mathbf{c}_{0}\|_{2}$.

As a result, a search trajectory for tracking $\mathcal{S}$ can be
designed as $\mathbf{x}_{\text{s}}(t_{n+1})=\mathbf{x}_{\text{s}}(t_{n})+\mathcal{V}(\mathbf{x}_{\text{s}}(t_{n}))$.
For mathematical convenience, the search trajectory $\mathbf{x}_{\text{s}}(t)$
that is described by a continuous-time \ac{ode} is given by 
\begin{equation}
\dot{\mathbf{x}}_{\text{s}}=\text{d}\mathbf{x}_{\text{s}}(t)/\text{d}t=\mu_{\text{v}}\mathcal{V}(\mathbf{x}_{\text{s}}(t))\label{eq:equipotential-surface-tracking}
\end{equation}
where $\mu_{\text{v}}>0$ is a step size to control the speed of the
trajectory.

It is observed that computing the search direction $\mathcal{V}(\mathbf{x})$
in (\ref{eq:solution-to-local-approximation-equipotential-point})
not only requires the channel gains $\bm{g}(\mathbf{x})$, but also
its gradient $\mathbf{G}(\mathbf{x})$. In the rest of this section,
we develop methods and trajectory $\mathbf{x}_{\text{r}}(t)$ to locally
construct $\bm{g}(\mathbf{x})$ and its gradient $\mathbf{G}(\mathbf{x})$
at the neighborhood of $\mathbf{x}_{\text{s}}(t)$.

\subsection{Construction of a Local Channel Map}

\label{subsec:Construction-of-a-Local-Channel-Model}

It is well-known that directly estimating a global propagation model
$g_{k}(\mathbf{x})$ is difficult, as it leads to nonlinear regression.
Instead, for each node $k$, we adopt a linear model $\hat{g}(\mathbf{x})$
to locally approximate the \ac{los} channel map $g_{k}(\mathbf{x})$
for the \ac{los} region $\mathcal{D}_{k}$ in the neighborhood of
$\mathbf{x}=\mathbf{c}_{0}$:
\begin{equation}
\hat{g}(\mathbf{x})=\alpha+\boldsymbol{\beta}^{\text{T}}(\mathbf{x}-\mathbf{c}_{0})\label{eq:linear-model-channel}
\end{equation}
where $\mathbf{\boldsymbol{\theta}}\triangleq[\alpha,\boldsymbol{\beta}^{\text{T}}]^{\text{T}}$,
in which, $\alpha\in\mathbb{R}$ and $\boldsymbol{\beta}=[\beta_{1},\beta_{2},\beta_{3}]^{\text{T}}\in\mathbb{R}^{3}$
are channel parameters to be estimated based on the \ac{los} measurements
modeled in (\ref{eq:measurement_with_noise}).

For each node $k$, let $\{(\mathbf{x}_{m},y_{m}),m=1,2,\dots,M\}$
be the set of measurements where all the measurements are assumed
taken in the \ac{los} case and $y_{m}$ is the noisy measurement
(\ref{eq:measurement_with_noise}) \ac{wrt} to user $k$ at $\mathbf{x}_{m}=\mathbf{x}(t_{m})$,
where the measurement noise $\xi_{m}$ is assumed to be independent.
Note that one can easily differentiate \ac{los} from \ac{nlos} measurements
by simply tracking the received signal strength. A least-squares solution
of $\mathbf{\boldsymbol{\theta}}$ can be derived as \cite{FanGij:B96,SunChe:J22}
\begin{equation}
\hat{\boldsymbol{\theta}}\triangleq\left[\begin{array}{c}
\hat{\alpha}\\
\hat{\boldsymbol{\beta}}
\end{array}\right]=\left(\tilde{\boldsymbol{\mathbf{X}}}^{\text{T}}\tilde{\boldsymbol{\mathbf{X}}}\right)^{-1}\tilde{\boldsymbol{\mathbf{X}}}^{\text{T}}\mathbf{y}\label{eq:least-square-solution}
\end{equation}
where $\mathbf{y}=[y_{1},y_{2},\dots,y_{M}]^{\text{T}}$, and $\tilde{\mathbf{X}}=[\boldsymbol{1},\mathbf{X}]$
in which, $\boldsymbol{1}$ is a column vector of all $1$s, and $\mathbf{\mathbf{X}}$
is an $M\times3$ matrix with the $m$th row given by $(\mathbf{x}_{m}-\mathbf{c}_{0})^{\text{T}}$.

As a result, for each user $k$, the channel gain and gradient at
$\mathbf{x}=\mathbf{c}_{0}$ for computing $\mathcal{V}(\mathbf{x})$
in (\ref{eq:solution-to-local-approximation-equipotential-point})
are estimated as $g_{k}(\mathbf{c}_{0})=\hat{\alpha}$ and $\nabla g_{k}(\mathbf{c}_{0})=\hat{\bm{\beta}}$.

\subsection{Optimal Measurement Pattern}

\label{subsec:Optimal-Measurement-Pattern}

From the least-squares solution $\hat{\boldsymbol{\theta}}$ in (\ref{eq:least-square-solution}),
the measurement pattern $\tilde{\boldsymbol{\mathbf{X}}}$ affects
the construction performance. We optimize the measurement pattern
$\tilde{\boldsymbol{\mathbf{X}}}$ via analyzing the estimation error
$\boldsymbol{\theta}^{\text{(e)}}=\hat{\boldsymbol{\theta}}-\boldsymbol{\theta}$.
\begin{thm}[Minimum variance]
\label{thm:variance-minimization}Given $d(\mathbf{x}_{m},\mathbf{c}_{0})\leq r_{1}$
for all $m=1,2,\dots,M$, the lower bound of the variance of the estimation
error $\boldsymbol{\theta}^{\text{{\rm (e)}}}$ is given by 
\begin{equation}
\text{{\rm tr}}\left\{ \mathbb{V}\{\boldsymbol{\theta}^{\text{{\rm (e)}}}\}\right\} \geq\frac{\sigma^{2}}{M}+\frac{9\sigma^{2}}{Mr_{1}^{2}}\label{eq:lower-bound-trace-of-variance}
\end{equation}
with equality achieved when the following conditions are satisfied
for all coordinates $j,j'\in\{1,2,3\}$: (i)$\sum_{m=1}^{M}(x_{mj}-c_{0j})=0$;
(ii) $\sum_{m=1}^{M}(x_{mj}-c_{0j})(x_{mj'}-c_{0j'})=0$, for $j\neq j'$;
(iii) $\sum_{m=1}^{M}(x_{mj}-c_{0j})^{2}=Mr_{1}^{2}/3$.
\end{thm}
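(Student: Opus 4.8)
The plan is to reduce the claim to a matrix‑optimization problem over the Gram matrix $\tilde{\mathbf{X}}^{\text{T}}\tilde{\mathbf{X}}$ and then settle it with a Cauchy--Schwarz argument on its eigenvalues. First I would note that the covariance of the estimate in~(\ref{eq:least-square-solution}) is governed purely by the noise: writing $\hat{\boldsymbol{\theta}}=(\tilde{\mathbf{X}}^{\text{T}}\tilde{\mathbf{X}})^{-1}\tilde{\mathbf{X}}^{\text{T}}\mathbf{y}$ with $\mathbb{V}\{\mathbf{y}\}=\sigma^{2}\mathbf{I}$ (the $\xi_{m}$ being independent of variance $\sigma^{2}$), one gets $\mathbb{V}\{\boldsymbol{\theta}^{\text{(e)}}\}=\mathbb{V}\{\hat{\boldsymbol{\theta}}\}=\sigma^{2}(\tilde{\mathbf{X}}^{\text{T}}\tilde{\mathbf{X}})^{-1}$, hence $\text{tr}\{\mathbb{V}\{\boldsymbol{\theta}^{\text{(e)}}\}\}=\sigma^{2}\,\text{tr}\{(\tilde{\mathbf{X}}^{\text{T}}\tilde{\mathbf{X}})^{-1}\}$. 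So it suffices to lower bound $\text{tr}\{(\tilde{\mathbf{X}}^{\text{T}}\tilde{\mathbf{X}})^{-1}\}$ over all patterns with $d(\mathbf{x}_{m},\mathbf{c}_{0})\le r_{1}$.

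Next I would block-partition the Gram matrix. Setting $\mathbf{z}_{m}=\mathbf{x}_{m}-\mathbf{c}_{0}$, $\mathbf{s}=\sum_{m}\mathbf{z}_{m}$, $\mathbf{A}=\sum_{m}\mathbf{z}_{m}\mathbf{z}_{m}^{\text{T}}$, we have $\tilde{\mathbf{X}}^{\text{T}}\tilde{\mathbf{X}}=\left[\begin{smallmatrix}M & \mathbf{s}^{\text{T}}\\\mathbf{s} & \mathbf{A}\end{smallmatrix}\right]$, and the block-inverse formula with Schur complement $\mathbf{S}=\mathbf{A}-\mathbf{s}\mathbf{s}^{\text{T}}/M$ gives $\text{tr}\{(\tilde{\mathbf{X}}^{\text{T}}\tilde{\mathbf{X}})^{-1}\}=1/M+\mathbf{s}^{\text{T}}\mathbf{S}^{-1}\mathbf{s}/M^{2}+\text{tr}\{\mathbf{S}^{-1}\}$. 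The middle term is nonnegative, and $\mathbf{s}\mathbf{s}^{\text{T}}\succeq\mathbf{0}$ yields $\mathbf{S}\preceq\mathbf{A}$, so $\mathbf{S}^{-1}\succeq\mathbf{A}^{-1}$ and $\text{tr}\{\mathbf{S}^{-1}\}\ge\text{tr}\{\mathbf{A}^{-1}\}$; both relaxations become equalities exactly when $\mathbf{s}=\mathbf{0}$, which is condition~(i). Hence $\text{tr}\{(\tilde{\mathbf{X}}^{\text{T}}\tilde{\mathbf{X}})^{-1}\}\ge 1/M+\text{tr}\{\mathbf{A}^{-1}\}$.

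It then remains to minimize $\text{tr}\{\mathbf{A}^{-1}\}$ subject to $\text{tr}\{\mathbf{A}\}=\sum_{m}\|\mathbf{z}_{m}\|_{2}^{2}\le Mr_{1}^{2}$. With $\lambda_{1},\lambda_{2},\lambda_{3}>0$ the eigenvalues of the $3\times3$ matrix $\mathbf{A}$, Cauchy--Schwarz gives $\big(\sum_{i}\lambda_{i}\big)\big(\sum_{i}\lambda_{i}^{-1}\big)\ge 9$, so $\text{tr}\{\mathbf{A}^{-1}\}\ge 9/\text{tr}\{\mathbf{A}\}\ge 9/(Mr_{1}^{2})$; equality forces $\lambda_{1}=\lambda_{2}=\lambda_{3}$ and $\text{tr}\{\mathbf{A}\}=Mr_{1}^{2}$, i.e., $\mathbf{A}=(Mr_{1}^{2}/3)\mathbf{I}$, which read entrywise is precisely conditions~(ii) (vanishing off-diagonals) and~(iii) (diagonal entries $Mr_{1}^{2}/3$). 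Chaining the two bounds and multiplying by $\sigma^{2}$ delivers~(\ref{eq:lower-bound-trace-of-variance}) together with its equality conditions.

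The main obstacle I anticipate is the bookkeeping in the middle step: one must check that the dropped cross term and the L\"owner bound $\mathbf{S}^{-1}\succeq\mathbf{A}^{-1}$ are made tight by the \emph{same} condition $\mathbf{s}=\mathbf{0}$, so that the three stated conditions are jointly necessary and sufficient, and one should also verify feasibility --- that a configuration inside the ball of radius $r_{1}$ attaining $\mathbf{s}=\mathbf{0}$ and $\mathbf{A}=(Mr_{1}^{2}/3)\mathbf{I}$ exists (for instance, for $M$ a multiple of $6$, placing the points at $\pm r_{1}\mathbf{e}_{j}$, $j=1,2,3$), so that the bound is attained and not merely formal. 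The eigenvalue inequality itself is routine once this reduction is set up.
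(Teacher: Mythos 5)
Your proposal is correct, and it reaches the bound by a genuinely different (and arguably cleaner) route than the paper. The paper first proves a standalone lemma, $\text{tr}\{\mathbf{A}^{-1}\}\geq\sum_{i}a_{ii}^{-1}$ for positive definite symmetric $\mathbf{A}$ with equality iff $\mathbf{A}$ is diagonal (itself established by a recursive block-inverse argument in the supplementary material), applies it to the full $4\times4$ Gram matrix $\tilde{\mathbf{X}}^{\text{T}}\tilde{\mathbf{X}}$ --- so that conditions (i) and (ii) appear together as ``all off-diagonals vanish'' --- and then uses the AM--GM inequality on the three reciprocal diagonal sums together with the budget $\sum_{j}\sum_{m}(x_{mj}-c_{0j})^{2}\leq Mr_{1}^{2}$ to extract $9/(Mr_{1}^{2})$ and condition (iii). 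You instead peel off the $1/M$ term exactly via the Schur complement, use L\"owner monotonicity of the inverse to reduce to $\text{tr}\{\mathbf{A}^{-1}\}$ for the centered second-moment matrix $\mathbf{A}=\sum_{m}\mathbf{z}_{m}\mathbf{z}_{m}^{\text{T}}$ (with condition (i) emerging as the common tightness condition for both relaxations, which you correctly verify), and then apply the Cauchy--Schwarz bound $(\text{tr}\,\mathbf{A})(\text{tr}\,\mathbf{A}^{-1})\geq 9$ on the eigenvalues, whose equality case $\mathbf{A}=(Mr_{1}^{2}/3)\mathbf{I}$ delivers (ii) and (iii) in one stroke. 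What your route buys is that it dispenses with the paper's general-$N$ trace lemma entirely and performs the final optimization in a basis-free way, so the equality analysis is immediate; what the paper's route buys is that the coordinate-wise conditions are read off directly from the off-diagonal entries. Your explicit feasibility check (points at $\pm r_{1}\mathbf{e}_{j}$) is a welcome addition the paper only gestures at informally after the theorem statement. The only point of care --- that the several dropped terms are all made tight by the same conditions --- is one you identified and resolved correctly.
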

\begin{proof}
See Appendix~\ref{sec:Proof-of-Theorem-variance-minimization}.
\end{proof}
Theorem~\ref{thm:variance-minimization} indicates that to achieve
a minimum variance, the optimal measurement locations $\mathbf{x}_{m}$
should be distributed in an even and symmetric way about $\mathbf{c}_{0}$.
One possible realization is to distribute the measurements uniformly
on a ball with radius $Mr_{1}^{2}/3$ centered at $\mathbf{c}_{0}$.

The variance lower bound (\ref{eq:lower-bound-trace-of-variance})
consists of two terms where the first term $\sigma^{2}/M$ represents
the error of the estimated $g(\mathbf{c}_{0})=\alpha$, and the second
term represents the error of the estimated $\boldsymbol{\beta}$,
{\em i.e.}, the gradient $\nabla g(\mathbf{c}_{0})$ of the channel
gain model (see Appendix~\ref{sec:Proof-of-Theorem-MSE-Channel-Gain}).
Both terms decrease as the number of measurements $M$ increases.
In addition, the error of $\boldsymbol{\beta}$ decreases as the measurement
radius $r_{1}$ increases.

To derive the optimal measurement range $r_{1}$, we analyze the \ac{mse}
of the locally constructed channel $\hat{g}(\mathbf{x})$ as follows.
\begin{thm}[MSE of the estimated channel gain]
\label{thm:mse-of-estimated-channel-gain}Suppose that $\{\mathbf{x}_{m}\}$
for $m=1,2,\dots,M$ satisfy conditions (i)-(iii) in Theorem~\ref{thm:variance-minimization}.
For location $\mathbf{x}$ with $r_{0}=d(\mathbf{x},\mathbf{c}_{0})$,
the \ac{mse} of the locally constructed channel $\hat{g}(\mathbf{x})$
is upper bounded as
\begin{align}
\mathbb{E}\left\{ \left(\hat{g}(\mathbf{x})-g(\mathbf{x})\right)^{2}\right\} \leq & \frac{\sigma^{2}}{M}\left(1+\frac{3r_{0}^{2}}{r_{1}^{2}}\right)\label{eq:upper-bound-local-channel-model}\\
 & +\frac{L_{g}^{2}}{4}\left(r_{1}^{2}+3r_{0}r_{1}+r_{0}^{2}\right)^{2}.\nonumber 
\end{align}
In addition, if $\mathbf{x}_{m}$ also satisfies $\|\mathbf{x}_{m}-\mathbf{c}_{0}\|^{2}=r_{1}^{2}$,
and if $g(\mathbf{x})$ can be locally approximated by a second order
model,\footnote{Under free-space propagation, the parameter $L'_{g}$ in Theorem~\ref{thm:mse-of-estimated-channel-gain}
can be calculated as $3.5\times10^{-3}$ at a propagation distance
of $50$ meters.} {\em i.e.}, $g(\mathbf{x})\approx g(\mathbf{c}_{0})+\bm{\beta}^{T}(\mathbf{x}-\mathbf{c}_{0})+L'_{g}\|\mathbf{x}-\mathbf{c}_{0}\|^{2}/2$,
then the \ac{mse} is approximately by 
\begin{align}
\mathbb{E}\left\{ \left(\hat{g}(\mathbf{x})-g(\mathbf{x})\right)^{2}\right\}  & \approx\frac{\sigma^{2}}{M}\left(1+\frac{3r_{0}^{2}}{r_{1}^{2}}\right)+\frac{(L_{g}')^{2}}{4}\left(r_{1}^{2}+r_{0}^{2}\right)^{2}.\label{eq:upper-bound-local-channel-model-1}
\end{align}
\end{thm}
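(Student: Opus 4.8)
The plan is to split the mean squared error of $\hat g$ into a squared bias, caused by fitting a linear model to a curved channel, and a variance, caused by the measurement noise $\xi_m$, and then to evaluate both terms in closed form using conditions (i)--(iii) of Theorem~\ref{thm:variance-minimization}.

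First I would set up the error algebra. Write $\mathbf{z}_m=[1,(\mathbf{x}_m-\mathbf{c}_0)^{\text{T}}]^{\text{T}}$ and $\mathbf{z}=[1,(\mathbf{x}-\mathbf{c}_0)^{\text{T}}]^{\text{T}}$, and let $\boldsymbol{\theta}=[g(\mathbf{c}_0),\nabla g(\mathbf{c}_0)^{\text{T}}]^{\text{T}}$ be the first-order Taylor coefficients of $g$ at $\mathbf{c}_0$. Define the model-mismatch residuals $e_m=g(\mathbf{x}_m)-\mathbf{z}_m^{\text{T}}\boldsymbol{\theta}$ and $\epsilon(\mathbf{x})=g(\mathbf{x})-\mathbf{z}^{\text{T}}\boldsymbol{\theta}$; the two-sided Lipschitz inequalities (\ref{eq:Lipschitz-g-upper})--(\ref{eq:Lipschitz-g-lower}) give $|e_m|\le\frac{L_g}{2}d^2(\mathbf{x}_m,\mathbf{c}_0)$ and $|\epsilon(\mathbf{x})|\le\frac{L_g}{2}r_0^2$. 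Because all $M$ measurements are taken in the LOS regime, $\mathbf{y}=\tilde{\mathbf{X}}\boldsymbol{\theta}+\mathbf{e}+\boldsymbol{\xi}$ with $\mathbf{e}=[e_1,\dots,e_M]^{\text{T}}$, so (\ref{eq:least-square-solution}) gives $\hat{\boldsymbol{\theta}}-\boldsymbol{\theta}=(\tilde{\mathbf{X}}^{\text{T}}\tilde{\mathbf{X}})^{-1}\tilde{\mathbf{X}}^{\text{T}}(\mathbf{e}+\boldsymbol{\xi})$ and hence $\hat{g}(\mathbf{x})-g(\mathbf{x})=\big[\mathbf{z}^{\text{T}}(\tilde{\mathbf{X}}^{\text{T}}\tilde{\mathbf{X}})^{-1}\tilde{\mathbf{X}}^{\text{T}}\mathbf{e}-\epsilon(\mathbf{x})\big]+\mathbf{z}^{\text{T}}(\tilde{\mathbf{X}}^{\text{T}}\tilde{\mathbf{X}})^{-1}\tilde{\mathbf{X}}^{\text{T}}\boldsymbol{\xi}$. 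The bracketed term is deterministic and the remaining term has zero mean with the $\xi_m$ independent of variance $\sigma^2$, so taking expectations the cross term drops and $\mathbb{E}\{(\hat g(\mathbf{x})-g(\mathbf{x}))^2\}=\big(\mathbf{z}^{\text{T}}(\tilde{\mathbf{X}}^{\text{T}}\tilde{\mathbf{X}})^{-1}\tilde{\mathbf{X}}^{\text{T}}\mathbf{e}-\epsilon(\mathbf{x})\big)^2+\sigma^2\mathbf{z}^{\text{T}}(\tilde{\mathbf{X}}^{\text{T}}\tilde{\mathbf{X}})^{-1}\mathbf{z}$.

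Next I would evaluate the two pieces under (i)--(iii). Conditions (i) and (ii) make $\tilde{\mathbf{X}}^{\text{T}}\tilde{\mathbf{X}}$ block-diagonal, and (iii) then gives $\tilde{\mathbf{X}}^{\text{T}}\tilde{\mathbf{X}}=\text{diag}(M,\tfrac{Mr_1^2}{3},\tfrac{Mr_1^2}{3},\tfrac{Mr_1^2}{3})$, so $\sigma^2\mathbf{z}^{\text{T}}(\tilde{\mathbf{X}}^{\text{T}}\tilde{\mathbf{X}})^{-1}\mathbf{z}=\sigma^2(\tfrac1M+\tfrac{3r_0^2}{Mr_1^2})$, the first term in (\ref{eq:upper-bound-local-channel-model}). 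For the bias, the same diagonal form gives $\mathbf{z}^{\text{T}}(\tilde{\mathbf{X}}^{\text{T}}\tilde{\mathbf{X}})^{-1}\tilde{\mathbf{X}}^{\text{T}}\mathbf{e}=\tfrac1M\sum_m e_m+\tfrac{3}{Mr_1^2}(\mathbf{x}-\mathbf{c}_0)^{\text{T}}\sum_m(\mathbf{x}_m-\mathbf{c}_0)e_m$, and I would bound this by the triangle inequality and Cauchy--Schwarz, using $|e_m|\le\tfrac{L_g}{2}d^2(\mathbf{x}_m,\mathbf{c}_0)$, $\sum_m d^2(\mathbf{x}_m,\mathbf{c}_0)=Mr_1^2$ (summing (iii) over the three coordinates), and $\sum_m d^3(\mathbf{x}_m,\mathbf{c}_0)\le r_1\sum_m d^2(\mathbf{x}_m,\mathbf{c}_0)$ from $d(\mathbf{x}_m,\mathbf{c}_0)\le r_1$. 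This yields $|\mathbf{z}^{\text{T}}(\tilde{\mathbf{X}}^{\text{T}}\tilde{\mathbf{X}})^{-1}\tilde{\mathbf{X}}^{\text{T}}\mathbf{e}-\epsilon(\mathbf{x})|\le\tfrac{L_g}{2}(r_1^2+3r_0r_1+r_0^2)$, and squaring gives the second term of (\ref{eq:upper-bound-local-channel-model}). For the refined estimate (\ref{eq:upper-bound-local-channel-model-1}), I would additionally use $\|\mathbf{x}_m-\mathbf{c}_0\|^2=r_1^2$ and the local second-order model, so $e_m=\tfrac{L_g'}{2}r_1^2$ is the \emph{same} constant for every $m$; then $\sum_m(\mathbf{x}_m-\mathbf{c}_0)e_m=\tfrac{L_g'r_1^2}{2}\sum_m(\mathbf{x}_m-\mathbf{c}_0)=\mathbf{0}$ by condition (i), the gradient part of the bias vanishes, and the bias collapses to $\tfrac{L_g'}{2}r_1^2-\epsilon(\mathbf{x})$ with $|\epsilon(\mathbf{x})|\le\tfrac{L_g'}{2}r_0^2$; squaring and adding the (unchanged) variance term gives (\ref{eq:upper-bound-local-channel-model-1}).

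The main obstacle is the bias term. Retaining only residual magnitudes, as above, produces the loose factor $r_1^2+3r_0r_1+r_0^2$; the sharper $r_1^2+r_0^2$ appears only once the isotropic second-order structure of $g$ forces all $e_m$ to coincide, after which symmetry condition (i) cancels the first-order (gradient) contribution to the bias --- this exact cancellation, rather than any hard inequality, is the crux of the refined bound. A secondary point worth checking is that (i)--(iii) are simultaneously compatible with $\|\mathbf{x}_m-\mathbf{c}_0\|^2=r_1^2$, e.g. by placing the $\mathbf{x}_m$ on the sphere of radius $r_1$ about $\mathbf{c}_0$ in a configuration whose first and second sample moments satisfy (i)--(iii), so that the refined estimate is not vacuous.
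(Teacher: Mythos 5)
Your decomposition and execution follow the paper's own proof almost exactly: the bias--variance split of the MSE, the observation that conditions (i)--(iii) make $\tilde{\mathbf{X}}^{\text{T}}\tilde{\mathbf{X}}=\mathrm{diag}\bigl(M,\tfrac{Mr_1^2}{3},\tfrac{Mr_1^2}{3},\tfrac{Mr_1^2}{3}\bigr)$, the exact variance term $\tfrac{\sigma^2}{M}\bigl(1+\tfrac{3r_0^2}{r_1^2}\bigr)$, and the Lipschitz-plus-Cauchy--Schwarz bound $\tfrac{L_g}{2}(r_1^2+3r_0r_1+r_0^2)$ on the bias all coincide with the steps leading to (\ref{eq:upper-bound-local-channel-model}); that part is correct and complete.

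For the refined estimate, however, your last sentence does not follow from your own (correct) intermediate expression. Under the second-order model with $\|\mathbf{x}_m-\mathbf{c}_0\|=r_1$ you have $e_m=\tfrac{L_g'}{2}r_1^2$ for all $m$ and, since $d(\mathbf{x},\mathbf{c}_0)=r_0$, the residual at the prediction point is $\epsilon(\mathbf{x})=\tfrac{L_g'}{2}r_0^2$ \emph{exactly}, not merely bounded by it. With your (correct) sign convention $\hat g(\mathbf{x})-g(\mathbf{x})=\mathbf{z}^{\text{T}}(\hat{\boldsymbol\theta}-\boldsymbol\theta)-\epsilon(\mathbf{x})$, the bias therefore collapses to $\tfrac{L_g'}{2}(r_1^2-r_0^2)$, and squaring yields $\tfrac{(L_g')^2}{4}(r_1^2-r_0^2)^2$, which is not the $\tfrac{(L_g')^2}{4}(r_1^2+r_0^2)^2$ claimed in (\ref{eq:upper-bound-local-channel-model-1}). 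To land on the stated form you must either bound $\bigl|\tfrac{L_g'}{2}r_1^2-\epsilon(\mathbf{x})\bigr|\leq\tfrac{L_g'}{2}(r_1^2+r_0^2)$ by the triangle inequality, which turns the approximation into an upper bound, or flip the sign of $\epsilon(\mathbf{x})$ --- which is in fact what the paper's appendix does, writing the bias as $[1,(\mathbf{x}-\mathbf{c}_0)^{\text{T}}]\,\mathbb{E}\{\hat{\boldsymbol\theta}-\bar{\boldsymbol\theta}\}+e(\mathbf{x},\mathbf{c}_0)$ rather than $-\,e(\mathbf{x},\mathbf{c}_0)$, a slip that is harmless for (\ref{eq:upper-bound-local-channel-model}) because only $|e|$ enters there, but is precisely what produces $(r_1^2+r_0^2)$ in (\ref{eq:bias-hat-gx}) and hence in (\ref{eq:upper-bound-local-channel-model-1}). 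So either state the refined result with $(r_1^2-r_0^2)^2$, or present $(r_1^2+r_0^2)^2$ explicitly as a triangle-inequality upper bound; as written, ``squaring \dots gives (\ref{eq:upper-bound-local-channel-model-1})'' is an unsupported jump.
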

\begin{proof}
See Appendix~\ref{sec:Proof-of-Theorem-MSE-Channel-Gain}.
\end{proof}
The result in Theorem~\ref{thm:mse-of-estimated-channel-gain} demonstrates
a trade-off in the measurement range $r_{1}$. When the channel model
$g(\mathbf{x})$ has a small Lipschitz constant $L_{g}$ in (\ref{eq:Lipschitz-g-upper})
and (\ref{eq:Lipschitz-g-lower}), corresponding to a small curvature,
a large range $r_{1}$ is preferred for a small variance in (\ref{eq:upper-bound-local-channel-model}),
leading to a small \ac{mse}. When the channel model $g(\mathbf{x})$
has a large $L_{g}$, a small range $r_{1}$ is preferred, because
the linear model (\ref{eq:linear-model-channel}) becomes less accurate
in the range $r_{0}$ for a large $L_{g}$.

The \ac{mse} upper bound in Theorem~\ref{thm:mse-of-estimated-channel-gain}
implies an optimal choice of the measurement range $r_{1}$. By minimizing
the approximated \ac{mse} (\ref{eq:upper-bound-local-channel-model-1}),
Table~\ref{tab:optimal_r} shows some numerical examples on the optimal
choice of $r_{1}$ under different values of $M$ and $r_{0}$, where
$\sigma=5$~dB and the parameter $L_{g}'$ is obtained from a free-space
propagation model evaluated for a neighborhood at a propagation distance
of $50$ meters.

\begin{table}
\caption{\label{tab:optimal_r}Optimal choice of $r_{1}$ {[}meter{]} under
different $r_{0}$ {[}meter{]} and $M$}

\renewcommand{\arraystretch}{1.2}
\centering{}%
\begin{tabular}{>{\centering}m{0.13\columnwidth}|>{\centering}m{0.15\columnwidth}|>{\centering}m{0.15\columnwidth}|>{\centering}m{0.15\columnwidth}|>{\centering}m{0.16\columnwidth}}
\hline 
\centering{}$r_{1}^{*}$ & \centering{}$M=40$ & \centering{}$M=60$ & \centering{}$M=80$ & \centering{}$M=100$\tabularnewline
\hline 
\centering{}$r_{0}=10$ & \centering{}$17$ & \centering{}$16$ & \centering{}$15$ & \centering{}$14$\tabularnewline
\hline 
\centering{}$r_{0}=20$ & \centering{}$20$ & \centering{}$18$ & \centering{}$17$ & \centering{}$16$\tabularnewline
\hline 
\centering{}$r_{0}=30$ & \centering{}$21$ & \centering{}$20$ & \centering{}$18$ & \centering{}$17$\tabularnewline
\hline 
\end{tabular}
\end{table}

\subsection{Measurement Trajectory Design}

\label{subsec:measurement-Trajectory-Design}

Here, we construct the measurement trajectory $\mathbf{x}_{\text{r}}(t)$
to meet conditions (i)--(iii) in Theorem~\ref{thm:variance-minimization}
for achieving a small error in locally constructing $\bm{g}(\mathbf{x})$
along the search $\mathbf{x}_{\text{s}}(t)$.

For the ease of elaboration, consider that the exploration direction
is given by $\dot{\mathbf{x}}_{\text{s}}=\mathbf{s}=[0,1,0]^{\text{T}}$
for a piece of trajectory $\mathbf{x}_{\text{s}}(t)$ centered at
$\mathbf{c}_{0}=[0,0,0]^{\text{T}}$ as illustrated in Fig.~\ref{fig:sphere-and-cylinder}.
One can construct a horizontal cylinder with length $2r_{1}/\sqrt{3}$
and radius $\sqrt{2/3}r_{1}$, where the measurement range $r_{1}$
is chosen according to Theorem~\ref{thm:mse-of-estimated-channel-gain}
for a good construction performance within a one-step exploration
range $r_{0}$ obtained while solving the equipotential surface tracking
problem (\ref{eq:distance-minimization-problem}). The orientation
of the cylinder is given by $\mathbf{s}$. It follows that if one
uniformly samples along the circumferences of the top and bottom bases
of the cylinder as shown by the blue dots in Fig.~\ref{fig:sphere-and-cylinder},
the resulting sampling locations $\{\mathbf{x}_{m}\}$ satisfy all
conditions in Theorem~\ref{thm:variance-minimization}.

Note that the above trajectory only visits two distinct positions
in the direction $\mathbf{s}$. Alternatively, along the search trajectory
$\mathbf{x}_{\text{s}}(t)=[0,vr_{1}(t-M/2),0]^{\text{T}}$ for a speed
parameter $v$, one may consider an \emph{alternating spiral trajectory}
$\mathbf{x}(t)=\mathbf{x}_{\text{s}}(t)+\mathbf{x}_{\text{r}}(t)$
with $\mathbf{x}_{\text{r}}(t)=[x_{\text{r}1}(t),0,x_{\text{r}3}(t)]^{\text{T}}$,
where
\begin{equation}
\begin{cases}
x_{\text{r}1}(t)=\sqrt{2/3}r_{1}\cos(\omega t)\\
x_{\text{r}3}(t)=\sqrt{2/3}r_{1}\sin(\omega t)(-1)^{\lfloor\omega t/(2\pi)\rfloor}
\end{cases}\label{eq:common-spiral-trajectory-equation}
\end{equation}
where $\omega=4k\pi/M$ and $v=2/\sqrt{M^{2}-1}$, with $k$ being
a natural number, typically $k=1$ (see Fig.~\ref{fig:sphere-and-cylinder}
(a)). One can easily verify that if we sample at $t=m-1/2$, {\em i.e.},
$\mathbf{x}_{m}=\mathbf{x}(m-1/2)$, for $m=1,2,\dots,M$, then conditions
(i)--(iii) in Theorem~\ref{thm:variance-minimization} are satisfied.
As a result, the \ac{mse} of the locally reconstructed channel $\hat{g}_{k}(\mathbf{x})$
is upper bounded by (\ref{eq:upper-bound-local-channel-model}).
\begin{figure}
\begin{centering}
\subfigure[]{\includegraphics[width=0.5\columnwidth]{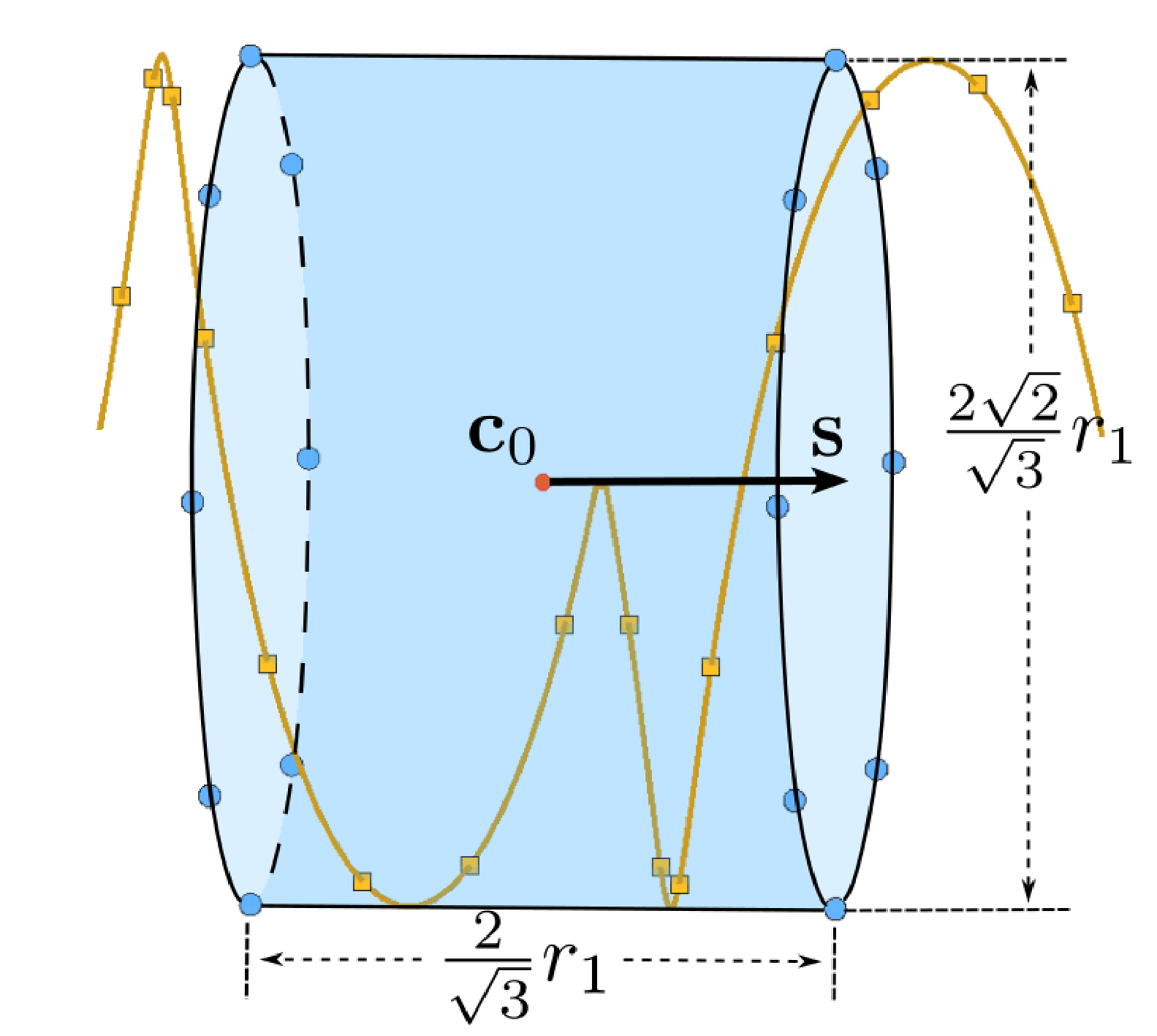}}\subfigure[]{\includegraphics[width=0.5\columnwidth]{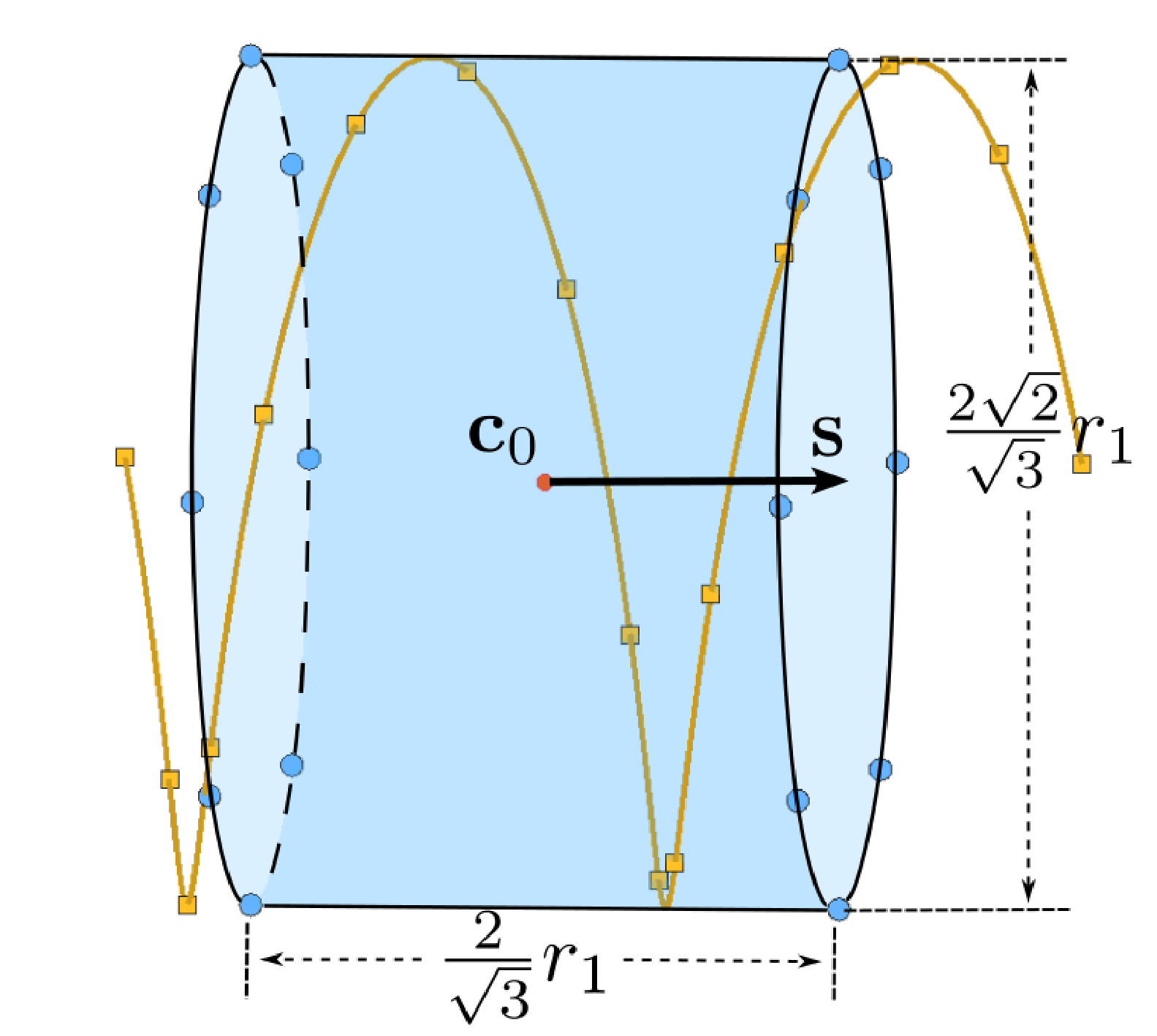}}
\par\end{centering}
\centering{}\caption{\label{fig:sphere-and-cylinder}(a) An alternating spiral trajectory
(orange dots) that satisfies conditions~(i)--(iii) in Theorem~\ref{thm:variance-minimization}.
(b) A spiral trajectory that satisfies conditions~(i) and (iii) in
Theorem~\ref{thm:variance-minimization} and it is smooth.}
\end{figure}

\section{Trajectory Design for Searching Optimal LOS Position on the Equipotential
Surface}

\label{sec:trajectory-design-for-searching-optimal-LOS-on-ES}

When the search is constrained on the equipotential surface $\mathcal{S}$
where it holds that $F(\bm{g}(\mathbf{x}))=f_{0}(g_{0}(\mathbf{x}))-F_{\text{u}}(\bm{g}_{\text{u}}(\mathbf{x}),\mathbf{p}^{*}(\mathbf{x}))=0$,
the original problem $\mathscr{P}$ becomes 
\begin{equation}
\begin{aligned}\mathop{\mbox{maximize}}\limits _{\mathbf{x},\mathbf{p}} & \quad F_{\text{u}}(\bm{g}_{\text{u}}(\mathbf{x}),\mathbf{p})\\
\mathop{\mbox{subject to}} & \quad\mathbf{x}\in\mathcal{S}\cap\tilde{\mathcal{D}},\\
 & \quad H_{n}(\bm{g}_{\text{u}}(\mathbf{x}),\mathbf{p})\leq0,n=1,2,\dots,N.
\end{aligned}
\label{general_formula-resource-allocation}
\end{equation}

It is very challenging to handle the constraint $\mathbf{x}\in\mathcal{S}$,
especially when the analytical form of $\mathcal{S}$ is not available.
Some classical approaches may consider projection-type algorithms,
where the position $\mathbf{x}(t)$ is projected back to $\mathcal{S}$
whenever $\mathbf{x}(t)$ is off $\mathcal{S}$, {\em e.g.}, via
the trajectory developed in Section~\ref{sec:Trajectory-Design-for-Equipotential-Surface}.
However, such a projection-type search is not suitable for \ac{uav}
trajectory design except for an initialization phase, because frequent
projections may cost a large amount of maneuver energy for the \ac{uav}.
Therefore, it is desired that the \ac{uav} only moves on $\mathcal{S}$.

In this section, we develop a search trajectory $\mathbf{x}_{\text{s}}(t)$
sticking on the equipotential plane $\mathcal{S}$ without projections.
The challenge is that a perturbation on $\mathbf{x}_{\text{s}}(t)$
may change the channel gain $g_{k}(\mathbf{x}_{\text{s}})$, and hence,
the optimal resource allocation $\mathbf{p}^{*}(\mathbf{x}_{\text{s}})$,
possibly resulting in $F(\bm{g}(\mathbf{x}_{\text{s}}))\neq0$. We
tackle this challenge via the perturbation theory.

\subsection{Trajectory on the Equipotential Surface}

\label{subsec:Trajectory-on-the-equipotential-surface}

We simplify the elaboration by temporarily ignoring $\mathbf{x}_{\text{r}}(t)$,
and hence, $\mathbf{x}(t)=\mathbf{x}_{\text{s}}(t)$. Start from a
position $\text{\ensuremath{\mathbf{x}}}(0)\in\mathcal{S}$, which
can be obtained from the trajectory in Section~\ref{sec:Trajectory-Design-for-Equipotential-Surface}.
To investigate the property of the trajectory $\text{\ensuremath{\mathbf{x}}}(t)\in\mathcal{S}$,
we analyze the optimality of (\ref{general_formula-resource-allocation})
via the Lagrangian approach as follows.

For the problem (\ref{general_formula-resource-allocation}), denote
the Lagrangian function as 
\[
L(\mathbf{p},\boldsymbol{\lambda};\bm{g}_{\text{u}}(\mathbf{x}))=F_{\text{u}}(\mathbf{p};\bm{g}_{\text{u}}(\mathbf{x}))-{\textstyle \sum_{n=1}^{N}}\lambda_{n}H_{n}(\mathbf{p};\bm{g}_{\text{u}}(\mathbf{x}))
\]
where $\boldsymbol{\lambda}=[\lambda_{1},\lambda_{2},\dots,\lambda_{N}]^{\text{T}}$.
The \ac{kkt} conditions are written as 
\begin{equation}
\mathbf{J}(\mathbf{p}(\mathbf{x}),\bm{\lambda}(\mathbf{x});\bm{g}_{\text{u}}(\mathbf{x}))\triangleq\left[\begin{array}{c}
\nabla_{\mathbf{p}}L(\mathbf{p},\boldsymbol{\lambda};\bm{g}_{\text{u}}(\mathbf{x}))\\
\lambda_{1}H_{1}(\mathbf{p};\bm{g}_{\text{u}}(\mathbf{x}))\\
\lambda_{2}H_{2}(\mathbf{p};\bm{g}_{\text{u}}(\mathbf{x}))\\
\vdots\\
\lambda_{N}H_{N}(\mathbf{p};\bm{g}_{\text{u}}(\mathbf{x}))
\end{array}\right]=0\label{eq:KKT-condition-0}
\end{equation}
together with $\lambda_{n}\geq0$ and $H_{n}(\mathbf{p}(\mathbf{x});\bm{g}_{\text{u}}(\mathbf{x}))\leq0$
for all $n$. It is known that for a strictly convex problem, there
is a unique solution $\{\mathbf{p}^{*},\boldsymbol{\lambda}^{*}\}$
to $\mathbf{J}(\mathbf{p}(\mathbf{x}),\bm{\lambda}(\mathbf{x});\bm{g}_{\text{u}}(\mathbf{x}))=0$
while satisfying $\lambda_{n}\geq0$ and $H_{n}(\mathbf{p}(\mathbf{x});\bm{g}_{\text{u}}(\mathbf{x}))\leq0$.

Since (\ref{eq:KKT-condition-0}) and $F(\bm{g}(\mathbf{x}))=0$ are
expected to be satisfied for all $\mathbf{x}(t)$, $t\geq0$, we must
have
\begin{equation}
\begin{cases}
\frac{\text{d}}{\text{d}t}\mathbf{J}(\mathbf{p}(\mathbf{x}(t)),\bm{\lambda}(\mathbf{x}(t));\bm{g}_{\text{u}}(\mathbf{x}(t))) & =0\\
\frac{\text{d}}{\text{d}t}F(\bm{g}(\mathbf{x}(t))) & =0
\end{cases}\label{eq:J-0-F-0}
\end{equation}
which leads to 
\begin{equation}
\begin{cases}
\nabla_{\mathbf{p}}\mathbf{J}^{\text{T}}\dot{\mathbf{p}}^{*}+\nabla_{\boldsymbol{\lambda}}\mathbf{J}^{\text{T}}\dot{\bm{\lambda}}^{*}+\nabla_{\bm{g}_{\text{u}}}\mathbf{J}^{\text{T}}\nabla\bm{g}_{\text{u}}^{\text{T}}\dot{\mathbf{x}} & =0\\
\nabla_{\mathbf{p}}F^{\text{T}}\dot{\mathbf{p}}^{*}+\nabla_{\bm{g}}F^{\text{T}}\nabla\bm{g}^{\text{T}}\dot{\mathbf{x}} & =0
\end{cases}\label{eq:J-0-F-0-equations}
\end{equation}
where we use the notation $\dot{\mathbf{p}}^{*}=\text{d}\mathbf{p}^{*}(t)/\text{d}t$,
$\dot{\bm{\lambda}}^{*}=\text{d}\bm{\lambda}^{*}(t)/\text{d}t$, and
$\dot{\mathbf{x}}=\text{d}\mathbf{x}(t)/\text{d}t$.

The above dynamical system (\ref{eq:J-0-F-0-equations}) specifies
a motion $\dot{\mathbf{x}}$ on the equipotential surface $\mathcal{S}$
with two spatial degrees of freedom. Suppose that the search is further
constrained on a plane that intersects with $\mathcal{S}$. Denote
$\mathbf{q}$ as the normal vector of the search plane, {\em i.e.},
$\mathbf{q}^{\text{T}}\dot{\mathbf{x}}=0$. Then, the dynamic of the
trajectory satisfies 
\begin{align}
\left[\begin{array}{ccc}
\nabla_{\mathbf{p}}\mathbf{J}^{\text{T}} & \nabla_{\boldsymbol{\lambda}}\mathbf{J}^{\text{T}} & \nabla_{\bm{g}_{\text{u}}}\mathbf{J}^{\text{T}}\nabla\bm{g}_{\text{u}}^{\text{T}}\\
\nabla_{\mathbf{p}}F^{\text{T}} & \mathbf{0} & \nabla_{\bm{g}}F^{\text{T}}\nabla\bm{g}^{\text{T}}\\
\mathbf{0} & \mathbf{0} & \mathbf{q}^{\text{T}}\\
\mathbf{0} & \mathbf{0} & \mathbf{v}^{\text{T}}
\end{array}\right]\left[\begin{array}{c}
\dot{\mathbf{p}}^{*}\\
\dot{\bm{\lambda}}^{*}\\
\dot{\mathbf{x}}
\end{array}\right] & =\left[\begin{array}{c}
\mathbf{0}\\
1
\end{array}\right]\label{eq:J-0-F-0-matrix}
\end{align}
where the vector $\mathbf{v}$ can be randomly chosen and the last
equality $\mathbf{v}^{\text{T}}\dot{\mathbf{x}}=1$ is to avoid the
trivial solution $\dot{\mathbf{x}}=\mathbf{0}$,\footnote{The physical meaning of $\mathbf{v}$ is to control the speed $\dot{\mathbf{x}}$
projected on the direction $\mathbf{v}$. For example, $\mathbf{v}=[0,0,1]^{\text{T}}$
specifies the speed $\dot{x}_{3}=1$.} and therefore, the system of equations~(\ref{eq:J-0-F-0-matrix})
is completely determined.

To derive a closed-form expression for $\dot{\mathbf{x}}$, let $\mathbf{A}_{1}=[\begin{array}{cc}
\nabla_{\mathbf{p}}\mathbf{J}^{\text{T}} & \nabla_{\boldsymbol{\lambda}}\mathbf{J}^{\text{T}}\end{array}]$, $\mathbf{A}_{2}=\nabla_{\bm{g}_{\text{u}}}\mathbf{J}^{\text{T}}\nabla\bm{g}_{\text{u}}^{\text{T}}$,
\begin{equation}
\mathbf{A}_{3}=\left[\begin{array}{cc}
\nabla_{\mathbf{p}}F^{\text{T}} & \mathbf{0}\\
\mathbf{0} & \mathbf{0}
\end{array}\right],\quad\mathbf{A}_{4}(\mathbf{q})=\left[\begin{array}{c}
\nabla F^{\text{T}}\nabla\bm{g}^{\text{T}}\\
\mathbf{q}^{\text{T}}\\
\mathbf{v}^{\text{T}}
\end{array}\right]\label{eq:equation-A4}
\end{equation}
and $\mathbf{e}_{3}=[0,0,1]^{\text{T}}$. Using the block matrix inversion
lemma, the dynamic $\dot{\mathbf{x}}$ as the solution to (\ref{eq:J-0-F-0-matrix})
can be derived as 
\begin{equation}
\dot{\mathbf{x}}=\mathcal{A}(\mathbf{x};\mathbf{q})\triangleq\left(\mathbf{A}_{4}(\mathbf{q})-\mathbf{A}_{3}\mathbf{A}_{1}^{-1}\mathbf{A}_{2}\right)^{-1}\mathbf{e}_{3}.\label{eq:solution-dx-dt}
\end{equation}

\begin{figure}
\begin{centering}
\subfigure[]{\includegraphics[width=0.5\columnwidth]{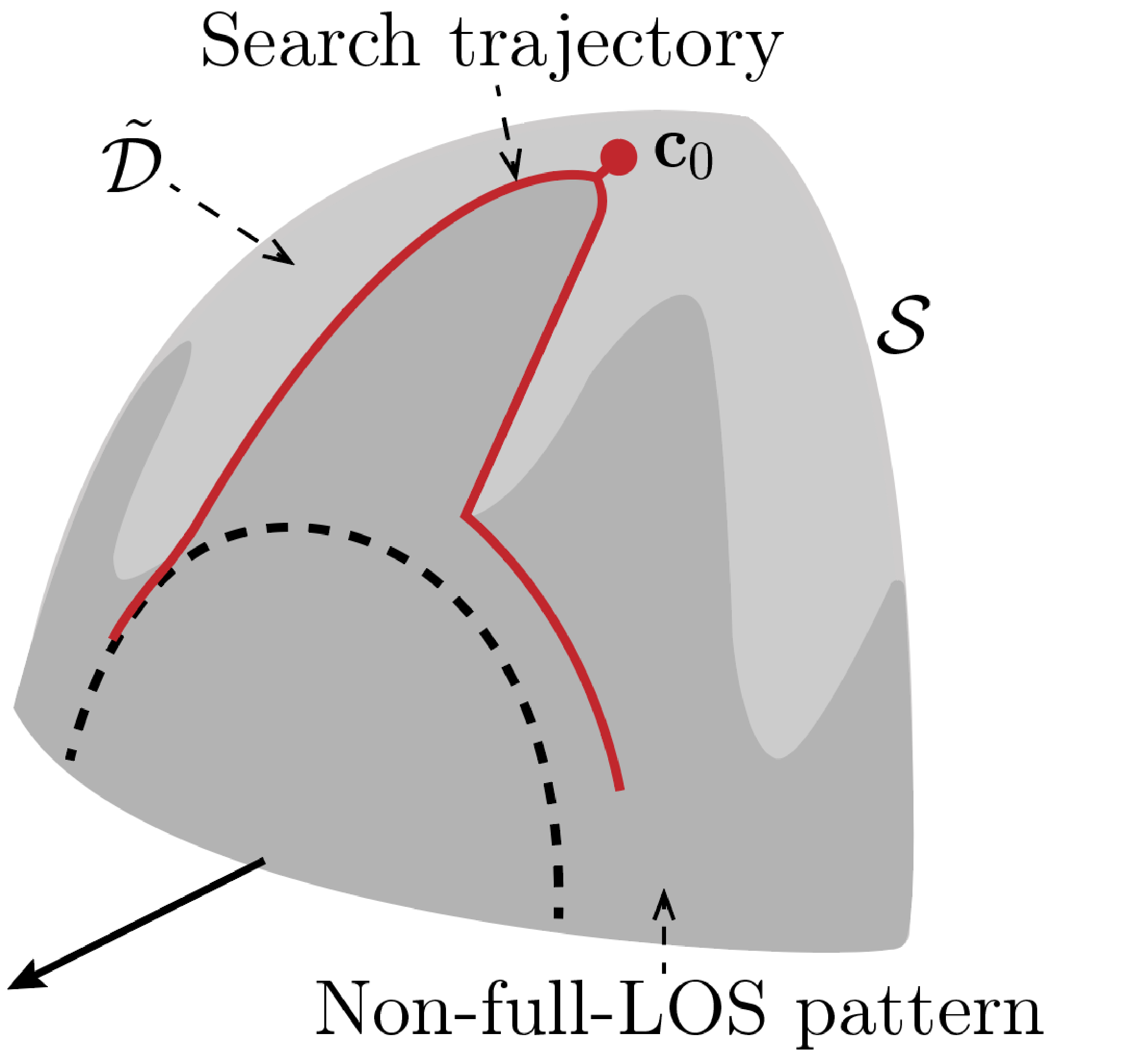}}\subfigure[]{\includegraphics[width=0.5\columnwidth]{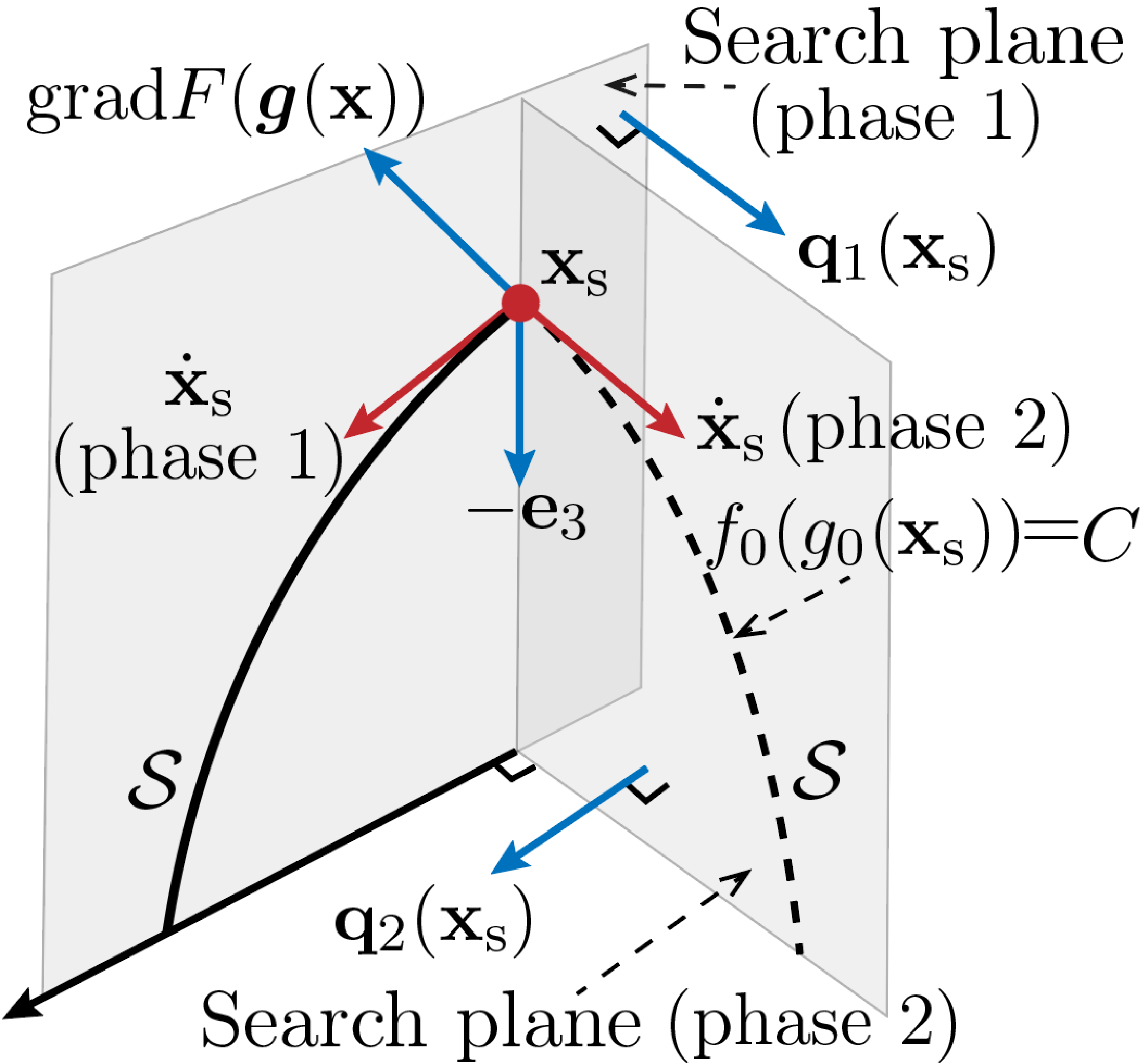}}
\par\end{centering}
\centering{}\caption{\label{fig:Trajectory-on-the-Equipotential-Surface}(a) LOS discovery
trajectory on the equipotential surface. (b) Search directions in
Phase~$1$ and $2$.}
\end{figure}

\subsection{\ac{los} Discovery on the Equipotential Surface}

\label{subsec:LOS-aware-Search-Trajectory}

The dynamical system (\ref{eq:J-0-F-0-matrix}) requires specifying
a search direction determined by the normal vector\textbf{ $\mathbf{q}$}
in (\ref{eq:J-0-F-0-matrix}). We present a search strategy to adaptively
determine the search direction via the normal vector $\mathbf{q}$
to discover the best \ac{los} opportunity that solves problem (\ref{general_formula-resource-allocation}).
The strategy is based on the following two properties.

First, consider the curve $f_{0}(g_{0}(\mathbf{x}))=C$ on the equipotential
surface $\mathbf{x}\in\mathcal{S}$, as illustrated by the dashed
curve in Fig.~\ref{fig:Trajectory-on-the-Equipotential-Surface}(a).
It follows that the lower the curve, {\em i.e.}, with a smaller radius,
the larger the objective value $f_{0}(g_{0}(\mathbf{x}))=F_{\text{u}}(\bm{g}_{\text{u}}(\mathbf{x}),\mathbf{p}^{*}(\mathbf{x}))$,
because $\mathbf{x}$ is closer to the \ac{bs} resulting in a larger
channel gain $g_{0}(\mathbf{x})$. Second, there is an upward invariant
property for the full LOS region $\tilde{\mathcal{D}}$ due to the
environment model in Section~\ref{subsec:Environment-and-Propagation-model},
{\em i.e.}, if $\mathbf{x}\notin\tilde{\mathcal{D}}$, the locations
below $\mathbf{x}$ are also in \ac{nlos}.

These observations inspire the following search strategy for an \ac{los}
discovery trajectory $\mathbf{x}_{\text{s}}(t)$.
\begin{itemize}
\item \textbf{Phase 1:} if $\mathbf{x}_{\text{s}}(t)\in\mathcal{S}\cap\tilde{\mathcal{D}}$,
then one should decrease the altitude of $\mathbf{x}_{\text{s}}(t)$
to discover a larger $f_{0}(g_{0}(\mathbf{x}_{\text{s}}(t)))$. Thus,
the search direction $\dot{\mathbf{x}}$ should lie on the tangent
plane of the equipotential surface $\mathcal{S}$ and be as close
to the downward vector $-\mathbf{e}_{3}=[0,0,-1]^{\text{T}}$ as possible.
Analytically, the normal vector $\mathbf{q}$ in (\ref{eq:J-0-F-0-matrix})
of the search plan that contains $\dot{\mathbf{x}}$ should be orthogonal
to both the normal vector $\text{grad}\,F(\bm{g}(\mathbf{x}_{\text{s}}))=\nabla F(\bm{g}(\mathbf{c}_{0}))^{\text{T}}\mathbf{G}(\mathbf{c}_{0})$
for $\mathcal{S}$ and the downward vector $-\mathbf{e}_{3}$, {\em i.e.},
\begin{equation}
\mathbf{q}_{1}(\mathbf{x}_{\text{s}})=\frac{\text{grad}\,F(\bm{g}(\mathbf{x}_{\text{s}}))\times(-\mathbf{e}_{3})}{\|\text{grad}\,F(\bm{g}(\mathbf{x}_{\text{s}}))\times(-\mathbf{e}_{3})\|_{2}}\label{eq:closed-form-q-in-LOS}
\end{equation}
where $\mathbf{a}\times\mathbf{b}$ denotes the cross product of $\mathbf{a}$
and $\mathbf{b}$.
\item \textbf{Phase 2:} if $\mathbf{x}_{\text{s}}(t)\in\mathcal{S}$ but
$\mathbf{x}\notin\tilde{\mathcal{D}}$, one explores the equipotential
surface following the curve $f_{0}(g_{0}(\mathbf{x}_{\text{s}}(t)))=C$
to discover an \ac{los} opportunity. Analytically, the curve $f_{0}(g_{0}(\mathbf{x}_{\text{s}}(t)))=C$
satisfies the following \ac{ode} $\frac{\text{d}}{\text{d}t}f_{0}(g_{0}(\mathbf{x}_{\text{s}}(t)))=\nabla f_{0}\nabla g_{0}^{\text{T}}\dot{\mathbf{x}}_{\text{s}}=0$,
and thus, the normal vector $\mathbf{q}$ is given by 
\begin{equation}
\mathbf{q}_{2}(\mathbf{x}_{\text{s}})=\text{\ensuremath{\frac{\nabla f_{0}(g_{0}(\mathbf{x}_{\text{s}}))\nabla g_{0}(\mathbf{x}_{\text{s}})^{\text{T}}}{\|\nabla f_{0}(g_{0}(\mathbf{x}_{\text{s}}))\nabla g_{0}(\mathbf{x}_{\text{s}})^{\text{T}}\|_{2}}}}.\label{eq:closed-form-q2-in-NLOS}
\end{equation}
\end{itemize}

\subsection{Superposed Trajectory via ODEs}

\label{subsec:Integrated-Trajectory-Design-Under-Unknown}

\subsubsection{The Superposed Trajectory}

\label{subsec:The-Superposed-Trajectory}

Here, we combine the search trajectory $\mathbf{x}_{\text{s}}(t)$
with the measurement trajectory $\mathbf{x}_{\text{r}}(t)$ developed
in Section~\ref{subsec:measurement-Trajectory-Design}, assuming
the initial state satisfies $\mathbf{x}_{\text{s}}(0)\in\mathcal{S}$.

First, from (\ref{eq:equipotential-surface-tracking}) and (\ref{eq:solution-dx-dt}),
the combined search trajectory is given by $\dot{\mathbf{x}}_{\text{s}}=\mathcal{A}(\mathbf{x}_{\text{s}}(t);\mathbf{q}(\mathbf{x}_{\text{s}}(t)))+\mu_{\text{v}}\mathcal{V}(\mathbf{x}_{\text{s}}(t))$,
where the first term is to search on the equipotential surface $\mathcal{S}$
according to the two exploration phases (\ref{eq:closed-form-q-in-LOS})
and (\ref{eq:closed-form-q2-in-NLOS}), and the second term is to
track the $\mathcal{S}$ in case $\mathbf{x}_{\text{s}}(t)$ deviates
from it due to implementation issues.

Second, consider the measurement trajectory $\mathbf{x}_{\text{r}}(t)=r[\cos(\omega t),\,0,\,\sin(\omega t)]^{\text{T}}$
developed in Section~\ref{subsec:measurement-Trajectory-Design},
which forms a circle on the plane with a normal vector $\mathbf{e}_{2}=[0,1,0]^{\text{T}}$.
Then, given the search direction $\dot{\mathbf{x}}_{\text{s}}$, one
can construct a rotation matrix $\mathbf{R}(\dot{\mathbf{x}}_{\text{s}})$
that rotates the coordinate system with the reference direction $\mathbf{e}_{2}$
to a new coordinate system with the reference direction $\dot{\mathbf{x}}_{\text{s}}/\|\dot{\mathbf{x}}_{\text{s}}\|_{2}$.
The rotation matrix $\mathbf{R}(\mathbf{s})$ to the reference direction
$\mathbf{s}=[s_{1},s_{2},s_{3}]^{\text{T}}$ is found as 
\[
\mathbf{R}(\mathbf{s})=\mathbf{I}-\frac{1}{\|\mathbf{s}\|_{2}}\left[\begin{array}{ccc}
\frac{s_{1}^{2}}{\|\mathbf{s}\|_{2}+s_{2}} & s_{1} & \frac{-s_{1}s_{3}}{\|\mathbf{s}\|_{2}+s_{2}}\\
-s_{1} & \|\mathbf{s}\|_{2}-s_{2} & -s_{3}\\
\frac{-s_{1}s_{3}}{\|\mathbf{s}\|_{2}+s_{2}} & s_{3} & \frac{s_{3}^{2}}{\|\mathbf{s}\|_{2}+s_{2}}
\end{array}\right].
\]

The dynamical equation for the superposed UAV search trajectory $\mathbf{x}(t)=\mathbf{x}_{\text{s}}(t)+\mathbf{x}_{\text{r}}(t)$
then becomes 
\begin{align}
\dot{\mathbf{x}} & =\dot{\mathbf{x}}_{\text{s}}+\mathbf{R}(\dot{\mathbf{x}}_{\text{s}})\dot{\mathbf{x}}_{\text{r}}+\frac{\text{d}}{\text{d}t}\mathbf{R}(\dot{\mathbf{x}}_{\text{s}})\mathbf{x}_{\text{r}}(t)\label{eq:Dynamical-Equation-Trajectory-x}\\
\dot{\mathbf{x}}_{\text{s}} & =\mathcal{A}(\mathbf{x}_{\text{s}}(t);\mathbf{q}(\mathbf{x}_{\text{s}}(t)))+\mu_{\text{v}}\mathcal{V}(\mathbf{x}_{\text{s}}(t))\label{eq:Dynamical-Equation-Trajectory-xs}
\end{align}
where $\dot{\mathbf{x}}_{\text{r}}=\text{d}\mathbf{x}_{\text{r}}(t)/\text{d}t=r\omega[-\sin(\omega t),\,0,\,\cos(\omega t)]^{\text{T}}$
and $\frac{\text{d}}{\text{d}t}\mathbf{R}(\dot{\mathbf{x}}_{\text{s}})=\nabla\mathbf{R}(\dot{\mathbf{x}}_{\text{s}})(\mathbf{I}_{3}\otimes\ddot{\mathbf{x}}_{\text{s}})$.
Here, the operator $\nabla\mathbf{R}$ gives a matrix with $3\times3$
blocks, where the $(i,j)$th block is $[\frac{\partial R_{ij}}{\partial x_{1}},\frac{\partial R_{ij}}{\partial x_{2}},\frac{\partial R_{ij}}{\partial x_{3}}]$,
$\otimes$ is the Kronecker product, and $\ddot{\mathbf{x}}_{\text{s}}=\text{d}^{2}\mathbf{x}_{\text{s}}/\text{d}t^{2}$.
In (\ref{eq:Dynamical-Equation-Trajectory-x}), the second term creates
a spiral trajectory surrounding the main search route $\mathbf{x}_{\text{s}}(t)$
for collecting the channel measurement data, and the third term generates
the adjustment due to the potential time variation of the search direction
$\dot{\mathbf{x}}_{\text{s}}$.

\subsubsection{Implementation}

The analytical form of $\frac{\text{d}}{\text{d}t}\mathbf{R}(\dot{\mathbf{x}}_{\text{s}})$
in (\ref{eq:Dynamical-Equation-Trajectory-x}) requires the second-order
derivative of the search trajectory $\mathbf{x}_{\text{s}}(t)$, which
is not available. A simple solution is to use numerical approximations
$\frac{\text{d}}{\text{d}t}\mathbf{R}(\dot{\mathbf{x}}_{\text{s}})\approx\frac{1}{\tau}(\mathbf{R}(\dot{\mathbf{x}}_{\text{s}}(t))-\mathbf{R}(\dot{\mathbf{x}}_{\text{s}}(t-\tau)))$
for a small enough $\tau>0$. Alternatively, we find the following
approximations.

First, when the search $\mathbf{x}_{\text{s}}(t)$ remains in Phase~1
or Phase~2 as specified in Section~\ref{subsec:LOS-aware-Search-Trajectory},
we likely have $\ddot{\mathbf{x}}_{\text{s}}\approx\mathbf{0}$, and
thus the term $\frac{\text{d}}{\text{d}t}\mathbf{R}(\dot{\mathbf{x}}_{\text{s}})\mathbf{x}_{\text{r}}(t)$
can be simply ignored. This is because $\mathbf{x}_{\text{s}}(t)$
is a trajectory on the equipotential surface $\mathcal{S}$, and thus
$\frac{\text{d}}{\text{d}t}\mathbf{R}(\dot{\mathbf{x}}_{\text{s}})$
represents the supplementary rotation due to the curvature of $\mathcal{S}$,
which is relatively small in practical regime of interest compared
to the other terms.\footnote{For example, for a balancing problem where the equipotential plane
$\mathcal{S}$ is a sphere under some mild conditions (Proposition
\ref{prop:shape_of_equipotential_surface}), it is known that the
curvature of a sphere with radius $R$ is $1/R^{2}$, which is small
as $R$ is at the order of a hundred meters.} Hence, the dynamical equation for $\mathbf{x}(t)$ in this case is
approximated as
\begin{align}
\dot{\mathbf{x}} & \approx\dot{\mathbf{x}}_{\text{s}}+\mathbf{R}(\dot{\mathbf{x}}_{\text{s}})\dot{\mathbf{x}}_{\text{r}}.\label{eq:Dynamical-Equation-Trajectory-x-NoSwitch}
\end{align}

Second, when the search $\mathbf{x}_{\text{s}}(t)$ needs to switch,
for instance, from Phase~1 to Phase~2 at time $t=t_{1}$, the search
direction needs to switch between $\dot{\mathbf{x}}_{\text{s}}(t_{1}^{-})=\mathcal{A}(\mathbf{x}_{\text{s}}(t_{1}^{-});\mathbf{q}_{1}(\mathbf{x}_{\text{s}}(t_{1}^{-})))$
and $\dot{\mathbf{x}}_{\text{s}}(t_{1}^{+})=\mathcal{A}(\mathbf{x}_{\text{s}}(t_{1}^{+});\mathbf{q}_{2}(\mathbf{x}_{\text{s}}(t_{1}^{+})))$,
assuming $\mathcal{V}(\mathbf{x}_{\text{s}}(t))=0$ for simplicity.
Thus, $\frac{\text{d}}{\text{d}t}\mathbf{R}(\dot{\mathbf{x}}_{\text{s}})$
does not exist as $\ddot{\mathbf{x}}_{\text{s}}$ does not exist,
since $\dot{\mathbf{x}}_{\text{s}}(t_{1}^{-})\neq\dot{\mathbf{x}}_{\text{s}}(t_{1}^{+})$.
To circumvent this issue, a transition phase $t\in(t_{1},t_{1}+\tau)$
is needed, where without altering $\mathbf{x}_{\text{s}}(t)$, \emph{i.e.},
$\dot{\mathbf{x}}_{\text{s}}=0$, we smoothly switch $\dot{\mathbf{x}}_{\text{s}}$
from $\dot{\mathbf{x}}_{\text{s}}(t_{1}^{-})$ to $\dot{\mathbf{x}}_{\text{s}}(t_{1}^{+})$
using a linear transition
\begin{equation}
\dot{\mathbf{x}}_{\text{t}}=\frac{\tau-t+t_{1}}{\tau}\dot{\mathbf{x}}_{\text{s}}(t_{1}^{-})+\frac{t-t_{1}}{\tau}\dot{\mathbf{x}}_{\text{s}}(t_{1}^{+}),\:t\in(t_{1},t_{1}+\tau)\label{eq:Transition-Phase-x-s-dynamic}
\end{equation}
which yields $\ddot{\mathbf{x}}_{\text{t}}=\frac{1}{\tau}(\dot{\mathbf{x}}_{\text{s}}(t_{1}^{+})-\dot{\mathbf{x}}_{\text{s}}(t_{1}^{-}))$.
As a result, the dynamical equation for $\mathbf{x}(t)$ becomes,
for $t\in(t_{1},t_{1}+\tau)$, 
\begin{align}
\dot{\mathbf{x}} & =\mathbf{R}(\dot{\mathbf{x}}_{\text{t}})\dot{\mathbf{x}}_{\text{r}}+\frac{1}{\tau}\nabla\mathbf{R}(\dot{\mathbf{x}}_{\text{t}})(\mathbf{I}_{3}\otimes(\dot{\mathbf{x}}_{\text{s}}(t_{1}^{+})-\dot{\mathbf{x}}_{\text{s}}(t_{1}^{-})))\mathbf{x}_{\text{r}}(t).\label{eq:Dynamical-Equation-Trajectory-x-Switch-x}
\end{align}

A sample implementation is summarized in Algorithm~\ref{alg:algorithm_equip}.

\subsubsection{Complexity Analysis}

\label{subsec:Complexity-Analysis}

As Algorithm~\ref{alg:algorithm_equip} is an online search scheme,
we investigate two different metrics for an understanding of its complexity:
the trajectory length of the \ac{uav} and the per-step computational
complexity every time the UAV adjusts its course.

The following proposition shows the trajectory length is linear in
the radius of the equipotential surface under certain conditions.
\begin{prop}[Upper bound of trajectory length]
\label{prop:upper-bound-trajectory-length}Suppose that $g_{k}(\mathbf{x})=b_{0}-10\log_{10}(d^{2}(\mathbf{x},\mathbf{u}_{k}))$.
For the balancing problem defined in (\ref{general_formula-balancing-problem})
with $f_{k}(g_{k}(\mathbf{x}),p_{k})=\log_{2}(1+p_{k}g_{k}(\mathbf{x}))$,
the trajectory length $L$ is upper bounded as
\begin{equation}
L\leq\pi(H_{0}+R)\sqrt{3\pi^{2}r_{1}^{2}+1}\label{eq:upper-bound-trajectory-length}
\end{equation}
where $H_{0}$ is the initial search altitude, $R$ is the radius
of the equipotential surface given in (\ref{eq:radius_2}), and $r_{1}$
is the measurement range.
\end{prop}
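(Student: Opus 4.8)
The plan is to write $L=\int_0^T\lVert\dot{\mathbf x}(t)\rVert\,\mathrm dt$ for the superposed trajectory $\mathbf x(t)=\mathbf x_{\text s}(t)+\mathbf x_{\text r}(t)$ and to split it as (length $L_{\text s}$ of the search path $\mathbf x_{\text s}$ on the equipotential surface) times (a constant overhead caused by the measurement spiral $\mathbf x_{\text r}$). By Proposition~\ref{prop:shape_of_equipotential_surface}, under the stated channel model and objective, $\mathcal S$ is a sphere with centre $\mathbf o$ and radius $R$ given by \eqref{eq:radius_2}, so $\mathbf x_{\text s}(t)$ — obtained from \eqref{eq:Dynamical-Equation-Trajectory-xs} while ignoring the lower-order corrective term $\mu_{\text v}\mathcal V$ — is a curve on this sphere.

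First I would bound $L_{\text s}=\int_0^T\lVert\dot{\mathbf x}_{\text s}\rVert\,\mathrm dt$. The crucial monotonicity is that $d(\mathbf x_{\text s}(t),\mathbf u_0)$ is non-increasing, equivalently $g_0(\mathbf x_{\text s}(t))$ and $f_0(g_0(\mathbf x_{\text s}(t)))$ are non-decreasing, along the trajectory: in Phase~1 the direction $\dot{\mathbf x}_{\text s}$ lies on the tangent plane of $\mathcal S$ and points as far toward $-\mathbf e_3$ as possible, which brings the UAV strictly closer to the BS, whereas in Phase~2 it follows the level curve $f_0(g_0(\mathbf x_{\text s}))=C$, along which $d(\cdot,\mathbf u_0)$ is constant. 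Hence the Phase~1 pieces are arcs of the ``meridians'' of $\mathcal S$ about the axis through $\mathbf o$ and $\mathbf u_0$, and because they sweep pairwise disjoint ranges of $d(\cdot,\mathbf u_0)$ — hence disjoint intervals of the corresponding polar angle $\theta\in[0,\pi]$ — their total length is at most half a great circle, i.e.\ $\pi R$. The Phase~2 pieces lie on the circles $\mathcal S\cap\{d(\cdot,\mathbf u_0)=\text{const}\}$, each of radius at most $R$; using the monotonicity together with the upward-invariant property of $\tilde{\mathcal D}$ (once an entire level circle is in NLOS everything below it is, so the search terminates) I would bound their cumulative length by a term of order $\pi H_0$, where $H_0$ is the initial search altitude. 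Adding the two contributions gives $L_{\text s}\le\pi(H_0+R)$.

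Next I would pass from $L_{\text s}$ to $L$ using the dynamics \eqref{eq:Dynamical-Equation-Trajectory-x-NoSwitch}, $\dot{\mathbf x}\approx\dot{\mathbf x}_{\text s}+\mathbf R(\dot{\mathbf x}_{\text s})\dot{\mathbf x}_{\text r}$. Since $\mathbf R(\dot{\mathbf x}_{\text s})$ is a rotation, $\lVert\mathbf R(\dot{\mathbf x}_{\text s})\dot{\mathbf x}_{\text r}\rVert=\lVert\dot{\mathbf x}_{\text r}\rVert$, and since $\dot{\mathbf x}_{\text r}$ is orthogonal to the reference direction $\mathbf e_2$ of the measurement circle, $\mathbf R(\dot{\mathbf x}_{\text s})\dot{\mathbf x}_{\text r}$ is orthogonal to $\dot{\mathbf x}_{\text s}$; hence $\lVert\dot{\mathbf x}\rVert^2=\lVert\dot{\mathbf x}_{\text s}\rVert^2+\lVert\dot{\mathbf x}_{\text r}\rVert^2$ pointwise. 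With the spiral frequency $\omega$ and radius tied to the search speed so that the samples meet conditions (i)--(iii) of Theorem~\ref{thm:variance-minimization}, the ratio $\lVert\dot{\mathbf x}_{\text r}\rVert/\lVert\dot{\mathbf x}_{\text s}\rVert$ is a constant $\kappa$ with $\kappa^2\le 3\pi^2 r_1^2$, so $\lVert\dot{\mathbf x}\rVert\le\sqrt{3\pi^2 r_1^2+1}\,\lVert\dot{\mathbf x}_{\text s}\rVert$; integrating and inserting $L_{\text s}\le\pi(H_0+R)$ yields \eqref{eq:upper-bound-trajectory-length}. It remains to check that the phase-transition segments and the curvature term $\tfrac{\mathrm d}{\mathrm dt}\mathbf R(\dot{\mathbf x}_{\text s})\mathbf x_{\text r}$ in \eqref{eq:Dynamical-Equation-Trajectory-x} add only lower-order length, which follows from the same estimate used after \eqref{eq:Dynamical-Equation-Trajectory-x}, the curvature of a radius-$R$ sphere being $O(1/R^2)$.

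I expect the Phase~2 accounting to be the main obstacle: a priori the search may alternate between Phases~1 and~2 many times while descending through several disjoint NLOS bands, so the naive estimate ``(number of circular excursions)$\times 2\pi R$'' is not finite. The argument must charge each circular excursion against the strictly positive decrease of $d(\cdot,\mathbf u_0)$ produced by the subsequent Phase~1 segment, and combine the upward/colinear invariance of the LOS regions with the termination rule to cap the total excursion length at $O(H_0)$. A secondary, more routine, difficulty is making the overhead constant $\kappa$ precise — in particular keeping the $M$-dependence of $\omega$ and $v$ under control so that $\kappa^2\le 3\pi^2 r_1^2$ holds uniformly in $M$.
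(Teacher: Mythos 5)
Your proposal follows essentially the same route as the paper: split the superposed trajectory into the search path on $\mathcal{S}$, bounded by $\pi R$ for the Phase-1 arc plus $\pi H_{0}$ for the Phase-2 arc along the level curve $f_{0}(g_{0}(\mathbf{x}_{\text{s}}))=C$, and then multiply by the spiral overhead, which with $\omega=4\pi/M$ and $v=2/\sqrt{M^{2}-1}$ comes out to $\sqrt{8(M^{2}-1)\pi^{2}r_{1}^{2}/(3M^{2})+1}\leq\sqrt{3\pi^{2}r_{1}^{2}+1}$, exactly the constant you anticipate and uniform in $M$. The two difficulties you flag are resolved (or sidestepped) in the paper's proof: the Phase-2 portion is simply treated as an arc of radius at most $H_{0}$ with central angle at most $\pi$ (no multi-excursion charging argument is given), and the $M$-dependence of the overhead is controlled by the explicit computation above, so your plan matches the published argument.
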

\begin{proof}
See Appendix~\ref{sec:Proof-of-Proposition-Trajectory-Length} in
\cite{ZheChe:J24}.
\end{proof}

Proposition~\ref{prop:upper-bound-trajectory-length} suggests that
reducing the initial search altitude decreases the upper bound of
the trajectory length $L$. Additionally, the trajectory length $L$
is related to the radius $R$ of the equipotential surface. In a special
case where ${\bf u}_{0}=[0,0,0]^{\text{T}}$, $K=1$, and $P_{0}>P_{\text{T}}>0$,
$R^{2}$ in (\ref{eq:radius_2}) simplifies to $R^{2}=P_{0}/(P_{0}-P_{\text{T}})\left(P_{0}/(P_{0}-P_{\text{T}})-1\right)d^{2}({\bf u}_{0},{\bf u}_{1})$,
which indicates that $R$ is linear in $d({\bf u}_{0},{\bf u}_{1})$.
Thus, the worst-case search trajectory length is linear in the BS-user
distance $d({\bf u}_{0},{\bf u}_{1})$.

Recall that $K$ is the number of users, $M$ is the number of measurements
used for local channel construction, and $N$ is the number of constraints
for resource allocation in problem~(\ref{general_formula-resource-allocation}).
The computational complexity of Steps~3 and 4 in Algorithm~\ref{alg:algorithm_equip}
is found as $O(KM+(K+N)^{3})$. See Appendix~\ref{sec:Per-step-Computational-Complexity}
in \cite{ZheChe:J24} for a detailed derivation of the computational
complexity.

\begin{algorithm}
Find a full-LOS initial position $\mathbf{x}_{0}\in\mathcal{S}\cap\tilde{\mathcal{D}}$.
Denote $\mathbf{x}_{\text{r}}(t)=r[\cos(\omega t),\,0,\,\sin(\omega t)]^{\text{T}}$.
\begin{enumerate}
\item \label{enu:Initialization}Initialization at $t=0$: $\tilde{\mathbf{x}}=\mathbf{x}_{0}$,
$\mathbf{x}_{\text{s}}(0)=\mathbf{x}_{0}$, and $\mathbf{x}(0)=\mathbf{x}_{\text{s}}(0)+\mathbf{x}_{\text{r}}(0)$.
\item \textbf{While} $x_{3}(t)\geq H_{\text{min}}$:
\item \textbf{Local channel construction:} Collect channel measurements
for each time slot $\triangle t$. Construct the local channel $\hat{g}_{k}(\mathbf{x})$
and obtain the estimation of $\nabla g_{k}(\mathbf{x})$ for each
user based on model (\ref{eq:linear-model-channel}) with parameters
(\ref{eq:least-square-solution}) estimated from the past $M$ \ac{los}
measurements.
\item Update the UAV location $\mathbf{x}(t)$ according to the following
cases:
\begin{enumerate}
\item \textbf{Phase~1 (LOS):} If $\mathbf{x}(t-\Delta t),\mathbf{x}(t)\in\tilde{\mathcal{D}}$,
\begin{enumerate}
\item If $f_{0}(g_{0}(\mathbf{x}_{\text{s}}(t)))>f_{0}(g_{0}(\tilde{\mathbf{x}}))$,
then update the optimal position $\tilde{\mathbf{x}}\leftarrow\mathbf{x}_{\text{s}}(t)$.
\item Compute the search direction $\mathbf{s}(t)=\mathcal{A}(\mathbf{x}_{\text{s}}(t);\mathbf{q}_{1}(\mathbf{x}_{\text{s}}(t)))+\mu_{\text{v}}\mathcal{V}(\mathbf{x}_{\text{s}}(t))$
from (\ref{eq:solution-to-local-approximation-equipotential-point}),
(\ref{eq:solution-dx-dt}), and (\ref{eq:closed-form-q-in-LOS}).
\item \label{enu:Phase1-(LOS)-update}Update $\mathbf{x}_{\text{s}}(t+\triangle t)=\mathbf{x}_{\text{s}}(t)+\mathbf{s}\triangle t$
and $\mathbf{x}(t+\triangle t)=\mathbf{x}(t)+(\mathbf{s}+\mathbf{R}(\mathbf{s})\dot{\mathbf{x}}_{\text{r}})\triangle t$
according to (\ref{eq:Dynamical-Equation-Trajectory-x}) and (\ref{eq:Dynamical-Equation-Trajectory-x-NoSwitch}).
\end{enumerate}
\item \textbf{Phase~2 (NLOS):} If $\mathbf{x}(t-\Delta t),\mathbf{x}(t)\neq\tilde{\mathcal{D}}$,
\begin{enumerate}
\item Compute the search direction $\mathbf{s}(t)=\mathcal{A}(\mathbf{x}_{\text{s}}(t);\mathbf{q}_{2}(\mathbf{x}_{\text{s}}(t)))+\mu_{\text{v}}\mathcal{V}(\mathbf{x}_{\text{s}}(t))$
from (\ref{eq:solution-to-local-approximation-equipotential-point}),
(\ref{eq:solution-dx-dt}), and (\ref{eq:closed-form-q2-in-NLOS}).
\item The same as Step \ref{enu:Phase1-(LOS)-update}.
\end{enumerate}
\item \textbf{Otherwise (Transition Phase):\label{enu:Transition-Phase}}
\begin{enumerate}
\item Let $t_{1}\leftarrow t-\triangle t$.
\item \label{enu:transition-phase-processing}Compute $\dot{\mathbf{x}}_{\text{t}}$
and $\bm{\delta}=\dot{\mathbf{x}}$ respectively according to (\ref{eq:Transition-Phase-x-s-dynamic})
and (\ref{eq:Dynamical-Equation-Trajectory-x-Switch-x}), by replacing
``$\dot{\mathbf{x}}_{\text{s}}(t_{1}^{-})$'' with $\mathbf{s}(t_{1})$
and replacing ``$\dot{\mathbf{x}}_{\text{s}}(t_{1}^{+})$'' with
$\mathbf{s}(t_{1}+\triangle t)$.
\item Update $\mathbf{x}_{\text{s}}(t+\triangle t)=\mathbf{x}_{\text{s}}(t)$
and $\mathbf{x}(t+\triangle t)=\mathbf{x}(t)+\bm{\delta}\triangle t$.
\item Repeat from Step \ref{enu:transition-phase-processing} until $t\geq t_{1}+\tau$.
\end{enumerate}
\end{enumerate}
\item \textbf{End while}; output $\tilde{\mathbf{x}}$ as the best position
found on $\mathcal{S}\cap\tilde{\mathcal{D}}$.
\end{enumerate}
\caption{Superposed LOS discovery and measurement collection trajectory with
unknown user locations}

\label{alg:algorithm_equip}
\end{algorithm}

\section{Numerical Results}

\label{sec:Numerical-Results}

In this section, we present the experimental findings conducted on
two real 3D urban maps.

\subsection{Environment Setup and Scenarios}

\label{subsec:Environment-Setup-and-Scenarios}

Two city maps of different areas in Beijing, China are used to evaluate
the proposed scheme. As shown in Fig.~\ref{fig:city_maps}, map~A
represents a sparse commercial area with the building coverage ratio
(BCR) and floor area ratio (FAR) \cite{GonCreHerRod:J13} as $18\%$
and $1.0$, respectively, while map~B represents a dense residential
area with the BCR and FAR as $33\%$ and $1.86$, respectively. The
minimum flying altitudes $H_{\text{min}}$ are set as $29$ and $62$
meters on maps A and B, respectively, which correspond to the minimum
height of the top $20\%$ tallest buildings. The users are placed
uniformly at random in the non-building area for $2000$ repetitions,
with the \ac{bs} placed above an arbitrarily chosen building. For
the proposed scheme, $M$, $\omega$, $r$, $\mu_{\text{v}}$, and
$\tau$ are set as $100$, $\pi/25$, $25$, $1$, and $5$, respectively.

\selectlanguage{english}%
Two application scenarios are evaluated in our experiments. For a
sum-rate application, one UAV is placed to establish \ac{los} relay
channels for $K$ ground users and a \ac{bs} under decode-and-forward
relaying. Consider the path loss model of millimeter wave cellular
reported in \cite{CheMitGes:T21,HerHolStaBli:J10}, where \foreignlanguage{american}{the
deterministic \ac{los} channel gain $g_{k}(\mathbf{x})$ is modeled
as $g_{k}(\mathbf{x})=46.53+20.0\log_{10}d(\mathbf{x},\mathbf{u}_{k})$.
The variance $\sigma^{2}$ of measurement uncertainty $\xi$ in (\ref{eq:measurement_with_noise})
is set as $5$~dB. It is assumed that the \ac{mmw} beam alignment
has been done for every \ac{uav} position. For a balancing application,
}one \ac{uav} is placed to \foreignlanguage{american}{provide location
estimation services for $8$ sensing targets and maintain a backhaul
communication link with a \ac{bs} simultaneously. Particularly, the
}objective functions for sensing are specified as \ac{snr}, {\em i.e.},
$f_{k}(g_{k}(\mathbf{x}),p_{k})=p_{k}g_{k}(\mathbf{x})/(\bar{N}_{0}W)$
where the \foreignlanguage{american}{noise power spectral density
$\bar{N}_{0}$ is set as $-150$~dBm/Hz, and the bandwidth $W$ is
set as $1$~GHz.}
\begin{figure}
\selectlanguage{american}%
\begin{centering}
\includegraphics[width=1\columnwidth]{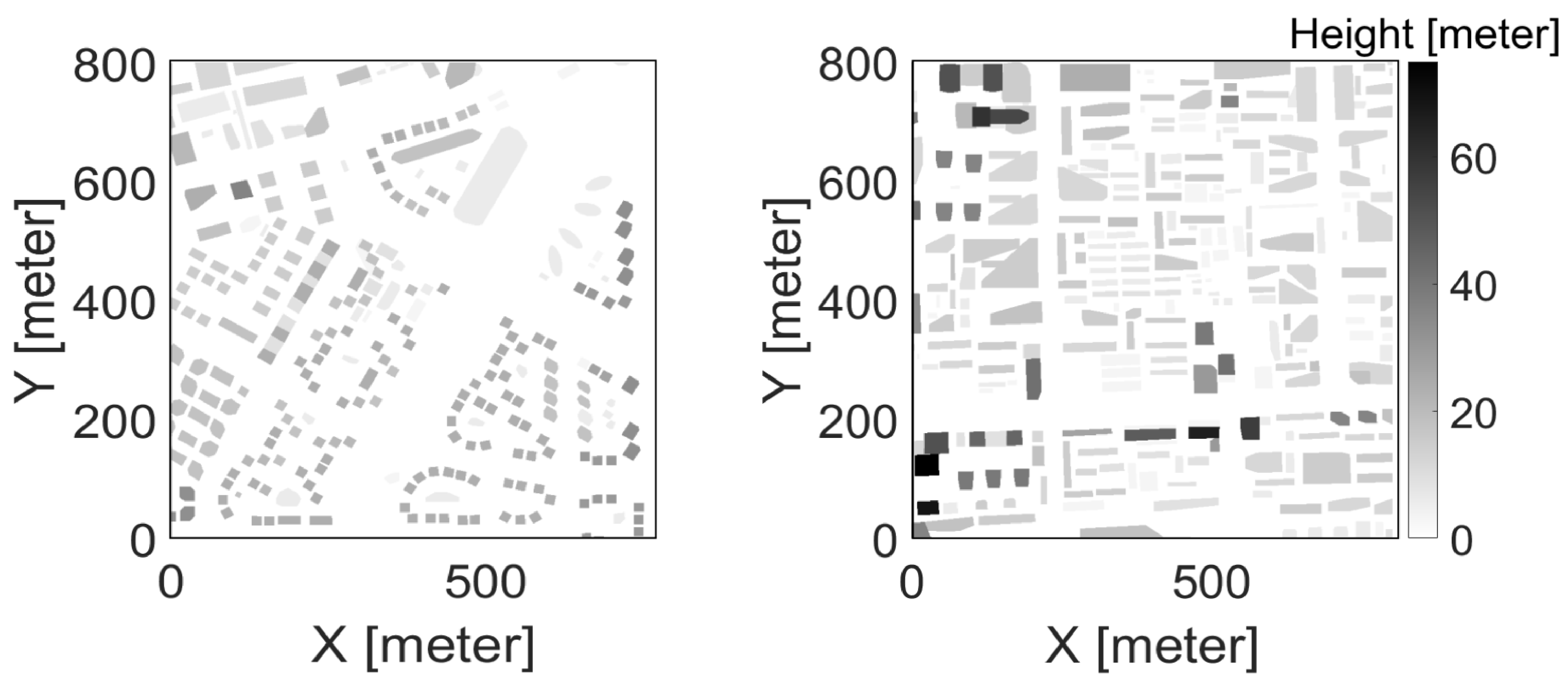}
\par\end{centering}
\caption{Map A (left) is a sparse commercial area, and map B (right) is a dense
residential area in Beijing, China.}

\label{fig:city_maps}\selectlanguage{english}%
\end{figure}

\selectlanguage{american}%
The baseline schemes are listed below, and the exhaustive search schemes
are implemented with a $5$-meter step size.
\begin{itemize}
\item \textit{Exhaustive 3D search (Exh3D)}: This scheme performs an exhaustive
search in 3D space above the area of interest.
\item \textit{Exhaustive 2D searc}\emph{h (Exh2D) }\cite{LyuZenZhaLim:J17}:
This scheme performs an exhaustive search at a height of $H_{\text{min}}+50$.
\item \textit{Statistical geometry (Statis)} \cite{AlhKanLar:J14,CheHua:J22}:
The average channel gain from the \ac{uav} position $\mathbf{x}$
to the $k$th user is formulated as $\tilde{g}_{k}(\mathbf{x})=\text{P}_{\text{L}}(\mathbf{x})g_{k}(\mathbf{x})+(1-\text{P}_{\text{L}}(\mathbf{x}))(g_{k}(\mathbf{x})+\phi(\mathbf{x}))$
where the power penalty $\phi(\mathbf{x})$ for \ac{nlos} link is
set as $-30$ dB, and $\text{P}_{\text{L}}(\mathbf{x})$ is the \ac{los}
probability at $\mathbf{x}$, which is defined as $\text{P}_{\text{L}}(\mathbf{x})=1/(1+a_{\text{e}}\times\text{exp}(-b_{\text{e}}(\text{arctan}(x_{3}/\sqrt{\|\mathbf{x}-\mathbf{u}_{k}\|_{2}^{2}-x_{3}^{2}})-a_{\text{e}})))$
where the parameters $a_{\text{e}}$ and $b_{\text{e}}$ are learned
from maps.
\item \textit{Relaxed analytical geometry (RAG)} \cite{YiZhuZhuXia:J22}:
This scheme models the city structure using polyhedrons and determines
the LOS conditions via a set of constraints obtained from analytical
geometry. The UAV position optimization problem is solved by Lagrangian
relaxation and successive convex approximation (SCA).
\end{itemize}

Note that both the statistical geometry scheme and the relaxed analytical
geometry scheme require user locations and channel model parameters,
and hence, they serve for performance benchmarking only. Additionally,
we test a genius-aided version of the proposed scheme with known user
locations and channel models, and thus, the measurement range $r_{1}$
is set to $0$. This is to set the benchmark for the best possible
performance of searching on the equipotential surface.

\subsection{UAV-assisted Communication and Sensing}

\begin{figure}
\begin{centering}
\includegraphics[width=0.93\columnwidth]{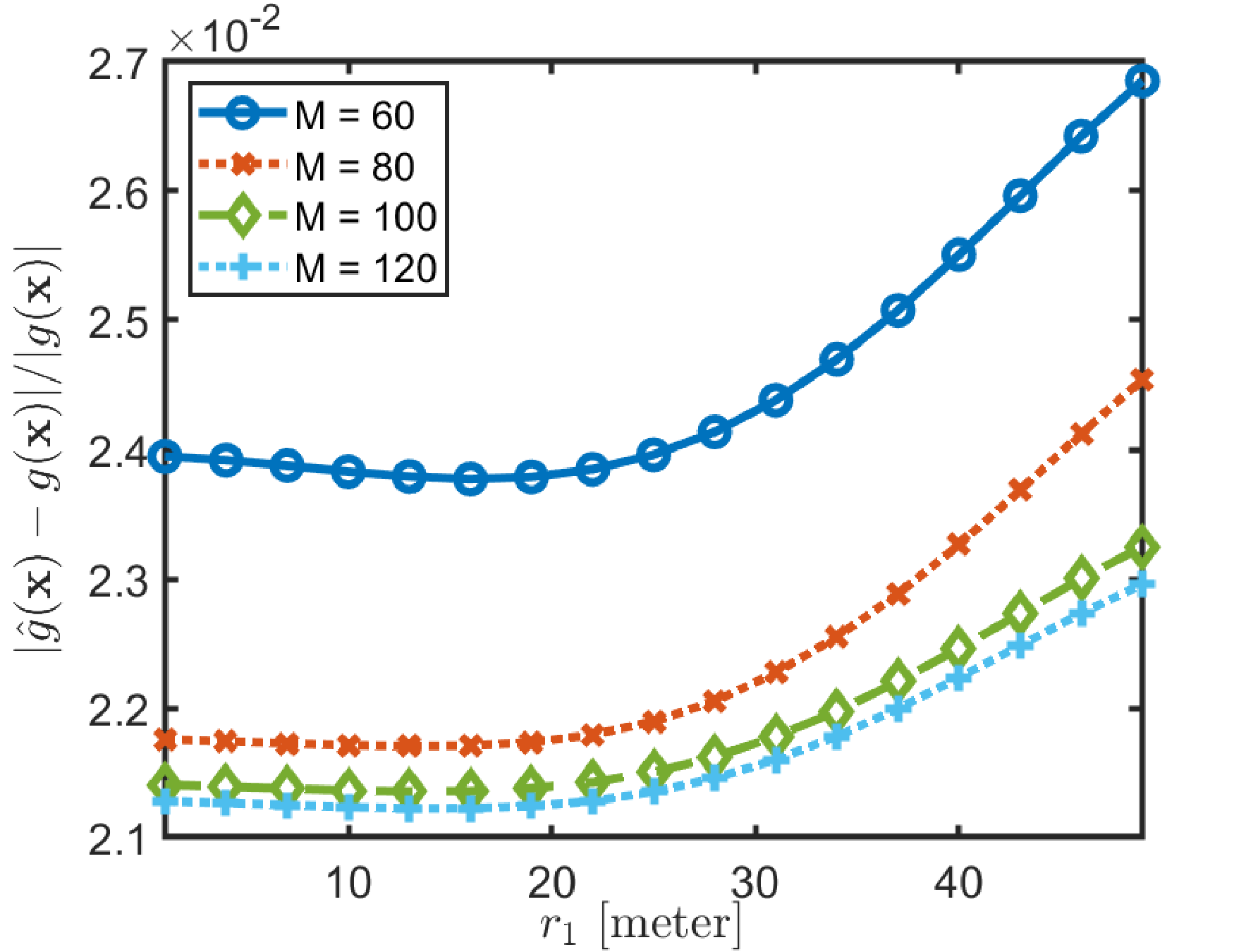}
\par\end{centering}
\caption{\label{fig:channel_gain_error}Normalized estimation error of channel
gain $g(\mathbf{x})$ versus measurement range $r_{1}$.}
\end{figure}
\begin{figure}
\begin{centering}
\includegraphics[width=0.95\columnwidth]{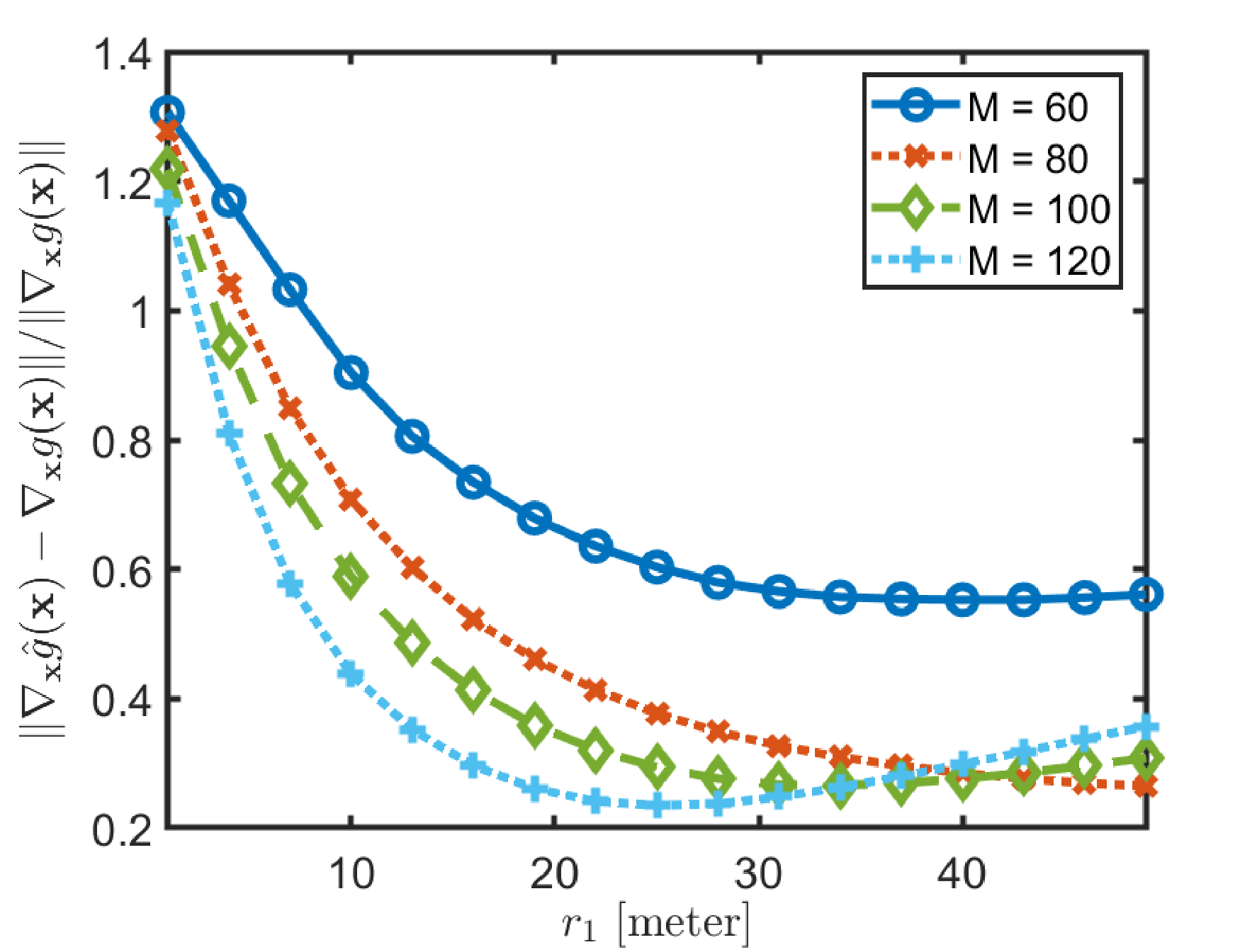}
\par\end{centering}
\caption{\label{fig:channel_gradient_error}Normalized estimation error of
local gradient of channel gain, {\em i.e.}, $\nabla g(\mathbf{x})$,
versus measurement range $r_{1}$.}
\end{figure}

Fig.~\ref{fig:channel_gain_error} shows the normalized estimation
error of channel gain $g(\mathbf{x})$ under different settings of
$M$ and $r_{1}$. It is found that the error in estimating $g(\mathbf{x})$
exhibits a monotonic decrease \ac{wrt} the increase of $M$ while
it is not monotonic \ac{wrt} $r_{1}$. This observation suggests
the existence of an optimal value for $r_{1}$, {\em e.g.}, $r_{1}\approx18$
when $M=60$, which confirms the analytical results in Theorem~\ref{thm:mse-of-estimated-channel-gain}.
Fig.~\ref{fig:channel_gradient_error} demonstrates the normalized
estimation error of the gradient of channel gain $\nabla g(\mathbf{x})$.
It is observed that a small measurement range $r_{1}$ leads to a
large estimation error on the local gradient of $g(\mathbf{x})$ as
the measurement is not geographically diverse. Yet, when $r_{1}$
is too large, the first-order local \ac{los} model $\hat{g}(\mathbf{x})$
becomes less accurate, leading to a slight increase of the estimation
error.

\begin{table}
\caption{\label{tab:comparison-of-capacity-sum-rate}Capacity on two maps for
a sum-rate application {[}${\rm Gbps}${]}}

\renewcommand{\arraystretch}{1.2}
\centering{}%
\begin{tabular}{>{\raggedright}m{0.1\columnwidth}>{\raggedright}m{0.07\columnwidth}>{\raggedright}m{0.08\columnwidth}>{\raggedright}m{0.05\columnwidth}>{\centering}m{0.1\columnwidth}>{\centering}m{0.18\columnwidth}>{\centering}m{0.07\columnwidth}}
\hline 
\centering{} & \centering{}\textbf{Statis} & \centering{}\textbf{Exh2D} & \centering{}\textbf{RAG} & \centering{}\textbf{Proposed} & \centering{}\textbf{Genius-aided} & \centering{}\textbf{Exh3D}\tabularnewline
\hline 
\centering{}Map A & \centering{}3.21 & \centering{}3.52 & \centering{}3.72 & \centering{}3.82 & 3.84 & 3.85\tabularnewline
\hline 
\centering{}Map B & \centering{}1.70 & \centering{}1.84 & \centering{}2.08 & \centering{}2.17 & 2.24 & 2.30\tabularnewline
\hline 
\end{tabular}
\end{table}

\begin{figure}
\begin{centering}
\includegraphics[width=0.93\columnwidth]{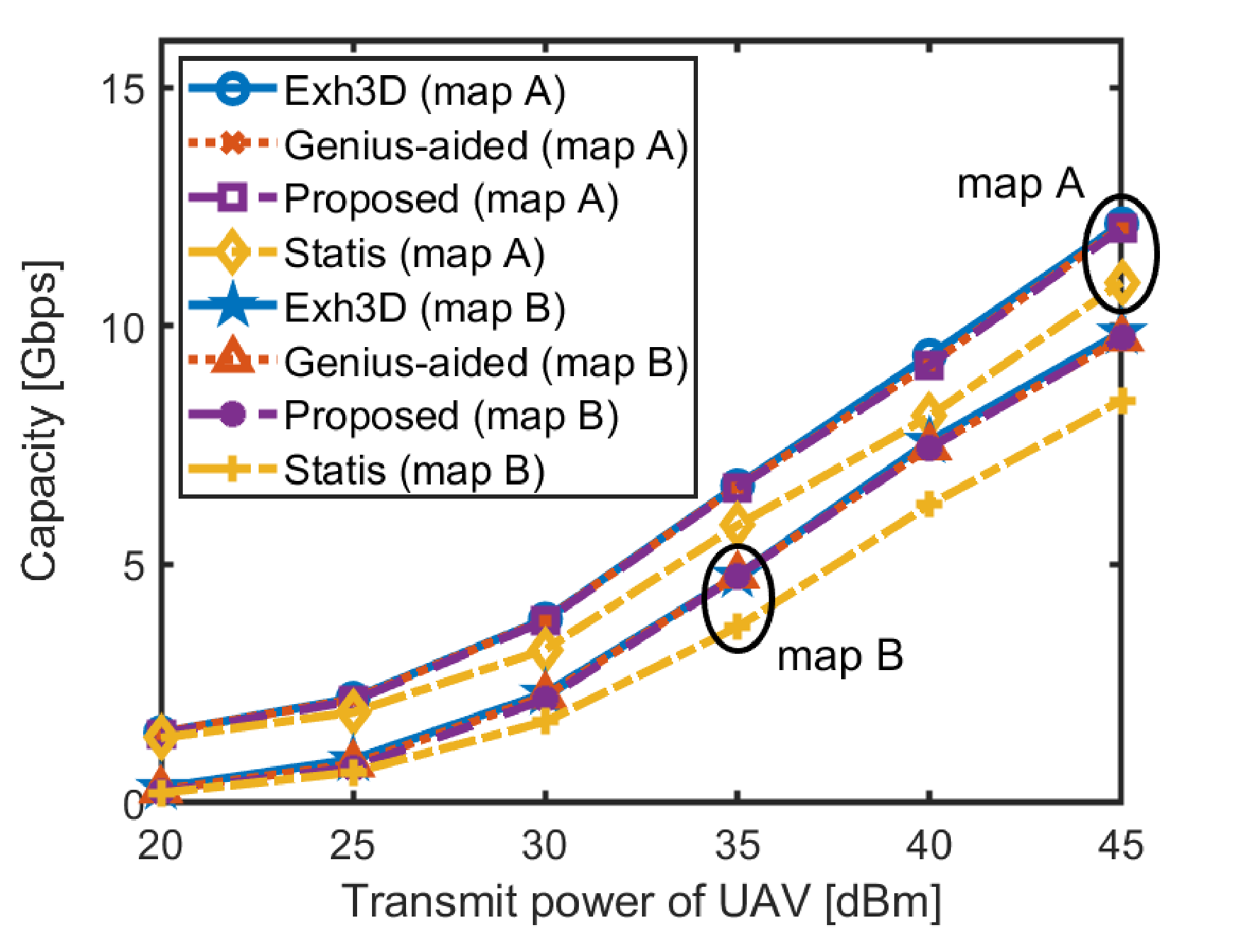}
\par\end{centering}
\caption{\label{fig:Capacity-transmit-power}System capacity versus transmit
power of UAV in a sum-rate application}
\end{figure}

Table~\ref{tab:comparison-of-capacity-sum-rate} summarizes the system
capacity, {\em i.e.}, $\min\{f_{0}(g_{0}(\mathbf{x})),\sum_{k\in\mathcal{K}}\{f_{k}(g_{k}(\mathbf{x}),p_{k}\text{)}\}\}$,
of different schemes on two maps. The proposed scheme achieves over
$94\%$ to that of the exhaustive 3D search, only 3\% below the genius-aided
scheme which requires perfect user locations and channel parameters.
Such a result suggests that the globally optimal solution likely resides
on the equipotential surface, and the proposed scheme can search near
the locally reconstructed equipotential surface without significant
deviation. Although the relaxed analytical geometry scheme achieves
over $90\%$ to the exhaustive 3D search over two maps, it requires
complete knowledge of city topology and additional computational cost
for polygonal approximation of buildings. The suboptimal performance
of the exhaustive 2D search is primarily due to its inability to leverage
height flexibility. When compared to other schemes with \ac{los}
guarantee, the poor performance of the statistical geometry scheme
emerges from the inherent uncertainty of the LOS status of potentially
best UAV positions found through this scheme.

Fig.~\ref{fig:Capacity-transmit-power} demonstrates the system capacity
under varying UAV transmit powers, with the transmit power of \ac{bs}
set to $K$ times that of the UAV. It is observed that the proposed
scheme closely approaches the performance of the exhaustive 3D search
regardless of transmit power. Although a denser map makes it harder
to find the globally optimal solution, the proposed scheme still achieves
over $94\%$ to the exhaustive 3D search on map~B while it achieves
over $99\%$ to the exhaustive 3D search on map~A under $P_{\text{T}}=30$~dBm.

Fig.~\ref{fig:Capacity-user-number} shows the system capacity under
different user numbers in the sum-rate application. It is found that
the proposed scheme achieves over $98\%$ to the exhaustive 3D search
on map~A under all tested user numbers, while it achieves up $90\%$
on map~B, which represents a denser urban area. The reason is that
denser urban topology and more users result in much more complicated
\ac{los} distributions. Thus, gathering a sufficient number of \ac{los}
measurements in a local area for channel estimation becomes more challenging,
resulting in increased construction error of the equipotential surface.
\begin{figure}
\begin{centering}
\includegraphics[width=0.93\columnwidth]{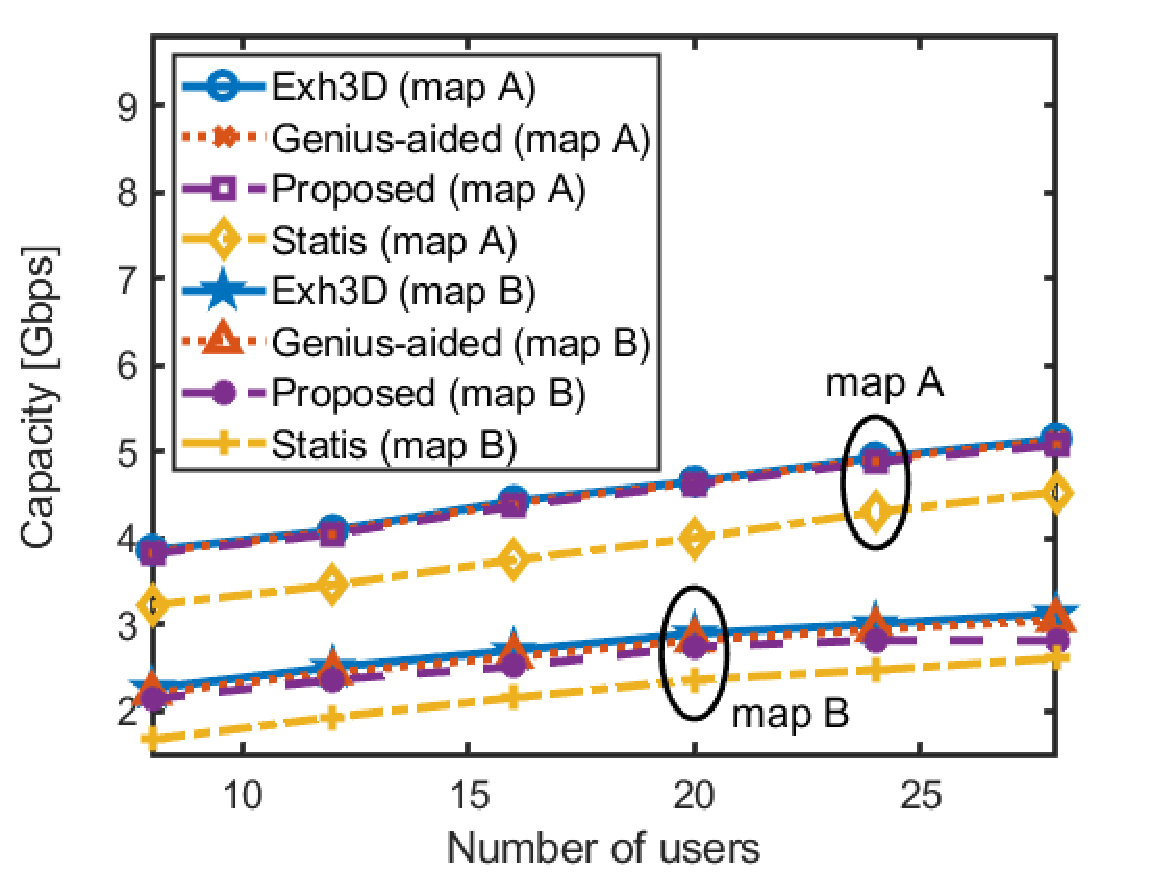}
\par\end{centering}
\caption{\label{fig:Capacity-user-number}System capacity versus number of
users in a sum-rate application}
\end{figure}
\begin{figure}
\begin{centering}
\includegraphics[width=0.93\columnwidth]{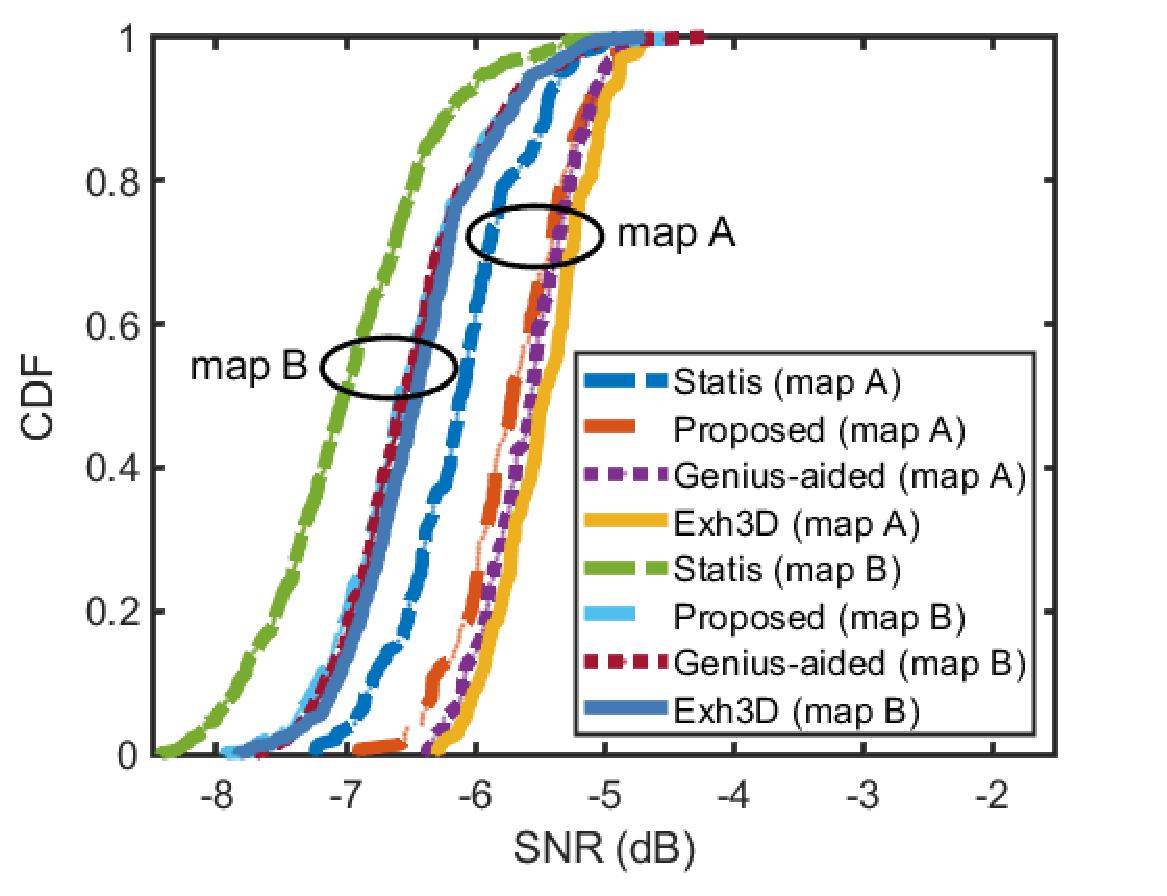}
\par\end{centering}
\caption{\label{fig:CDF_versus_SNR}CDF versus $\text{SNR}$ of the weakest
sensing link in a balancing application.}
\end{figure}

In Fig.~\ref{fig:CDF_versus_SNR}, we examine the proposed scheme
in a balancing application. As observed, the proposed scheme substantially
increases the \ac{snr} of the weakest sensing link over the statistical
geometry scheme. In particular, it is observed that the \ac{cdf}
curve of the proposed scheme closely coincides with that of the exhaustive
3D search on map~B, numerically confirming that the proposed scheme
can find the near-optimal solution in complex urban environments.

Fig.~\ref{fig:Convergence-process} shows two examples of convergence
process of the proposed scheme. The velocity of the \ac{uav} is set
as $10$~meters per second. It is observed that the system capacity
increases with the search time because the proposed scheme searches
in the direction of increasing the objective value in the full-LOS
region or maintaining the current objective value in the non-full-LOS
region.

Define the convergence time as the search time at which the system
performance no longer improves, in which case, we consider that the
best \ac{uav} position is discovered. Fig.~\ref{fig:Convergence-time}
summarizes the convergence time versus number of users. It is observed
that the convergence time decreases with the increase of number of
users $K$. The reason is that the radius of the equipotential surface
decreases with the increase of $K$, which is analytically shown in
Proposition~\ref{prop:shape_of_equipotential_surface}. Since the
maximum height of the equipotential surface is smaller, the trajectory
length will be shorter as it takes less time to reach the minimum
flight height.

Table~\ref{tab:comparison-of-computational-time} summarizes the
average trajectory lengths required by the four online schemes on
two maps. Notably, the genius-aided scheme with known user locations
and channel models requires only a few hundred meters to approximate
a near-optimal solution, confirming the efficiency of the \ac{los}
discovery strategy proposed in Section~\ref{subsec:LOS-aware-Search-Trajectory}.
The proposed scheme demands a longer trajectory of approximately $3$~kilometers
due to its reliance on spiral trajectories for data collection during
the search process. Despite this, the proposed scheme still demonstrates
considerable efficacy, significantly reducing search complexity compared
to the exhaustive search schemes.

\begin{figure}
\begin{centering}
\includegraphics[width=0.93\columnwidth]{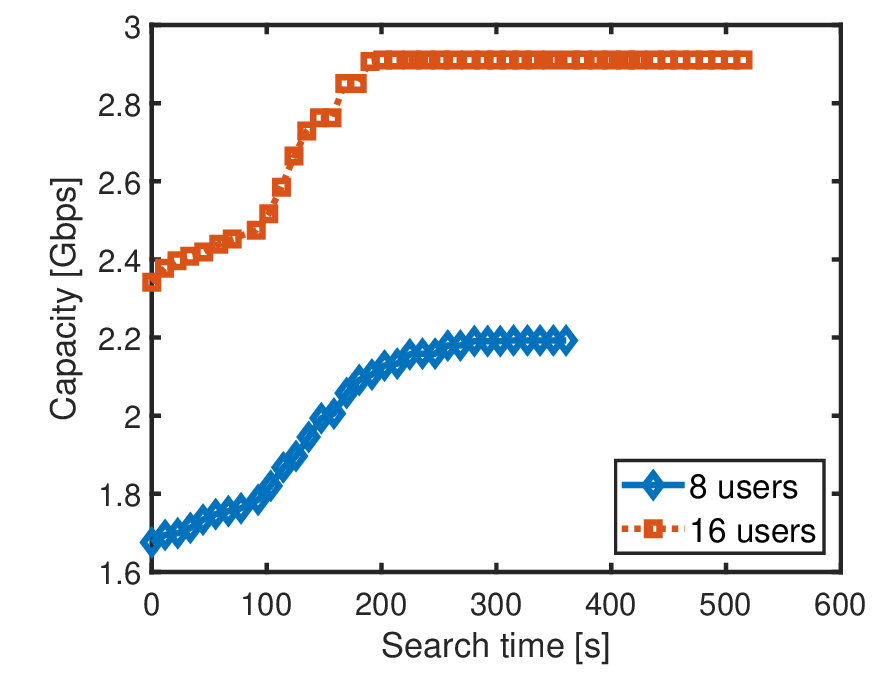}
\par\end{centering}
\caption{\label{fig:Convergence-process}Convergence process: system capacity
versus search time.}
\end{figure}

\begin{figure}
\begin{centering}
\includegraphics[width=0.93\columnwidth]{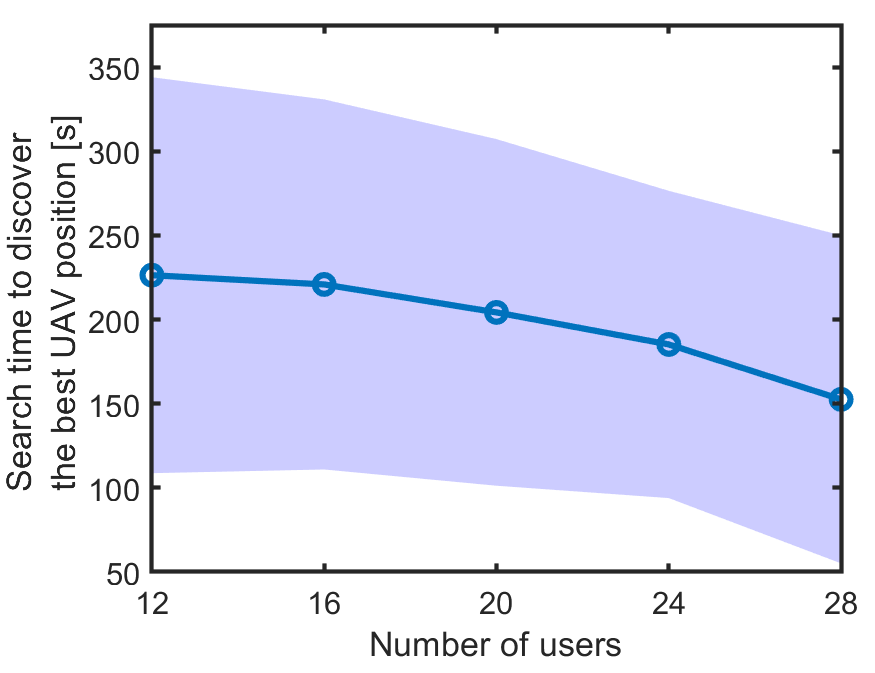}
\par\end{centering}
\caption{\label{fig:Convergence-time}The search time to discover the best
\ac{uav} position versus the number of users. The shaded area represents
a possible search time realization during our $2000$ simulations.}
\end{figure}

\begin{table}
\caption{\label{tab:comparison-of-computational-time}Comparison of Average
Trajectory Length {[}kilometer{]}}

\renewcommand{\arraystretch}{1.2}
\centering{}%
\begin{tabular}{>{\raggedright}m{0.1\columnwidth}>{\raggedright}m{0.2\columnwidth}>{\centering}m{0.14\columnwidth}>{\centering}m{0.14\columnwidth}>{\centering}m{0.14\columnwidth}}
\hline 
\centering{} & \centering{}\textbf{Genius-aided} & \centering{}\textbf{Proposed} & \centering{}\textbf{Exh2D} & \centering{}\textbf{Exh3D}\tabularnewline
\hline 
\centering{}Map A & \centering{}0.2604 & 3.272 & 1920 & 42240\tabularnewline
\hline 
\centering{}Map B & \centering{}0.2428 & 3.051 & 1920 & 42240\tabularnewline
\hline 
\end{tabular}
\end{table}

\section{Conclusion}

\label{sec:Conclusion}

This paper developed an efficient online trajectory for optimal \ac{uav}
placement without prior knowledge of user locations, channel model
parameters, and terrain structure. We analytically characterized the
equipotential surface and proposed an \ac{los} discovery trajectory
on it, utilizing perturbation theory to guide the \ac{uav} search
direction. Additionally, we developed a class of spiral trajectories
to construct a local channel map in the \ac{los} regime using local
polynomial regression without depending on user locations. After deriving
the optimal measurement pattern, we minimized the \ac{mse} of the
locally estimated channel gain and determined the optimal measurement
range. Experimental results on real urban maps demonstrated that our
approach achieves over $94\%$ of the performance of a 3D exhaustive
search scheme with just a $3$-kilometer search in a complex environment.


\appendices{}


\section{Proof of Proposition~\ref{prop:existence-condition-special-balancing-problem}}

\label{sec:Proof-of-Proposition-existence-condition}

First, the optimal power allocation in the balancing problem is expressed
as
\begin{equation}
p_{k}^{*}=\frac{P_{\text{T}}}{g_{k}(\mathbf{x})\sum_{k'\in\mathcal{K}}(1/g_{k'}(\mathbf{x}))}.\label{eq:optimal-power-allocation-balancing}
\end{equation}
which is achieved when all of the users share the same received \ac{snr},
{\em i.e.}, $p_{1}g_{1}(\mathbf{x})=p_{2}g_{2}(\mathbf{x})=\cdots=p_{K}g_{K}(\mathbf{x})$.
Otherwise, the \ac{uav} can enhance the worst link performance by
allocating more power to that link. Since there is a maximum power
constraint $\sum_{k=1}^{K}p_{k}\leq P_{\text{T}}$, the optimal solution
$p_{k}^{*}$ in (\ref{eq:optimal-power-allocation-balancing}) can
be derived. Second, the condition in (\ref{eq:sufficient_condition_existence-balancing-problem})
can be equivalently rewritten as $F(\bm{g}(\mathbf{x}_{0}^{\text{m}}))F(\bm{g}(\mathbf{x}_{\text{\text{u}}}^{\text{m}}))\leq0$.
Third, according to the Bolzano-Cauchy theorem and the continuity
of $F(\bm{g}(\mathbf{x}))$, there exists an equipotential point $\mathbf{x}$
satisfying $F(\bm{g}(\mathbf{x}))=0$, confirming the existence of
the equipotential surface.

\section{Proof of Proposition~\ref{prop:shape_of_equipotential_surface}}

\label{sec:Proof-of-Proposition-sphere}

Recall that the optimal power allocation in the balancing problem
is derived in (\ref{eq:optimal-power-allocation-balancing}). Thus,
the objective function of the UAV-user link is rewritten as
\begin{equation}
F_{\text{u}}(\bm{g}_{\text{u}}(\mathbf{x}),\mathbf{p}^{*})=\log_{2}(1+P_{\text{T}}/{\textstyle \sum_{k'\in\mathcal{K}}}(1/g_{k'}(\mathbf{x}))).
\end{equation}
Given $g_{k}(\mathbf{x})=b_{0}-10\log_{10}(d^{2}(\mathbf{x},\mathbf{u}_{k}))$,
the equation $F(\bm{g}(\mathbf{x}))=0$ is simplified as $P_{\text{T}}d^{2}(\mathbf{x},\mathbf{u}_{0})-{\textstyle P_{0}\sum_{k'\in\mathcal{K}}d^{2}(\mathbf{x},\mathbf{u}_{k})}=0$
which can be rewritten as a spherical equation with specified center
and radius in (\ref{eq:radius_2}).

\section{Proof of Theorem~\ref{thm:variance-minimization}}

\label{sec:Proof-of-Theorem-variance-minimization}

We first derive the following lemma on trace of the inverse of a positive
definite and symmetric matrix.
\begin{lem}[Trace of the inverse of a positive definite and symmetric matrix]
\label{lem:trace-inverse-positive-definite}Let $\mathbf{A}\in\mathcal{M}_{N}$
be a positive definite symmetric matrix, and $a_{ii}$ be the $i$th
diagonal element. Then,
\[
\text{{\rm tr}}\{\mathbf{A}^{-1}\}\geq\sum_{i=1}^{N}a_{ii}^{-1}
\]
with the equality achieved for $a_{ij}=0$, $\forall i\neq j$.
\end{lem}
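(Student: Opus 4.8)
The plan is to establish the stronger \emph{entrywise} inequality $[\mathbf{A}^{-1}]_{(i,i)}\geq a_{ii}^{-1}$ for every $i=1,2,\dots,N$, and then obtain the claimed trace bound by summing over $i$, since $\text{tr}\{\mathbf{A}^{-1}\}=\sum_{i=1}^{N}[\mathbf{A}^{-1}]_{(i,i)}$. Note that positive definiteness guarantees that $\mathbf{A}$ is invertible, that $\mathbf{A}^{-1}$ is again positive definite and symmetric, and that each diagonal entry $a_{ii}=\mathbf{e}_i^{\text{T}}\mathbf{A}\mathbf{e}_i>0$, so all the reciprocals involved are well defined.

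For the entrywise bound I would apply the Cauchy--Schwarz inequality in the inner product induced by $\mathbf{A}$, namely $\langle\mathbf{u},\mathbf{v}\rangle_{\mathbf{A}}\triangleq\mathbf{u}^{\text{T}}\mathbf{A}\mathbf{v}$, which is a genuine inner product because $\mathbf{A}$ is positive definite. Fixing $i$ and taking $\mathbf{u}=\mathbf{e}_i$ and $\mathbf{v}=\mathbf{A}^{-1}\mathbf{e}_i$, the three quantities appearing in Cauchy--Schwarz evaluate to $\langle\mathbf{u},\mathbf{v}\rangle_{\mathbf{A}}=\mathbf{e}_i^{\text{T}}\mathbf{A}\mathbf{A}^{-1}\mathbf{e}_i=1$, $\langle\mathbf{u},\mathbf{u}\rangle_{\mathbf{A}}=\mathbf{e}_i^{\text{T}}\mathbf{A}\mathbf{e}_i=a_{ii}$, and $\langle\mathbf{v},\mathbf{v}\rangle_{\mathbf{A}}=\mathbf{e}_i^{\text{T}}\mathbf{A}^{-1}\mathbf{A}\mathbf{A}^{-1}\mathbf{e}_i=[\mathbf{A}^{-1}]_{(i,i)}$. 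Hence $1\leq a_{ii}\,[\mathbf{A}^{-1}]_{(i,i)}$, i.e. $[\mathbf{A}^{-1}]_{(i,i)}\geq a_{ii}^{-1}$, and summing over $i$ gives $\text{tr}\{\mathbf{A}^{-1}\}\geq\sum_{i=1}^{N}a_{ii}^{-1}$.

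For the equality condition I would invoke the equality case of Cauchy--Schwarz: $1=a_{ii}\,[\mathbf{A}^{-1}]_{(i,i)}$ forces $\mathbf{A}^{-1}\mathbf{e}_i$ to be proportional to $\mathbf{e}_i$. If the trace bound is tight, then all $N$ entrywise inequalities are tight, so $\mathbf{A}^{-1}\mathbf{e}_i=c_i\mathbf{e}_i$ for each $i$; thus every standard basis vector is an eigenvector of $\mathbf{A}^{-1}$, which means the $i$th column of $\mathbf{A}^{-1}$ is $c_i\mathbf{e}_i$, so $\mathbf{A}^{-1}$ and therefore $\mathbf{A}$ is diagonal, i.e. $a_{ij}=0$ for all $i\neq j$. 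Conversely, if $a_{ij}=0$ for $i\neq j$ then $\mathbf{A}^{-1}=\text{diag}\{a_{11}^{-1},\dots,a_{NN}^{-1}\}$ and equality holds trivially. I do not anticipate a real obstacle; the only minor care needed is the elementary step that having all $\mathbf{e}_i$ as eigenvectors forces diagonality. If one prefers a spectral argument instead, an alternative is to use the Schur--Horn theorem (the diagonal of $\mathbf{A}$ is majorized by its eigenvalues) together with the convexity of $x\mapsto 1/x$ on $(0,\infty)$ and Karamata's inequality, but the Cauchy--Schwarz route above is shorter and self-contained.
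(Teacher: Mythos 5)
Your proof is correct, and it takes a genuinely different route from the paper's. The paper proceeds by induction on the dimension: it partitions $\mathbf{A}$ into a $1\times 1$ block and an $(N-1)\times(N-1)$ block, applies the block matrix inverse to identify the leading diagonal entry of $\mathbf{A}^{-1}$ as the reciprocal of a Schur complement, and then invokes auxiliary eigenvalue- and trace-monotonicity lemmas (cited from Horn and Johnson) to peel off one diagonal term at a time, yielding $\text{tr}\{\mathbf{A}_{N}^{-1}\}\geq a_{11}^{-1}+\text{tr}\{\mathbf{A}_{N-1}^{-1}\}$ and iterating. You instead prove the stronger \emph{entrywise} statement $[\mathbf{A}^{-1}]_{(i,i)}\geq a_{ii}^{-1}$ in one shot via Cauchy--Schwarz in the $\mathbf{A}$-inner product, with the three quantities $1$, $a_{ii}$, and $[\mathbf{A}^{-1}]_{(i,i)}$ computed correctly, and then sum. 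This is shorter, avoids both the induction and the external lemmas, and your equality analysis is actually sharper than what the lemma asserts: the paper only claims diagonality is \emph{sufficient} for equality, whereas your Cauchy--Schwarz equality case shows it is also necessary (each $\mathbf{e}_{i}$ must be an eigenvector of $\mathbf{A}^{-1}$, forcing $\mathbf{A}^{-1}$, hence $\mathbf{A}$, to be diagonal). The one thing the paper's Schur-complement route makes slightly more visible is the structural identity $[\mathbf{A}^{-1}]_{(1,1)}=(a_{11}-\mathbf{a}_{-1}^{\text{T}}\mathbf{A}_{N-1}^{-1}\mathbf{a}_{-1})^{-1}$, which quantifies the gap in the entrywise inequality; but for the purposes of this lemma your argument is complete and self-contained.
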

\begin{proof}
See Appendix~\ref{sec:Supplementary-Materials-2} in \cite{ZheChe:J24}.
\end{proof}

According to \cite{FanGij:B96,SunChe:J22}, the variance of $\boldsymbol{\theta}^{\text{(e)}}$
is given by $\text{tr}\{\sigma^{2}(\tilde{\boldsymbol{\mathbf{X}}}^{\text{T}}\tilde{\boldsymbol{\mathbf{X}}})^{-1}\}$.
Using the results in Lemma~\ref{lem:trace-inverse-positive-definite},
one can minimize $\text{tr}\{(\tilde{\boldsymbol{\mathbf{X}}}^{\text{T}}\tilde{\boldsymbol{\mathbf{X}}})^{-1}\}$
with the optimal conditions on $\tilde{\boldsymbol{\mathbf{X}}}$.

Since $\tilde{\boldsymbol{\mathbf{X}}}^{\text{T}}\tilde{\boldsymbol{\mathbf{X}}}$
is invertible, none of its eigenvalues is equal to $0$. Given the
definition of $\tilde{\mathbf{X}}$, all the primary minors of $\tilde{\boldsymbol{\mathbf{X}}}^{\text{T}}\tilde{\boldsymbol{\mathbf{X}}}$
are no less than $0$. Hence, $\tilde{\boldsymbol{\mathbf{X}}}^{\text{T}}\tilde{\boldsymbol{\mathbf{X}}}$
is a positive definite symmetric matrix. According to Lemma~\ref{lem:trace-inverse-positive-definite},
\begin{align*}
\text{tr}\{(\tilde{\boldsymbol{\mathbf{X}}}^{\text{T}}\tilde{\boldsymbol{\mathbf{X}}})^{-1}\} & \geq\frac{1}{M}+\sum_{j=1}^{3}\frac{1}{\sum_{m=1}^{M}(x_{mj}-c_{0j})^{2}}
\end{align*}
with the minimum achieved when $\sum_{m=1}^{M}(x_{mj}-c_{0j})=0$,
for $\forall j\in\{1,2,3\}$, and $\sum_{m=1}^{M}(x_{mj}-c_{0j})(x_{mj'}-c_{0j'})=0$,
for $\forall j,j'\in\{1,2,3\}$ and $j\neq j'$.

Using the Arithmetic Geometric Mean Inequality (AGMI) and the fact
that $\sum_{j=1}^{3}\sum_{m=1}^{M}(x_{mj}-c_{0j})^{2}\leq Mr_{1}^{2}$,
we have

\begin{align*}
\text{tr}\{(\tilde{\boldsymbol{\mathbf{X}}}^{\text{T}}\tilde{\boldsymbol{\mathbf{X}}})^{-1}\} & \text{\ensuremath{\geq}}\frac{1}{M}+\frac{3}{\sqrt[3]{\prod_{j=1}^{3}\sum_{m=1}^{M}(x_{mj}-c_{0j})^{2}}}\\
 & =\frac{1}{M}+\frac{9}{Mr_{1}^{2}}
\end{align*}
with the minimum achieved when $\sum_{m=1}^{M}(x_{mj}-c_{0j})^{2}=Mr_{1}^{2}/3$
for $j\in\{1,2,3\}$. Thus, the lower bound of $\text{tr}\{\sigma^{2}(\tilde{\boldsymbol{\mathbf{X}}}^{\text{T}}\tilde{\boldsymbol{\mathbf{X}}})^{-1}\}$
is correspondingly obtained.

\section{Proof of Theorem~\ref{thm:mse-of-estimated-channel-gain}}

\label{sec:Proof-of-Theorem-MSE-Channel-Gain}

The \ac{mse} of the estimated channel gain at $\mathbf{x}$ is given
by 
\begin{align}
\mathbb{E}\left\{ \left(\hat{g}(\mathbf{x})-g(\mathbf{x})\right)^{2}\right\}  & =\mathbb{V}\left\{ \hat{g}(\mathbf{x})\right\} +\left(\mathbb{E}\{\hat{g}(\mathbf{x})-g(\mathbf{x})\}\right)^{2}\label{eq:mse-g}
\end{align}

Next, the bounds of $\mathbb{V}\{\hat{g}(\mathbf{x})\}$ and $|\mathbb{E}\{\hat{g}(\mathbf{x})-g(\mathbf{x})\}|$
with $r_{0}=d(\mathbf{x},\mathbf{c}_{0})$ are derived as follows.

1) Derivation of $\mathbb{V}\{\hat{g}(\mathbf{x})\}$: According to
the proof of Theorem~\ref{thm:variance-minimization}, the variances
of $\hat{\alpha}$ and $\hat{\boldsymbol{\beta}}$ are given by $\mathbb{V}\{\hat{\alpha}\}=\sigma^{2}/M$,
and $\mathbb{V}\{\hat{\beta}_{j}\}=3\sigma^{2}/Mr_{1}^{2}$. Thus,
$\mathbb{V}\{\hat{g}(\mathbf{x})\}$ is given by
\begin{align}
\mathbb{V}\left\{ \hat{g}(\mathbf{x})\right\}  & =\mathbb{V}\left\{ \hat{\alpha}+\hat{\boldsymbol{\beta}}^{\text{T}}(\mathbf{x}-\mathbf{c}_{0})\right\} =\frac{\sigma^{2}}{M}+\frac{3\sigma^{2}r_{0}^{2}}{Mr_{1}^{2}}.\label{eq:variance-hat-gx}
\end{align}

2) Derivation of $|\mathbb{E}\{\hat{g}(\mathbf{x})-g(\mathbf{x})\}|$:
Since $\{\mathbf{x}_{m}\}$ satisfies conditions (i)-(iii) in Theorem~\ref{thm:variance-minimization},
$(\tilde{\boldsymbol{\mathbf{X}}}^{\text{T}}\tilde{\boldsymbol{\mathbf{X}}})^{-1}$
is simplified as
\[
(\tilde{\boldsymbol{\mathbf{X}}}^{\text{T}}\tilde{\boldsymbol{\mathbf{X}}})^{-1}=\frac{1}{M}\left[\begin{array}{cc}
1 & \mathbf{0}\\
\mathbf{0} & \frac{3}{r_{1}^{2}}\mathbf{I}
\end{array}\right]
\]
where $\mathbf{I}$ is a $3\times3$ unit matrix. Therefore, $\left(\tilde{\boldsymbol{\mathbf{X}}}^{\text{T}}\tilde{\boldsymbol{\mathbf{X}}}\right)^{-1}\tilde{\boldsymbol{\mathbf{X}}}^{\text{T}}$
is derived as
\[
\left(\tilde{\boldsymbol{\mathbf{X}}}^{\text{T}}\tilde{\boldsymbol{\mathbf{X}}}\right)^{-1}\tilde{\boldsymbol{\mathbf{X}}}^{\text{T}}=\frac{1}{M}\left[\begin{array}{ccc}
1 & \cdots & 1\\
\frac{3}{r_{1}^{2}}(\mathbf{x}_{1}-\mathbf{c}_{0}) & \cdots & \frac{3}{r_{1}^{2}}(\mathbf{x}_{M}-\mathbf{c}_{0})
\end{array}\right].
\]

Recall that $g(\mathbf{x})$ satisfies that Lipschitz condition in
(\ref{eq:Lipschitz-g-upper}) and (\ref{eq:Lipschitz-g-lower}). Hence,
we have
\begin{align*}
y_{m} & =g(\mathbf{x}_{m})+\xi_{m}\\
 & =\left[\begin{array}{cc}
1 & (\mathbf{x}_{m}-\mathbf{c}_{0})^{\text{T}}\end{array}\right]\left[\begin{array}{c}
g(\mathbf{c}_{0})\\
\nabla g(\mathbf{c}_{0})
\end{array}\right]+e(\mathbf{x}_{m},\mathbf{c}_{0})+\xi_{m}
\end{align*}
where $|e(\mathbf{x}_{m},\mathbf{c}_{0})|\leq L_{g}d^{2}(\mathbf{x}_{m},\mathbf{c}_{0})/2\leq L_{g}r_{1}^{2}/2$,
and $\xi_{m}\sim N(0,\sigma^{2})$ is the measurement noise at $\mathbf{x}_{m}$.

Recall that $\mathbf{y}=[y_{1},y_{2},\dots,y_{M}]^{\text{T}}$. Then,
we have
\begin{align*}
\mathbf{y} & =\left[\begin{array}{cc}
1 & (\mathbf{x}_{1}-\mathbf{c}_{0})^{\text{T}}\\
1 & (\mathbf{x}_{2}-\mathbf{c}_{0})^{\text{T}}\\
\vdots & \vdots\\
1 & (\mathbf{x}_{M}-\mathbf{c}_{0})^{\text{T}}
\end{array}\right]\left[\begin{array}{c}
g(\mathbf{c}_{0})\\
\nabla g(\mathbf{c}_{0})
\end{array}\right]+\left[\begin{array}{c}
e(\mathbf{x}_{1},\mathbf{c}_{0})+\xi_{1}\\
e(\mathbf{x}_{2},\mathbf{c}_{0})+\xi_{2}\\
\vdots\\
e(\mathbf{x}_{M},\mathbf{c}_{0})+\xi_{M}
\end{array}\right]\\
 & =\tilde{\boldsymbol{\mathbf{X}}}\left[\begin{array}{c}
g(\mathbf{c}_{0})\\
\nabla g(\mathbf{c}_{0})
\end{array}\right]+\mathbf{e}+\boldsymbol{\xi}
\end{align*}
where $\mathbf{e}=[e(\mathbf{x}_{1},\mathbf{c}_{0}),e(\mathbf{x}_{2},\mathbf{c}_{0}),\dots,e(\mathbf{x}_{M},\mathbf{c}_{0})]^{\text{T}}$,
and $\bm{\xi}=[\xi_{1},\xi_{2},\dots,\xi_{M}]^{\text{T}}$ satisfying
$\mathbb{E}\{\bm{\xi}\}=\mathbf{0}$.

Denote $\bar{\bm{\theta}}=[g(\mathbf{c}_{0}),\nabla g(\mathbf{c}_{0})^{\text{T}}]^{\text{T}}$.
The expectation of $\hat{\bm{\theta}}-\bar{\bm{\theta}}$ is given
by
\begin{align*}
\mathbb{E}\{\hat{\bm{\theta}}-\bar{\bm{\theta}}\} & =\mathbb{E}\left\{ \left(\tilde{\boldsymbol{\mathbf{X}}}^{\text{T}}\tilde{\boldsymbol{\mathbf{X}}}\right)^{-1}\tilde{\boldsymbol{\mathbf{X}}}^{\text{T}}\mathbf{y}\right\} -\bar{\bm{\theta}}\\
 & =\mathbb{E}\left\{ \left(\tilde{\boldsymbol{\mathbf{X}}}^{\text{T}}\tilde{\boldsymbol{\mathbf{X}}}\right)^{-1}\tilde{\boldsymbol{\mathbf{X}}}^{\text{T}}\tilde{\boldsymbol{\mathbf{X}}}\bar{\bm{\theta}}+\left(\tilde{\boldsymbol{\mathbf{X}}}^{\text{T}}\tilde{\boldsymbol{\mathbf{X}}}\right)^{-1}\tilde{\boldsymbol{\mathbf{X}}}^{\text{T}}(\mathbf{e}+\boldsymbol{\xi})\right\} \\
 & -\bar{\bm{\theta}}\\
 & =\left(\tilde{\boldsymbol{\mathbf{X}}}^{\text{T}}\tilde{\boldsymbol{\mathbf{X}}}\right)^{-1}\tilde{\boldsymbol{\mathbf{X}}}^{\text{T}}\mathbf{e}\\
 & =\frac{1}{M}\left[\begin{array}{c}
\sum_{m=1}^{M}e(\mathbf{x}_{m},\mathbf{c}_{0})\\
\frac{3}{r_{1}^{2}}\sum_{m=1}^{M}(\mathbf{x}_{m}-\mathbf{c}_{0})\cdot e(\mathbf{x}_{m},\mathbf{c}_{0})
\end{array}\right].
\end{align*}

Since $\hat{g}(\mathbf{x})=\hat{\alpha}+\hat{\boldsymbol{\beta}}^{\text{T}}(\mathbf{x}-\mathbf{c}_{0})$
and $g(\mathbf{x})=g(\mathbf{c}_{0})+\nabla g(\mathbf{c}_{0})^{\text{T}}(\mathbf{x}-\mathbf{c}_{0})+e(\mathbf{x},\mathbf{c}_{0})$,
we have
\begin{align*}
\mathbb{E}\{\hat{g}(\mathbf{x})-g(\mathbf{x})\} & =\left[\begin{array}{cc}
1 & (\mathbf{x}-\mathbf{c}_{0})^{\text{T}}\end{array}\right]\cdot\mathbb{E}\{\hat{\bm{\theta}}-\bar{\bm{\theta}}\}+e(\mathbf{x},\mathbf{c}_{0})\\
 & =\frac{1}{M}\sum_{m=1}^{M}e(\mathbf{x}_{m},\mathbf{c}_{0})+e(\mathbf{x},\mathbf{c}_{0})\\
 & +\frac{3}{Mr_{1}^{2}}\sum_{m=1}^{M}(\mathbf{x}-\mathbf{c}_{0})^{\text{T}}(\mathbf{x}_{m}-\mathbf{c}_{0})\cdot e(\mathbf{x}_{m},\mathbf{c}_{0})
\end{align*}

In addition, as $d(\mathbf{x}_{m},\mathbf{c}_{0})\leq r_{1}$ and
$d(\mathbf{x},\mathbf{c}_{0})\leq r_{0}$, the Lipschitz condition
implies that $|e(\mathbf{x}_{m},\mathbf{c}_{0})|\leq\frac{L_{g}^{2}}{2}d(\mathbf{x}_{m},\mathbf{c}_{0})^{2}\leq L_{g}r_{1}^{2}/2$
and $|e(\mathbf{x},\mathbf{c}_{0})|\leq L_{g}r_{0}^{2}/2$. The bias
is upper bounded as 
\begin{align}
\big|\mathbb{E}\{\hat{g}(\mathbf{x})-g(\mathbf{x})\}\big| & \leq\frac{L_{g}r_{1}^{2}}{2}+\frac{L_{g}r_{0}^{2}}{2}+\frac{3}{Mr_{1}^{2}}\cdot Mr_{0}r_{1}\cdot\frac{L_{g}r_{1}^{2}}{2}\nonumber \\
 & =\frac{L_{g}}{2}\left(r_{1}^{2}+r_{0}^{2}+3r_{0}r_{1}\right).\label{eq:bias-hat-gx}
\end{align}

Finally, substituting (\ref{eq:variance-hat-gx}) and (\ref{eq:bias-hat-gx})
into (\ref{eq:mse-g}), we have 
\begin{align}
\mathbb{E}\left\{ \left(\hat{g}(\mathbf{x})-g(\mathbf{x})\right)^{2}\right\}  & =\mathbb{V}\left\{ \hat{g}(\mathbf{x})\right\} +\left(\mathbb{E}\{\hat{g}(\mathbf{x})-g(\mathbf{x})\}\right)^{2}\nonumber \\
 & \leq\frac{\sigma^{2}}{M}\left(1+\frac{3r_{0}^{2}}{r_{1}^{2}}\right)\nonumber \\
 & +\frac{L_{g}^{2}}{4}\left(r_{1}^{2}+r_{0}^{2}+3r_{0}r_{1}\right)^{2}.\label{eq:mse-of-gx}
\end{align}

Furthermore, if $g(\mathbf{x})\approx g(\mathbf{c}_{0})+\nabla g(\mathbf{x})^{\text{T}}(\mathbf{x}-\mathbf{c}_{0})+L'_{g}\|\mathbf{x}-\mathbf{c}_{0}\|^{2}/2$
and $\|\mathbf{x}_{m}-\mathbf{c}_{0}\|^{2}=r_{1}^{2}$, then we have
$e(\mathbf{x},\mathbf{c}_{0})=L'_{g}\|\mathbf{x}-\mathbf{c}_{0}\|^{2}/2=L_{g}'r_{0}^{2}/2$
and $e(\mathbf{x}_{m},\mathbf{c}_{0})=L_{g}'r_{1}^{2}/2$ for all
$m$. The expression (\ref{eq:bias-hat-gx}) thus simplifies to $\mathbb{E}\{\hat{g}(\mathbf{x})-g(\mathbf{x})\}\approx L_{g}'\left(r_{1}^{2}+r_{0}^{2}\right)/2$.
Together with the result (\ref{eq:variance-hat-gx}), it leads to
the \ac{mse} approximation (\ref{eq:upper-bound-local-channel-model-1}).

\bibliographystyle{IEEEtran}

\newpage

\section*{Supplementary Materials}

\section{Proof of Proposition~\ref{prop:upper-bound-trajectory-length}}

\label{sec:Proof-of-Proposition-Trajectory-Length}

The superposed trajectory proposed in this paper combines the search
trajectory ${\bf x}_{\text{s}}(t)$ on the equipotential surface as
shown in Fig.~\ref{fig:Trajectory-on-the-Equipotential-Surface}
with the spiral measurement trajectory ${\bf x}_{\text{r}}(t)$ as
shown in Fig.~\ref{fig:sphere-and-cylinder}.

We first derive the trajectory length of the search trajectory ${\bf x}_{\text{s}}(t)$
on the equipotential surface. According to Proposition~\ref{prop:shape_of_equipotential_surface},
the equipotential surface is a sphere centered at $\mathbf{o}=(P_{0}\sum_{k\in\mathcal{K}}\mathbf{u}_{k}-P_{\text{T}}\mathbf{u}_{0})/(KP_{0}-P_{\text{T}})$
with radius $R$ in (\ref{eq:radius_2}). According to Section IV-B,
the search trajectory ${\bf x}_{\text{s}}(t)$ is categorized into
the following two phases:

Phase 1: trajectory ${\bf x}_{\text{s}}(t)$ in the full-LOS region.
The trajectory in the full-LOS region is an arc with a radius $R$
and a central angle less than $\pi$. Hence, the trajectory ${\bf x}_{\text{s}}(t)$
in the full-LOS region is upper bounded by $\pi R$.

Phase 2: trajectory in the non-full-LOS region. The trajectory in
the non-full-LOS region satisfies $f_{0}(g_{0}(\mathbf{x}_{\text{s}}(t)))=C$.
Therefore, ${\bf x}_{\text{s}}(t)$ in the non-full-LOS region is
an arc with a radius smaller than or equal to $H_{0}$ and a central
angle less than or equal to $\pi$. Then, the trajectory $\mathbf{x}_{\text{s}}(t)$
in the non-full-LOS region is upper bounded by $\pi H_{0}$.

By combining the trajectory length in phase 1 and phase 2, the length
of search trajectory $\mathbf{x}_{\text{s}}(t)$ is upper bounded
by $\pi(H_{0}+R)$.

Next, we derive the trajectory length of the spiral measurement trajectory
${\bf x}_{\text{r}}(t)$. The spiral measurement trajectory shown
in Fig.~\ref{fig:sphere-and-cylinder} (b) can be described by the
following 3D helix equations: 
\begin{equation}
\begin{cases}
x_{\text{r}1}(t)=\sqrt{2/3}r_{1}\cos(\omega t)\\
x_{\text{r}2}(t)=vt\\
x_{\text{r}3}(t)=\sqrt{2/3}r_{1}\sin(\omega t)
\end{cases}\label{eq:common-spiral-trajectory-equation-1}
\end{equation}
where recall that $\omega$ is the angular velocity, $v$ is the velocity
in the spiral axis direction, and $r_{1}$ is the measurement radius.

Therefore, the spiral trajectory length denoted as $L_{\text{r}}$
during $t\in[0,T]$ is given by
\begin{align}
L_{\text{r}} & =\int_{0}^{T}\sqrt{(-\sqrt{\frac{2}{3}}r_{1}\omega\sin(\omega t))^{2}+(\sqrt{\frac{2}{3}}r_{1}\omega\cos(\omega t))^{2}+v^{2}}\text{d}t\nonumber \\
 & =\sqrt{\frac{2r_{1}^{2}\omega^{2}}{3}+v^{2}}T.\label{eq:trajectory-length-spiral}
\end{align}

Finally, we combine the LOS discovery trajectory ${\bf x}_{\text{s}}(t)$
and the spiral measurement trajectory ${\bf x}_{\text{r}}(t)$. Given
the upper bound of ${\bf x}_{\text{s}}(t)$ as $\pi(H_{0}+R)$, the
total search time is upper bounded by $T\leq\pi(H_{0}+R)/v$. According
to the design of the spiral trajectory in Section III-E, let $\omega=4\pi/M$
and $v=2/\sqrt{M^{2}-1}$. Then, the total length $L$ of the superposed
trajectory is upper bounded by
\begin{align}
L & \leq\frac{\pi(H_{0}+R)}{v}\sqrt{\frac{2r_{1}^{2}\omega^{2}}{3}+v^{2}}\nonumber \\
 & =\pi(H_{0}+R)\sqrt{\frac{8(M^{2}-1)\pi^{2}r_{1}^{2}}{3M^{2}}+1}\nonumber \\
 & \leq\pi(H_{0}+R)\sqrt{3\pi^{2}r_{1}^{2}+1}.\label{eq:total-trajectory-length-bound}
\end{align}

\section{Per-step Computational Complexity Analysis}

\label{sec:Per-step-Computational-Complexity}

The computational complexity of the proposed algorithm mainly comes
from the local channel construction in Step 3, and the update of UAV
positions in Step 4 in Algorithm~\ref{alg:algorithm_equip}.
\begin{itemize}
\item Computational complexity of Step 3: The local channel construction
in Step 3 builds a local channel model for each user using the closed-form
formula in equation~(\ref{eq:least-square-solution}), and thus,
the computational complexity is given by $O(KM)$, where recall that
$K$ is the number of users and $M$ is the number of measurements
used for local channel model construction.
\item Computational complexity of Step 4: The update of \ac{uav} positions
in Step 4 has a computational complexity of $O((K+N)^{3})$, where
$N$ is the number of constraints for resource allocation. Step 4
depends on closed-form formulations in (\ref{eq:solution-to-local-approximation-equipotential-point}),
(\ref{eq:solution-dx-dt}), (\ref{eq:closed-form-q-in-LOS}), (\ref{eq:closed-form-q2-in-NLOS}),
(\ref{eq:Dynamical-Equation-Trajectory-x}), (\ref{eq:Dynamical-Equation-Trajectory-x-NoSwitch}),
(\ref{eq:Transition-Phase-x-s-dynamic}) and (\ref{eq:Dynamical-Equation-Trajectory-x-Switch-x}),
and the major complexity comes from the matrix inverse operation in
(\ref{eq:solution-dx-dt}).
\end{itemize}

\section{Proof of Lemma~\ref{lem:trace-inverse-positive-definite}}

\label{sec:Supplementary-Materials-2}

We first state the following two lemmas.
\begin{lem}[{Eigenvalue inequality \cite[Corollary 4.3.12]{RACR:B12}}]
\label{lem:eigenvalue-inequality} For a positive definite symmetric
matrix $\mathbf{B}\in M_{N}$ and $c\geq0$, we have
\[
0<\lambda_{i}(\mathbf{B})\le\lambda_{i}(\mathbf{B}+c),\quad i=1,2,\dots,N
\]
with the equality for some $i$ if and only if $c=0$.
\end{lem}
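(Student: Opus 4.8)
The plan is to recognize the claimed chain of inequalities as the special case, for the rank‑preserving perturbation $c\mathbf{I}$ (here ``$\mathbf{B}+c$'' abbreviates $\mathbf{B}+c\mathbf{I}$, $c$ a nonnegative scalar), of the classical monotonicity of eigenvalues under a positive‑semidefinite additive perturbation, so one may in principle just invoke \cite[Corollary~4.3.12]{RACR:B12}. For a self‑contained argument the cleanest route is through the spectral theorem. First I would diagonalize: since $\mathbf{B}$ is real symmetric, write $\mathbf{B}=\mathbf{Q}\boldsymbol{\Lambda}\mathbf{Q}^{\text{T}}$ with $\mathbf{Q}$ orthogonal and $\boldsymbol{\Lambda}=\text{diag}(\lambda_1(\mathbf{B}),\dots,\lambda_N(\mathbf{B}))$, with the eigenvalues fixed once and for all in, say, nondecreasing order; positive definiteness of $\mathbf{B}$ then gives $\lambda_i(\mathbf{B})>0$ for every $i$, which already settles the leftmost inequality.

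Second, I would read off the spectrum of the perturbed matrix directly: since $c\mathbf{I}=\mathbf{Q}(c\mathbf{I})\mathbf{Q}^{\text{T}}$, we get $\mathbf{B}+c\mathbf{I}=\mathbf{Q}(\boldsymbol{\Lambda}+c\mathbf{I})\mathbf{Q}^{\text{T}}$, so its eigenvalues are exactly the numbers $\lambda_i(\mathbf{B})+c$, and adding the same constant to all of them preserves their order; hence $\lambda_i(\mathbf{B}+c\mathbf{I})=\lambda_i(\mathbf{B})+c$ for each $i$. Because $c\ge 0$, this yields $\lambda_i(\mathbf{B}+c\mathbf{I})=\lambda_i(\mathbf{B})+c\ge\lambda_i(\mathbf{B})$, the middle inequality. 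For the equality clause I would argue both directions from the same identity: if $\lambda_i(\mathbf{B}+c\mathbf{I})=\lambda_i(\mathbf{B})$ for some index $i$, then $\lambda_i(\mathbf{B})+c=\lambda_i(\mathbf{B})$ forces $c=0$; conversely, if $c=0$ then $\mathbf{B}+c\mathbf{I}=\mathbf{B}$ and equality holds for every $i$, in particular for some $i$.

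An alternative route that mirrors the cited reference replaces the diagonalization step by the Courant--Fischer min--max characterization of $\lambda_i$ together with the elementary identity $\mathbf{x}^{\text{T}}(\mathbf{B}+c\mathbf{I})\mathbf{x}/(\mathbf{x}^{\text{T}}\mathbf{x})=\mathbf{x}^{\text{T}}\mathbf{B}\mathbf{x}/(\mathbf{x}^{\text{T}}\mathbf{x})+c$: every Rayleigh quotient shifts by $c$, hence each min--max value shifts by $c$, and the rest of the argument is unchanged.

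There is no substantive obstacle here: once $\lambda_i(\mathbf{B}+c\mathbf{I})=\lambda_i(\mathbf{B})+c$ is established the lemma is a one‑line consequence. The only points needing care are (i) fixing a consistent labelling convention for the eigenvalues so that the index‑by‑index comparison in the statement is well posed, and (ii) noting that the strictness in the ``only if'' direction genuinely uses that the perturbation is $c\mathbf{I}$ with $c>0$ (i.e.\ positive \emph{definite}): for a general positive‑\emph{semidefinite} perturbation one can have $\lambda_i$ unchanged for some $i$ even when the perturbation is nonzero, so the specialization to $c\mathbf{I}$ is precisely what makes the equivalence with ``$c=0$'' hold.
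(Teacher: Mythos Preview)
Your proof is correct. The paper does not actually supply a proof of this lemma at all: it is stated with a direct citation to \cite[Corollary~4.3.12]{RACR:B12} and then used as a black box in the proof of Lemma~\ref{lem:trace-inverse-positive-definite}. Your diagonalization argument (or the equivalent Rayleigh-quotient shift) is the standard elementary justification and is more than sufficient; the only content beyond the citation is your careful handling of the equality case, where you correctly note that the ``only if'' direction relies on the perturbation being $c\mathbf{I}$ rather than a general positive-semidefinite matrix.
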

\begin{lem}[{Trace inequality \cite[Equation 2.4.2.1]{RACR:B12}}]
\label{lem:trace-inequalityr} For a positive definite symmetric
matrix $\mathbf{B}\in M_{N}$ and $c\geq0$, we have
\[
\text{tr}\left\{ \left(\mathbf{B}+c\right)^{-1}\right\} =\sum_{i=1}^{N}\lambda_{i}^{-1}(\mathbf{B}+c)\leq\sum_{i=1}^{N}\lambda_{i}^{-1}(\mathbf{B})=\text{tr}\left\{ \mathbf{B}^{-1}\right\} 
\]
with the equality if and only if $c=0$.
\end{lem}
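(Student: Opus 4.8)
The plan is to obtain Lemma~\ref{lem:trace-inequalityr} as an immediate corollary of the eigenvalue monotonicity already recorded in Lemma~\ref{lem:eigenvalue-inequality}: the substance is a reciprocation-and-summation argument, and the only thing needing care is the treatment of the equality case.

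First I would dispatch the two outer equalities in the displayed chain, which are standard linear algebra. Since $\mathbf{B}$ is positive definite and $c\ge 0$, the matrix $\mathbf{B}+c$ is again positive definite and hence invertible; its inverse is positive definite with eigenvalues equal to the reciprocals $\lambda_i^{-1}(\mathbf{B}+c)$ of those of $\mathbf{B}+c$, and similarly for $\mathbf{B}^{-1}$. Combining this with the fact that the trace of a symmetric matrix equals the sum of its eigenvalues yields $\text{tr}\{(\mathbf{B}+c)^{-1}\}=\sum_{i=1}^{N}\lambda_i^{-1}(\mathbf{B}+c)$ and $\text{tr}\{\mathbf{B}^{-1}\}=\sum_{i=1}^{N}\lambda_i^{-1}(\mathbf{B})$, so all the content of the lemma reduces to the middle inequality $\sum_{i}\lambda_i^{-1}(\mathbf{B}+c)\le\sum_{i}\lambda_i^{-1}(\mathbf{B})$.

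Next I would fix the convention that the eigenvalues of $\mathbf{B}$ and of $\mathbf{B}+c$ are each indexed in increasing order $\lambda_1\le\lambda_2\le\cdots\le\lambda_N$. Lemma~\ref{lem:eigenvalue-inequality} then gives $0<\lambda_i(\mathbf{B})\le\lambda_i(\mathbf{B}+c)$ for every $i$. Since $t\mapsto t^{-1}$ is strictly decreasing on $(0,\infty)$, this inverts to $\lambda_i^{-1}(\mathbf{B}+c)\le\lambda_i^{-1}(\mathbf{B})$ for every $i$, and summing over $i=1,\dots,N$ gives $\text{tr}\{(\mathbf{B}+c)^{-1}\}\le\text{tr}\{\mathbf{B}^{-1}\}$, the claimed inequality.

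For the equality clause, I would use that a sum of termwise inequalities is an equality if and only if each term is one: $\text{tr}\{(\mathbf{B}+c)^{-1}\}=\text{tr}\{\mathbf{B}^{-1}\}$ forces $\lambda_i(\mathbf{B})=\lambda_i(\mathbf{B}+c)$ for all $i$, hence for some $i$, which by the equality clause of Lemma~\ref{lem:eigenvalue-inequality} holds precisely when $c=0$; the converse is trivial. There is no genuine obstacle here — the closest thing is the bookkeeping of matching the eigenvalue orderings of $\mathbf{B}$ and $\mathbf{B}+c$ so that the Weyl-type bound of Lemma~\ref{lem:eigenvalue-inequality} aligns correctly with the reciprocal sums, and correctly exploiting the ``for some $i$'' phrasing of that lemma to conclude $c=0$.
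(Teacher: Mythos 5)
Your derivation is correct, but there is nothing in the paper to compare it against line by line: the paper does not prove this lemma at all, it states it as a cited textbook fact from \cite{RACR:B12} (alongside Lemma~\ref{lem:eigenvalue-inequality}) and then invokes both in the proof of Lemma~\ref{lem:trace-inverse-positive-definite}. What you have supplied is the short argument the citation stands in for, and it is sound: $\mathbf{B}+c$ is positive definite, so $\text{tr}\{(\mathbf{B}+c)^{-1}\}=\sum_{i}\lambda_{i}^{-1}(\mathbf{B}+c)$ and likewise for $\mathbf{B}$; with both spectra ordered the same way, Lemma~\ref{lem:eigenvalue-inequality} gives $0<\lambda_{i}(\mathbf{B})\leq\lambda_{i}(\mathbf{B}+c)$, reciprocation on $(0,\infty)$ reverses each inequality, and summing yields the claim. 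Your equality analysis is also handled correctly: equality of the sums forces termwise equality (each term already obeys a one-sided bound), hence equality for some $i$, and the equality clause of Lemma~\ref{lem:eigenvalue-inequality} then forces $c=0$, the converse being trivial. One caveat you inherit from the paper rather than introduce yourself: the argument is only as strong as Lemma~\ref{lem:eigenvalue-inequality} as stated, and its ``equality for some $i$ iff $c=0$'' clause holds verbatim only under the scalar-shift reading $\mathbf{B}+c\mathbf{I}$ (where $\lambda_{i}(\mathbf{B}+c)=\lambda_{i}(\mathbf{B})+c$ makes everything immediate); if $c$ were instead interpreted as a nonzero positive semidefinite matrix perturbation, as the way these lemmas are deployed in the proof of Lemma~\ref{lem:trace-inverse-positive-definite} suggests, Weyl equality can occur for some $i$, and the equality case of the present lemma would instead be argued from $(\mathbf{B}+c)^{-1}\preceq\mathbf{B}^{-1}$ with strict trace decrease unless the perturbation vanishes. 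Since you use the auxiliary lemma exactly as the paper records it, this is a remark on the paper's phrasing, not a gap in your proof.
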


Denote $\mathbf{A}_{N-i}\in M_{N-i}$ as a submatrix of $\mathbf{A}$
by removing the first $i$ rows and the first $i$ columns. Denote
$\mathbf{a}_{-i}$ as a column vector that contains the lower off-diagonal
elements of the $i$th column of $\mathbf{A}$. Thus,
\[
\mathbf{A}_{N}=\left[\begin{array}{cc}
a_{11} & \mathbf{a}_{-1}\\
\mathbf{a}_{-1}^{\text{T}} & \mathbf{A}_{N-1}
\end{array}\right],\,\mathbf{A}_{N-1}=\left[\begin{array}{cc}
a_{22} & \mathbf{a}_{-2}\\
\mathbf{a}_{-2}^{\text{T}} & \mathbf{A}_{N-2}
\end{array}\right].
\]

Using the block matrix inverse, we have 
\begin{align*}
 & \mathbf{A}_{N}^{-1}\\
 & =\left[\begin{array}{cc}
(a_{11}-\mathbf{a}_{-1}^{\text{T}}\mathbf{A}_{N-1}\mathbf{a}_{-1})^{-1} & *\\*
* & (\mathbf{A}_{N-1}-\mathbf{a}_{-1}^{\text{T}}a_{11}\mathbf{a}_{-1})^{-1}
\end{array}\right].
\end{align*}

Since $\mathbf{A}$ is positive definite, so are $a_{11}$ and $\mathbf{A}_{N-1}$.
Thus, $\mathbf{a}_{-1}^{\text{T}}\mathbf{A}_{N-1}\mathbf{a}_{-i}\geq0$
and $\mathbf{a}_{-1}^{\text{T}}a_{11}\mathbf{a}_{-1}\geq0$ for all
$\mathbf{a}_{-1}$ with equalities both at $\mathbf{a}_{-1}=\mathbf{0}$.

Thus, let $\mathbf{B}=\mathbf{A}_{N-1}-\mathbf{a}_{-1}^{\text{T}}a_{11}\mathbf{a}_{-1}$,
and $c=\mathbf{a}_{-1}^{\text{T}}a_{11}\mathbf{a}_{-1}$ where $\mathbf{B}$
must be positive definite because $\mathbf{A}_{N}$ is positive definite,
so as $\mathbf{A}_{N}^{-1}$ and $\mathbf{A}_{N-1}-\mathbf{a}_{-1}^{\text{T}}a_{11}\mathbf{a}_{-1}$.
Additionally, since $a_{11}>0$, $c\geq0$ with the equality if and
only if $\mathbf{a}_{-1}=\mathbf{0}$. Given Lemma~\ref{lem:eigenvalue-inequality}
and Lemma~\ref{lem:trace-inequalityr}, we have $\text{tr}\left\{ \mathbf{B}^{-1}\right\} \geq\text{tr}\{\left(\mathbf{B}+c\right)^{-1}\}$
with equality when $\mathbf{a}_{-1}=\mathbf{0}$. As a result, 
\begin{align*}
\text{tr}\{\mathbf{A}_{N}^{-1}\} & \geq a_{11}^{-1}+\text{tr}\{\left(\mathbf{A}_{N-1}-\mathbf{a}_{-1}^{\text{T}}a_{11}\mathbf{a}_{-1}\right)^{-1}\}\\
 & \geq a_{11}^{-1}+\text{tr}\{\mathbf{A}_{N-1}^{-1}\}
\end{align*}
with the equality when $\mathbf{a}_{-1}=\mathbf{0}$.

Similarly, we have
\begin{align*}
\text{tr}\{\mathbf{A}_{N-1}^{-1}\} & \geq a_{22}^{-1}+\text{tr}\{\mathbf{A}_{N-2}^{-1}\},\dots\\
 & \vdots\\
\text{tr}\{\mathbf{A}_{2}^{-1}\} & \geq a_{N-1,N-1}^{-1}+\text{tr}\{a_{N,N}^{-1}\}.
\end{align*}

Combining the above inequalities, one can obtain that $\text{tr}\{\mathbf{A}^{-1}\}\geq\sum_{i=1}^{N}a_{ii}^{-1}$
with the equality achieved for $a_{ij}=0$, $\forall i\neq j$.

\end{document}